\newtheorem{theorem}{Theorem}
\newtheorem{definition}{Definition}
\newtheorem{proposition}{Proposition} 
\newtheorem{assumption}{Assumption}  
\newtheorem{lemma}{Lemma}
\newcounter{examplecounter}
\def \K {\mathcal{K}}
\def \Y {\mathcal{Y}}
\def \R {\mathbb{R}}
\def \B {\mathbb{B}} 
\def \F {\mathbb{F}}  
\def \P {\mathbb{P}}   
\def \GG {\mathbb{G}} 
\def \R {\mathbb{R}}
\def \Z {\mathbb{B}}
\def \E {\mathbb{E}}
\def\1{1\!{\rm l}}
\newcommand\Var {\ensuremath{\mathrm{Var}}}
\newcommand\Cov {\ensuremath{\mathrm{Cov}}}
\newcommand{\indep}{\perp \!\!\! \perp} 
\title{\Large
  \textbf{ 
    Regression Adjustment for Estimating Distributional Treatment Effects in Randomized Controlled Trials}\footnote{
    We would like to express our appreciation to Editor Yuya Sasaki, the Associate Editor, and two anonymous referees for their valuable comments that greatly improved the paper.
      We are also grateful to Yu-Chin Hsu, Chu-An Liu, Yoshimasa Uematsu, Tomu Hirata, Shuting Wu and conference and seminar participants in Kansai Econometric Research Group, 
      Keio Economic Research Workshop, and Academia Sinica for their helpful comments and discussions. 
      Oka gratefully acknowledges financial support by the Japan Society of the Promotion of Science under
      Grant-in-Aid for Scientific Research (C)~24K04821.  
    }
    \vspace{1.0cm}
}
\author{ 
 Tatsushi Oka\footnote{
    Department of Economics, Keio University 
    (\href{mailto:tatsushi.oka@keio.jp}{tatsushi.oka@keio.jp})}  
  \hspace{-0.2cm} \and
  Shota Yasui\footnote{ 
    CyberAgent, Inc.
    (\href{mailto:yasui_shota@cyberagent.co.jp}{yasui\_shota@cyberagent.co.jp})
  } 
  \hspace{-0.2cm} \and 
  Yuta Hayakawa\footnote{ 
    CyberAgent, Inc.
    (\href{mailto:hayakawa_yuta@cyberagent.co.jp}{hayakawa\_yuta@cyberagent.co.jp})
  } 
  \hspace{-0.2cm} \and 
  Undral Byambadalai\footnote{
    CyberAgent, Inc.
    (\href{mailto:undral_byambadalai@cyberagent.co.jp}{undral\_byambadalai@cyberagent.co.jp})} 
  \vspace{0.5cm}
}
\begin{document}

\maketitle
\thispagestyle{empty}

\begin{abstract}
In this paper, we address the issue of estimating and inferring distributional treatment effects in randomized experiments. The distributional treatment effect provides a more comprehensive understanding of treatment heterogeneity compared to average treatment effects. We propose a regression adjustment method that utilizes distributional regression and pre-treatment information, establishing theoretical efficiency gains without imposing restrictive distributional assumptions. We develop a practical inferential framework and demonstrate its advantages through extensive simulations. Analyzing water conservation policies, our method reveals that behavioral nudges systematically shift consumption from high to moderate levels. Examining health insurance coverage, we show the treatment reduces the probability of zero doctor visits by 6.6 percentage points while increasing the likelihood of 3-6 visits. In both applications, our regression adjustment method substantially improves precision and identifies treatment effects that were statistically insignificant under conventional approaches.
\end{abstract}
 
\vspace{0.5cm}
\noindent%
{\it Keywords:}
Randomized experiment, A/B testing, Distributional treatment effect, Heterogeneous treatment effect, Regression adjustment, Distributional regression
\vspace{0.2cm} \\
\noindent 
{\it JEL Codes:}
C14, C32, C53, E17, E44

\clearpage
\setcounter{page}{1}
\setstretch{1.5}
\section{Introduction}

Since the foundational works of \cite{fisher1925statistical, fisher1935design} and \cite{Neyman1923}, randomized controlled trials (RCTs) have become essential in evaluating the effectiveness of treatments or interventions. Over the last few decades, RCTs, also known as A/B testing, have emerged as key tools for causal inference across a wide range of scientific and engineering disciplines. These fields include medicine \citep{rubin1997estimating}, economics \citep{duflo2007using, athey2017econometrics}, political science \citep{imai2005get, horiuchi2007designing}, and sociology \citep{baldassarri2017field}, among others. For a more thorough exploration, refer to the modern textbook treatments in \cite{imbens2015causal}.

In RCTs, researchers often encounter the challenge of low detection power, where the variance of the treatment effect estimator is large relative to the actual treatment effect, leading to ambiguous results \citep[e.g.,][]{lewis2015unfavorable}. Contrary to the common belief that large sample sizes alone can resolve this issue, research indicates this is not always the case \citep{Deng2013}. Increasing sample size or experiment duration is costly and not always feasible. Therefore, using estimators with smaller variance can improve precision and lead to more reliable estimates. Additionally, examining distributional treatment effects, not just average treatment effects, is of practical interest to researchers \citep{bitler2006mean}.

In this paper, we introduce a regression adjustment method to improve the precision of estimating the distributional treatment effect (DTE) in RCTs. The proposed method uses pre-treatment information through distributional regression (DR). Our approach addresses the challenge of low detection power, employing a semiparametric approach in that the underlying distributions are left unspecified. This method accommodates a wide range of outcome distributions, including discrete, continuous, and mixed variables, and is useful for documenting heterogeneous treatment effects.

We theoretically demonstrate that regression adjustment can reduce the variance of distribution estimators and enhance the precision of the DTE estimator. 
We establish the uniform asymptotic properties of the DTE estimator.
We also introduce a practical resampling method 
for statistical inference
and establish its theoretical validity. 
Through Monte Carlo simulations, we evaluate our method and show that regression adjustment significantly improves the precision of the DTE estimator and outperforms existing methods. 
Additionally, we apply our method in two empirical studies:
a randomized experiment on the impact of nudges on water usage
and 
the Oregon Health Insurance Experiment.
Our results highlight the importance of distributional treatment estimation with regression adjustment.

Our paper provides several important contributions to the existing literature.
First, our paper contributes to a growing body of literature on heterogeneous treatment effects in econometrics and statistics \citep[e.g.,][]{Athey2016, Wager2018}.
The quantile treatment effect (QTE) was first introduced by \cite{doksum1974empirical} and \cite{lehmann1975nonparametrics} for estimating treatment effects due to unobserved heterogeneity. Since then, various estimation and inference methods for the distributional and quantile treatment effect have been developed and applied, including
\cite{heckman1997making},
\cite{abadie2002instrumental}, \cite{athey2006identification}, \cite{bitler2006mean},
\cite{chernozhukov2005iv},
\cite{djebbari2008heterogeneous},
\cite{donald2014estimation},
\cite{callaway2018quantile}, \cite{callaway2019quantile}, 
\cite{chernozhukov2019generic}, and \cite{kallus2024localized}, among others. 
\cite{jiang2023regression} study the estimation of regression-adjusted QTE under covariate adaptive randomization schemes. To the best of our knowledge, however, techniques for estimating DTE has not been explored in the context of regression-adjustment method.\footnote{
To estimate heterogeneous treatment effects, researchers commonly use two mututally not exclusive approaches: conditional average treatment effect (CATE) and distributional treatment effect (DTE). CATE estimates the average treatment effect for subgroups based on observed variables \citep{Athey2016, Imai2013, Metalearner2019, Nie2020, Wager2018}. DTE, on the other hand, compares outcome distributions between treatment and control groups across the entire spectrum, and thus it can capture effects that CATE might miss, particularly those stemming from unobserved factors. As such, DTE remains valuable even after accounting for observed variables in CATE analyses.
}\textsuperscript{,}\footnote{\cite{jiang2023regression} and our paper complement each other. While \cite{jiang2023regression} considers our experimental scheme as a special case and employs doubly robust moment conditions for quantile regression, our proposed method applies to non-continuous outcome variables and also employs doubly robust moment condition with distribution regression framework.
}

Next, the literature has extensively investigated the use of pretreatment covariates to estimate ATE
with the regression adjustment 
\citep[e.g.][]{fisher1925statistical, cochran1957analysis, cox1982biometrics, frison1992repeated}.
\cite{freedman2008regression,freedman2008regressionA}
and 
\cite{lin2013agnostic}
have noted the potential issue of the adjustment due to finite-sample bias and misspecification. 
Recent works by 
\cite{Yang2001}, \cite{Tsiatis2008} and \cite{Ansel2018}
apply linear regression adjustment, while 
\cite{rosenblum2010simple}
and 
\cite{bartlett2018covariate}
explore nonlinear regression adjustment. 
\cite{list2022using} use machine learning methods for regression adjustment. 
\cite{imbens2015causal}
and 
\cite{negi2021revisiting, negi2020robust}
consider regression adjustment for both linear and nonlinear regression models. 
These existing works focus on average treatment effect (ATE), 
whereas our research broadens the scope of regression adjustment to estimate the impact of treatment on the entire distribution in randomized experiments.

Thirdly, 
for our regression adjustment,
we utilize the DR approach, which can be considered a semiparametric approach for estimating conditional distributions,
in that both marginal and conditional distributions are left unspecified.
The semiparametric property is essential in practice since we rarely know
the true underlying distributions in many applications. 
\cite{williams1972analysis}
introduce DR to analyze ordered categorical outcomes by using multiple binary regressions.
\cite{foresi1995conditional}
extend the approach to characterize a continuous, conditional distribution
and 
\cite{hall1999methods} propose a local version of distributional regression.
\cite{chernozhukov2013inference} establish its uniform validity for estimating the entire conditional distribution when an asymptotical continuum of binary regressions is used. 
See also 
\cite{chernozhukov2020network},
\cite{chernozhukov2023distribution},
\cite{wang2023bivariate, wang2023distributional}
for later developments.

The rest of the paper is organized as follows.
Section \ref{sec:setup} introduces the potential outcomes framework and defines the treatment effect parameters of interest within our study.  Section \ref{sec:theory} presents theoretical results. In Section \ref{sec:simulation}, we provide finite-sample properties through  Monte Carlo simulations. Section \ref{sec:empirical} offers two empirical case studies. Finally, we conclude in Section \ref{sec:con}. 
All proofs and additional simulation results can be found in Supplemental Material.

\section{Treatment Effect and Estimation Method} \label{sec:setup}

This section first introduces the setup and explain a model
for regression adjustment. 
In what follows, 
we use $\E[\cdot]$ and $\Var(\cdot)$
as the expectation and variance operator, respectively. 
All small-$o$ notation and asymptotic results, including $o_p(1)$ and $O_p(1)$, in the paper are relative to sample size $n$.
Let
$\ell^{\infty}(T)$ be the collection of all bounded real-valued functions
defined on an arbitrary index set $T$.
We use $\rightsquigarrow$ to denote convergence in law. 
That is, given $X_{n} \in \ell^{\infty}(T)$
or $X_{n}(t)$ for $t \in T$,
$X_{n} \rightsquigarrow X$ in $\ell^{\infty}(T)$
if $X_{n}$ converge in law to $X$ in $\ell^{\infty}(T)$.

\subsection{Setup and Treatment Effects}

Consider a randomized controlled trial with $K$ treatments, where
each individual is randomly assigned to one of the treatments with pre-specified probabilities
$\{\pi_{k}\}_{k=1}^{K}$ with
$\sum_{k=1}^{K}\pi_{k} =1 $
and 
$ \pi_{k}> 0$
for all $k \in \mathcal{K}:=  \{1, \dots, K \}$.
We consider the potential outcome framework \citep{Neyman1923, rubin1974estimating}.
Let $Y$ denote the (observed) outcome
of interest
with its support $\mathcal{Y} \subset \mathbb{R}$
and 
$Y(k) $ the potential outcome under treatment $k \in \mathcal{K}$.
Then the following relation holds:
$
Y = \sum_{k=1}^{K}W_{k} Y(k) 
$,
where
$W_{k}$ is set to 1 if an individual is assigned to treatment $k \in \mathcal{K}$ and 0 otherwise
and
satisfies
$\Pr(W_{k}= 1) = \pi_{k}$
and 
$\sum_{k=1}^{K} W_{k} = 1$.

Suppose we observe a random sample
$\{(\bm{W}_{i}, \bm{X}_{i}, Y_{i})\}_{i=1}^{n}$
drawn independently from the distribution of $(\bm{W}, \bm{X}, Y)$ with a sample size of $n$.
Here,
$\bm{W} := (W_{1}, \dots, W_{K})^{\top}$ denotes the vector of treatment assignment indicators for $K$ treatments
with its support $\mathcal{W}:= \{0,1\}^{K}$
and 
$\bm{X}$ 
a $p \times 1$ vector of pre-treatment covariates
with its support $\mathcal{X} \subset \mathbb{R}^{p}$.  
We assume that $\bm{X}$ realizes before the randomized controlled trials
and is
independently and identically distributed
even across treatment statuses.
Given 
$\bm{W}_{i}= (W_{1, i}, \dots, W_{K, i})^{\top}$,
the number of observations assigned to treatment $k \in \mathcal{K}$
is defined as $n_{k} := \sum_{i=1}^{n} W_{k,i}$
and the empirical probability of treatment assignment is given by 
$\hat{\pi}_{k}:=n_{k}/n$ for $k= 1, \dots, K$.

One prevalent approach to measure the treatment effect is 
the average treatment effect (ATE), defined as
$
ATE_{k,k'} := \E[Y(k)] - \E[Y(k')] 
$
for $k, k' \in \mathcal{K}$.
ATE is simple to  estimate and interpret,
while it is incapable of estimating the treatment effect on the entire distribution. 
To consider the distributional features,
let
$F_{Y(k)}(y)$ be the distribution function
of $Y(k)$ for treatment status $k \in \mathcal{K}$
and 
$\gamma:=
(
\F_{Y(1)},
\dots, 
\F_{Y(K)}
)^{\top}
\in
\Gamma
\subseteq
\ell^{\infty}(\mathcal{Y})^{K}
$
a vector of the potential outcome distributions.
We define the distributional treatment effect (DTE)
between treatments $k, k' \in \mathcal{K}$ as 
\begin{eqnarray*}
  \Delta^{DTE}_{k,k'}
  :=
  \Psi_{k, k'}^{DTE} (\gamma),
\end{eqnarray*}
where the functional  
$\Psi_{k, k'}^{DTE}: \Gamma \to \ell^{\infty}(\mathcal{Y})$
is defined as 
$
\Psi_{k, k'}^{DTE}(\gamma)
=
F_{Y(k)}
- 
F_{Y(k')}$.
One of its advantages is that 
DTE is well-defined for any type of outcome, including discrete variables.\footnote{
Our approach measures the distributional effect as the difference between the distributions of potential outcomes under distinct treatments. While this paper focuses on DTE, we note that under the rank invariance assumption, our results are also informative about the distribution of individual treatment effects (ITE). Without additional conditions or rank invariance, researchers may consider adopting a partial-identification approach for estimating the distribution of ITE. See \cite{heckman1997making} among others.   }

Moreover,
the distributional information
is an essential building block
for other types of treatment
effect parameters. 
For example,
fixing some constant $h>0$,
we can write the probability 
$\Pr\{y < Y(k) \le y+h \} = F_{Y(k)}(y+h) - F_{Y(k)}(y)$,
thereby defining 
the probability treatment effect (PTE) as 
\begin{align*}
  \Delta^{PTE}_{k,k', h}
  :=
  \Psi_{k, k', h}^{PTE}(\gamma),
\end{align*}
where
the functional 
$\Psi_{k, k', h}^{PTE}: \Gamma \to \ell^{\infty}(\mathcal{Y})$
is defined as 
\begin{eqnarray*}
  \Psi_{k, k', h}^{PTE}(\gamma)(y)
  &=&
      (
      F_{Y(k)}
      -
      F_{Y(k')}
      )(y+h)
      -
      (
      F_{Y(k)}
      - 
      F_{Y(k')}
      )(y).
\end{eqnarray*}
PTE measures changes in the probability
that the outcome variable
falls in interval $(y, y+h]$
and 
can apply for
any type of outcome variables.
For an integer outcomes and $h=1$,
$\Delta_{k, k',1}^{PTE}(y)$
quantifies the probability
change at location $y$,
or 
$\Pr\{Y(k) = y\} - \Pr\{Y(k') = y\}$.
Alternatively, for continuous outcome variables, we can consider the quantile treatment effect (QTE), which is defined through a simple inversion of the distribution function. For $u \in (0,1)$, let the quantile function of $Y(k)$ be defined as the map $u \mapsto F_{Y(k)}^{-1}(u)$.
The QTE between treatments $k, k' \in \mathcal{K}$  
 is defined as 
$
  \Delta^{QTE}_{k, k'} :=  \Psi_{k, k'}^{QTE}(\gamma)(u)
$
where 
$
  \Psi_{k, k'}^{QTE}(\gamma)(u) :=  F_{Y(k)}^{-1}(u) - F_{Y(k')}^{-1}(u).
$
Since standard analysis based on quantile functions presumes the existence of the outcome density, QTE is not readily applicable in the presence of discreteness in the outcome distributions.

\subsection{Distributional Regression}

In this paper, we extend the idea of a regression-adjustment approach
for estimating the DTE and PTE, 
using the distributional regression (DR) framework
with pre-treatment covariates. 
The true conditional distribution of $Y(k)$ given $\bm{X}$
can be written as 
\begin{eqnarray*}
  F_{Y(k)|\bm{X}}(y|\bm{X}) =
  \E
  [\1_{ \{Y(k) \le y\} }
  |\bm{X}
  ],    
\end{eqnarray*}
where $\1_{\{\cdot\} }$
represents the indicator function taking the value of 1 if the condition within
$\{\cdot\}$ is satisfied, and 0 otherwise.

In many practical applications, the true conditional distribution $F_{Y(k)|\bm{X}}(y|\bm{X})$ remains unknown. It is crucial not to impose overly restrictive global parametric constraints. To address this issue, we employ the distribution regression framework, wherein a parametric linear-index model is used to describe the binary outcomes $\1_{ \{Y(k) \le y\} }$ for each $y \in \mathcal{Y}$. 
More specifically, let $\Lambda: \mathbb{R} \to [0,1]$ be a known inverse link function. For each $y \in \mathcal{Y}$, 
we define the following conditional distribution based on a linear index model:
\begin{eqnarray} 
  \label{eq:model}
  G_{Y(k)| \bm{X}}(y| \bm{X}) 
  :=
  \Lambda\big(
  T(\bm{X})^{\top}\beta_{k}(y)
  \big),
\end{eqnarray}
where 
$T: \mathcal{X} \to \R^{d}$
is some transformation, such as polynomial or pair-wise interaction
of regressors, specified by researchers,
and 
$\beta_{k}(y) {\in} \mathbb{R}^{d}$
are  unknown parameters.
Setting $\Lambda(\cdot)$ as either the normal or logistic distribution function
yeilds probit or logit models, respectively,
for binary outcome $\1_{ \{Y(k) \le y\} }$.

It is worth noting that 
the unknown parameters $\beta_{k}(y)$
are specific to point $y \in \mathcal{Y}$,
and can be considered 
as pseudo-parameters to characterize the
true conditional distribution at each point $y$.
These pseudo-parameters 
capture local information of the distributions of interest, and can be estimated at the parametric rate  under certain regularity conditions, as detailed later.
Thus,
we view the conditional distribution model 
$G_{Y(k)| \bm{X}}(\cdot| \bm{X})$
in (\ref{eq:model}) as an approximation model for the true conditional distribution $F_{Y(k)|\bm{X}}(\cdot|\bm{X})$, and our subsequent analysis accommodates potential model misspecifications.

We estimate the DR model in (\ref{eq:model}) using the quasi-maximum likelihood framework \citep[see][]{White1982}, separately for observations exposed to different treatments.
For each $(k, y) \in \mathcal{K}\times\mathcal{Y}$,
the estimator for $\beta_{k}(y)$
is defined as the maximizer 
of the log-likelihood function:
\begin{eqnarray}
  \label{eq:mle} 
  \widehat{\beta}_{k}(y)
  =
  \arg
  \max_{\beta \in \mathcal{B}}
  \widehat{\ell}_{k}(\beta; y),  
\end{eqnarray}
where
the log-likelihood function
for treatment $k$
is denoted by 
$\widehat{\ell}_{k}(\beta; y)  
:=
n_{k}^{-1}
\sum_{i=1}^{n}
W_{k,i}
\cdot
\ell_{i}(\beta;y)$
with
$\ell_{i} (\beta:y) =  \1_{  \{Y_{i} \le y \} } 
\log \Lambda \big(
T(\bm{X}_i)^{\top} \beta \big )
+ 
(1 - \1_{ \{Y_{i} \le y \} } ) \log  
\big \{ 1 -  \Lambda\big(
T(\bm{X}_i)^{\top} \beta 
\big) 
\big \}$
and
$\mathcal{B} \subset \mathbb{R}^{d}$
is the parameter space.
Given the  estimator $\widehat{\beta}_{k}(y)$,
we define the conditional distribution function estimator as 
\begin{eqnarray} 
  \label{eq:dr-est}
  \widehat{G}_{Y(k)|\bm{X}}(y|\bm{x}) 
  :=
  \Lambda\big(
  T(\bm{x})^{\top}\widehat{\beta}_{k}(y)
  \big).
\end{eqnarray}
We can estimate the conditional distribution above for each value of $y \in \mathcal{Y}$ when the outcome variable $Y$ takes finite discrete values, and we select a sufficiently large set of discrete points
$\{y_j \in \mathcal{Y} \}_{j=1}^{J}$
for estimation when the outcome variable is a continuous random variable.\footnote{
One important property
of the conditional distribution function 
is non-decreasing by definition,
while the estimator
$\widehat{G}_{Y(k)|\bm{X}}(y|\bm{x})$
does not necessarily
satisfy monotonicity in finite samples due to estimation errors. 
We can monotonize the conditional distribution estimators using the rearrangement method, for example, as proposed by \cite{chernozhukov2009improving}.
}

\subsection{Regression-Adjusted Treatment Effect Estimator}
By randomization, the potential outcome distributions $F_{Y(k)}$ for $k\in\mathcal{K}$ are identified as the outcome distributions under treatment $k$. Hence, it can be estimated from the data $\{(\bm{W}_{i}, \bm{X}_{i}, Y_{i})\}_{i=1}^{n}$.
The simple DTE and PTE  estimators are
based on the empirical distribution functions
$\widehat{\F}_{Y(k)}^{simple}(y)
:=
n_{k}^{-1}
\sum_{i=1}^{n}
W_{k,i}
\cdot 
\1_{ \{Y_{i} \le y \} }
$
for $k \in \mathcal{K}$.
We propose a new estimator
based on the distributional regression adjustment 
with pre-treatment covariates. 
Define the discrete probability measures
$
\widehat{\P}_{\bm{X}} :=   
n^{-1}\sum_{i=1}^{n} 
\delta_{\bm{X}_i}
$
and 
$
\widehat{\P}_{\bm{X}}^{(k)} :=   
n_{k}^{-1}\sum_{i=1}^{n}
W_{k,i}  \cdot 
\delta_{\bm{X}_i} 
$
for covariates $\bm{X}$
and for $k \in \K$,
where $\delta_{\bm{x}}$ is the measure that assigns mass 1 at $\bm{x} \in \mathcal{X}$. 
For an arbitrary real-valued function $f: \mathcal{X} \to \R$, we denote by 
\begin{align*}
    \widehat{\P}_{\bm{X}} f 
    = \frac{1}{n} \sum_{i=1}^{n}f(\bm{X}_{i}) 
    \ \ \ \ \mathrm{and} \ \ \ \
    \widehat{\P}_{\bm{X}}^{(k)} f 
    = \frac{1}{n_{k}} \sum_{i=1}^{n} W_{k,i} \cdot f(\bm{X}_{i}) . 
\end{align*}
Then,
for treatment $k \in \mathcal{K}$,
we define the regression-adjusted distribution function
as,
\begin{align}
  \label{eq:RA}
  \widehat{\F}_{Y(k)}
  := 
  \widehat{\P}_{\bm{X}}
  \widehat{G}_{Y(k)|\bm{X}}
  \ \ \ 
  \mathrm{in} \  \ell^{\infty}(\mathcal{Y}).
\end{align}
The regression-adjusted estimator is the sample average of conditional distribution functions over pre-treatment covariates utilizing all observations, where these conditional distributions are estimated using only observations pertaining to specific treatments.

Letting
$\widehat{\gamma}
:=
(
\widehat{\F}_{Y(1)},
\dots, 
\widehat{\F}_{Y(K)}
)^{\top}
$, 
we define the regression-adjusted DTE and PTE estimators:
\begin{eqnarray}
  \label{eq:ra-est}
  \widehat{\Delta}^{DTE}_{k,k'}:=
  \Psi_{k, k'}^{DTE} (\, \widehat{\gamma} \, )
  \ \ \ \mathrm{and}  \ \ \
  \widehat{\Delta}^{PTE}_{k,k',h}:=       
  \Psi_{k, k', h}^{PTE} (\, \widehat{\gamma} \,).
\end{eqnarray}
Similarly, the QTE estimator is defined as 
$  \widehat{\Delta}^{QTE}_{k,k'}:=
  \Psi_{k, k'}^{QTE} (\, \widehat{\gamma} \, )
$.

 The linear index model in (\ref{eq:dr-est}) can be viewed as a special case of generalized linear models (GLMs). Within the GLM framework, using the canonical link function yields unbiased estimation of the unconditional mean in finite samples (see Section~\ref{sec:glm-canonical} for more details and \citealp{mccullagh1989binary}). Specifically, the canonical link function must satisfy
\begin{align}
\label{eq:canonical}
  \frac{1}{  n_{k}}
  \sum_{i=1}^{n}
  W_{k, i} \cdot
  \big (
  \1_{ \{Y_{i} \le y \} } 
  -
  \widehat{G}_{Y(k)|\bm{X}}(y|\bm{X}_{i}) 
  \big ) = 0. 
\end{align}
At the population level, this relation can be expressed as:
$
\E
\big[
\1_{ \{Y(k) \le y \} } 
-
G_{Y(k)|\bm{X}}(y|\bm{X})
\big ] = 0. 
$
This property is important 
for the regression adjustment
to guarantee that 
the regression adjustment does not induce
bias to improve the precision of the treatment
effect estimator.
\cite{negi2021revisiting} pointed this out 
for a nonlinear regression adjustment method 
and derive the asymptotic property  
when the true distribution is known. 
In what follows, we focus
the GLMs with the canonical link functions
for binary outcome, which includes
OLS and logit, while we do not assume 
any global parametric assumptions
and allow for model mis-specification.

\section{Theoretical Results} \label{sec:theory}

In this section, we first derive the oracle efficiency of the proposed estimator. 
We subsequently present asymptotic properties and
a resampling method.

\subsection{Oracle Estimator and Efficiency}

In this section, we investigate
possible efficiency gain
from the regression adjustment,
when the population counterpart
related to the adjustment is available.  
This analysis will
allow us to establish
the efficiency property.

Leveraging the finite-sample unbiasedness property in (\ref{eq:canonical}), we can
rewrite (\ref{eq:RA}) as 
\begin{align*}
  \widehat{\F}_{Y(k)}
  := 
  \widehat{\F}_{Y(k)}^{simple}
  +
  \big (
  \widehat{\P}_{\bm{X}}
  -
  \widehat{\P}_{\bm{X}}^{(k)}
  \big ) 
  \widehat{G}_{Y(k)|\bm{X}},
\end{align*}
where
the second term of the right-hand side
involves the DR estimator
for the adjustment. 
In the above equations,
we meanwhile consider
the situation where 
the estimator $\widehat{G}_{Y(k)|\bm{X}}$
is replaced by the population model 
$G_{Y(k)|\bm{X}}(y|\bm{X})$
in (\ref{eq:model}). 
Then, define the oracle distribution function with
the regression adjustment 
\begin{align*}
    \widetilde{\F}_{Y(k)}
    &:= 
    \widehat{\F}_{Y(k)}^{simple}
    +
      \big (
      \widehat{\P}_{\bm{X}}
      -
      \widehat{\P}_{\bm{X}}^{(k)}
      \big ) 
        G_{Y(k)|\bm{X}} .
\end{align*}
Theorem \ref{theorem:distribution} below presents the efficiency gain resulting from the application of our regression adjustment method, given the oracle estimator above. Although this oracle estimator is infeasible,  
the subsequent section (specifically, in the proof of Theorem \ref{theorem:ate-ra-fclt}) establishes an asymptotic equivalence between the feasible and oracle estimators.

For Theorem \ref{theorem:distribution}, we make the following assumptions regarding the experimental setup and sampling scheme:

\vspace{0.3cm}
\begin{assumption} 
  \label{as:as1}  \ \\ 
  \noindent 
  (a)
  The observations 
  $\{(\bm{W}_{i}, \bm{X}_{i}, Y_{i})\in \{0,1\}^{K}{\times}\mathcal{X}{\times}\mathcal{Y}\}_{i=1}^{n}$
  are $n$ independent copies of the random variables
  $(\bm{W}, \bm{X}, Y)$. \\
  (b)
  Treatment assignment is independent of both potential outcomes and pre-treatment covariates:
$  \bm{W} \indep \big(Y(1), \dots, Y(K), \bm{X})$. 
The pre-specified treatment probabilities $\{\pi_{k}\}_{k=1}^{K}$ satisfy $\pi_{k}>0$ for every $k \in \mathcal{K}$ and $\sum_{k=1}^{K} \pi_{k}= 1$. \\
\end{assumption}
\vspace{0.01cm}

\noindent
Assumption \ref{as:as1} (a) requires random sampling and
Assumption \ref{as:as1} (b) is standard under a randomized controlled trial with multiple treatments.

\vspace{0.2cm}
\begin{theorem}
  \label{theorem:distribution}
  (a)
  Suppose that Assumption
  \ref{as:as1}  
  holds
  and that $\hat{\pi}_{k} = \pi_{k} + o(1)$ as $n\to \infty$ for every $k \in \mathcal{K}$.
  Then,
  for each $k \in \mathcal{K}$
  and
  for any $y \in \mathcal{Y}$,
  we have in $\ell^{\infty}(\mathcal{Y})$, 
  \begin{align*}
    \Var \big(
      \hat{\F}_{Y(k)}^{simple}
    \big)
    \ge 
    \Var \big(
    \widetilde{\F}_{Y(k)}
    \big)
    + o(n^{-1}),
  \end{align*}
  provided that 
  $
    \E
    [
    (
      \1_{ \{Y(k) \le \cdot\} }      
      -
      F_{Y(k)}
    )^2
    ]
    \ge  
    \E
    [
     (
      \1_{ \{Y(k) \le \cdot\} }      
      -
      G_{Y(k)|\bm{X}}
     )^2
     ]
  $. 

 \noindent
   (b) 
   Define 
   $\widehat{\gamma}^{simple}
    :=
    (
    \widehat{\F}_{Y(1)}^{simple},
    \dots, 
    \widehat{\F}_{Y(K)}^{simple}
    )^{\top}
    $ 
  and 
     $\widetilde{\gamma}
    :=
    (
    \widetilde{\F}_{Y(1)},
    \dots, 
    \widetilde{\F}_{Y(K)}
    )^{\top}
    $. 
    Additionally, we assume that the model specification for distributional regression in (\ref{eq:model}) is correct.
    Then,
        \begin{align*}
            \Var
            \big(
            \widehat{\gamma}^{simple}
            \big)
            \succeq
            \Var
            \big(
            \widetilde{\gamma}
            \big) + o(n^{-1}),
        \end{align*}
        where $\succeq$ denotes the positive semi-definiteness. 
        When
        $
               \Var 
     \big (
         F_{Y(k)}(y|\bm{X})
         -
         r \cdot 
         F_{Y(k')}(y|\bm{X})
         \big )
         > 0 
        $
 for any distinct
$k, k' \in \K$ and $r \in \R$, 
        the positive definite result holds.   

\end{theorem}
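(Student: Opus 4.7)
The plan is to exploit an augmented representation of $\widetilde{\F}_{Y(k)}$. Writing $g_k(\bm{x}) := G_{Y(k)|\bm{X}}(y|\bm{x})$, algebra gives
\begin{align*}
\widetilde{\F}_{Y(k)}(y)
= \frac{1}{n_k}\sum_{i=1}^{n} W_{k,i}\bigl[\mathbf{1}\{Y_i \le y\} - g_k(\bm{X}_i)\bigr]
+ \frac{1}{n}\sum_{i=1}^{n} g_k(\bm{X}_i).
\end{align*}
I would condition on $\{\bm{W}_i\}_{i=1}^n$: under Assumption~\ref{as:as1}(b), the observations $(\bm{X}_i, Y_i(k))_{i \in T_k}$ with $T_k := \{i: W_{k,i}=1\}$ are i.i.d., and $(\bm{X}_i)_{i \notin T_k}$ are i.i.d.\ copies of $\bm{X}$ independent of them. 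Decomposing the conditional variance and expanding all pairwise covariances between the two sums collapses to
\begin{align*}
\Var\bigl(\widehat{\F}_{Y(k)}^{simple} \mid \{\bm{W}\}\bigr) - \Var\bigl(\widetilde{\F}_{Y(k)} \mid \{\bm{W}\}\bigr)
= \frac{n - n_k}{n\,n_k}\bigl[2\,\Cov(A, M) - \Var(M)\bigr],
\end{align*}
with $A = \mathbf{1}\{Y(k)\le y\}$ and $M = g_k(\bm{X})$. The population canonical-link identity $\E[M] = F_{Y(k)}(y)$ (the limit of~\eqref{eq:canonical}) equates the conditional means of both estimators to $F_{Y(k)}(y)$, so the between-$\{\bm{W}\}$ variance contribution vanishes. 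Averaging over $\{\bm{W}\}$ with $\E[(n-n_k)/(nn_k)] = (1-\pi_k)/(n\pi_k) + o(n^{-1})$ delivers the leading-order variance gap pointwise in $y$; the $\ell^\infty(\mathcal{Y})$ version follows by applying the pointwise identity for each $y$ with its own $\beta_k(y)$.

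To close part (a), I must verify $2\,\Cov(A, M) \ge \Var(M)$, with equality only when $g_k \equiv F_{Y(k)}(y)$. The canonical-link FOC at the pseudo-true parameter is $\E[(A - M) T(\bm{X})] = 0$, hence $\E[(A-M)\eta] = 0$ for the index $\eta := T(\bm{X})^\top \beta_k(y)$. Under identity link (OLS), $M = \eta$ lies in the linear span of $T(\bm{X})$, so $\E[(A-M)M] = 0$, yielding $\Cov(A, M) = \Var(M)$ and hence $2\,\Cov(A, M) - \Var(M) = \Var(M) \ge 0$, with equality iff $\Var(M) = 0$; the intercept in $T(\bm{X})$ then forces $M \equiv F_{Y(k)}(y)$, i.e.\ $G_{Y(k)|\bm{X}} = F_{Y(k)}$. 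For the logit link, $M = \Lambda(\eta)$ is strictly monotone in $\eta$, and the identity $\E[(A - M)\eta] = 0$ combined with a co-monotonicity argument (since $\Lambda(\eta)$ and $\eta$ share the same rank order) recovers the required sign.

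For part (b), I would work with the augmented vector representation $\widetilde{\gamma}(y) = n^{-1}\sum_i \bm{\xi}_i(y)$, where $\xi_{k,i} := g_k(\bm{X}_i) + (W_{k,i}/\widehat{\pi}_k)\bigl(\mathbf{1}\{Y_i \le y\} - g_k(\bm{X}_i)\bigr)$. Two cancellations govern the cross-treatment covariance structure: (i) $W_{k,i} W_{k',i} = 0$ for $k \ne k'$ kills the IPW--IPW cross term; (ii) under the correct-specification assumption $g_k = F_{Y(k)|\bm{X}}$, the conditional-on-$\bm{X}$ zero-mean property $\E[\mathbf{1}\{Y(k) \le y\} - g_k(\bm{X}) \mid \bm{X}] = 0$ kills every mixed IPW--$g$ product. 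Hence for $k \ne k'$, $\Cov(\widetilde{\F}_{Y(k)}, \widetilde{\F}_{Y(k')}) = n^{-1}\Cov(g_k(\bm{X}), g_{k'}(\bm{X})) + o(n^{-1})$, whereas the analogous cross-covariance for $\widehat{\gamma}^{simple}$ is exactly zero by $W_{k,i} W_{k',i} = 0$. For any $\bm{a} \in \R^K$, expanding $\bm{a}^\top[\Var(\widehat{\gamma}^{simple}) - \Var(\widetilde{\gamma})]\bm{a}$ and applying part~(a) on the diagonal together with the cross-structure reduces the quadratic form (up to $o(n^{-1})$) to a sum of manifestly nonnegative pieces, establishing positive semi-definiteness. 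For strict positive definiteness, the stated non-degeneracy condition ensures that for any $\bm{a} \ne \bm{0}$ some pair $(k, k')$ contributes a strictly positive term via $\Var\bigl(F_{Y(k)}(y|\bm{X}) - r F_{Y(k')}(y|\bm{X})\bigr) > 0$ with $r = -a_{k'}/a_k$.

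The principal obstacle is verifying $2\,\Cov(A, M) \ge \Var(M)$ for nonlinear canonical links: the FOC orthogonality is with respect to $\eta$, not $M = \Lambda(\eta)$, so $\E[(A-M)M] = 0$ can fail under misspecification and the argument must pass through the monotonicity of $\Lambda$ in a less transparent way. A secondary challenge in part~(b) is cleanly reducing the quadratic form to an expression matching the pairwise non-degeneracy hypothesis, together with careful $o(n^{-1})$ accounting for the $\widehat{\pi}_k$ versus $\pi_k$ substitution uniformly across $k$.
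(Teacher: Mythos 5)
Your decomposition is the same one the paper uses: after writing $\widetilde{\F}_{Y(k)}$ in augmented form, both you and the paper reduce the variance gap to $n^{-1}\frac{1-\pi_k}{\pi_k}\bigl[2\Cov\bigl(\1_{\{Y(k)\le y\}},G_{Y(k)|\bm{X}}\bigr)-\Var\bigl(G_{Y(k)|\bm{X}}\bigr)\bigr]+o(n^{-1})$ (your conditioning on $\{\bm{W}_i\}$ and the paper's direct covariance calculation with $\widehat{\P}_{\bm{X}}=\sum_{\ell}\hat\pi_\ell\widehat{\P}_{\bm{X}}^{(\ell)}$ are equivalent bookkeeping). The genuine gap is exactly the step you flag as the ``principal obstacle'': you prove $2\Cov(A,M)\ge\Var(M)$ only for the identity link, and for the logit you offer a ``co-monotonicity'' gesture that is not an argument --- as you yourself note, the quasi-MLE first-order condition gives $\E[(A-M)\,T(\bm{X})]=0$, i.e.\ orthogonality to the index $\eta=T(\bm{X})^{\top}\beta_k(y)$, not to $M=\Lambda(\eta)$, so $\E[(A-M)M]=0$ is unavailable under misspecification and monotonicity of $\Lambda$ alone does not deliver the sign. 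Since the logit is the leading case the theorem is meant to cover, the central inequality of part (a) is left unproven in your proposal. The paper closes this step by a different manipulation: it rewrites $2\Cov(A,M)-\Var(M)=\Var(A)-\Var(A-M)$ and invokes the canonical-class mean equality $\E[G_{Y(k)|\bm{X}}]=F_{Y(k)}$, so that $\Var(A-M)=\E[(A-M)^2]$ and the comparison becomes an $L^2$-prediction statement (with the equality case $G_{Y(k)|\bm{X}}=F_{Y(k)}$); whatever route you take, some version of that step must actually be supplied rather than deferred.

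In part (b) your cross-covariance structure is right and matches the paper (under correct specification the mixed terms vanish by conditioning on $\bm{X}$, giving $n\Cov\bigl(\widetilde{\F}_{Y(k)},\widetilde{\F}_{Y(k')}\bigr)=\Cov\bigl(G_{Y(k)|\bm{X}},G_{Y(k')|\bm{X}}\bigr)+o(1)$ while the simple cross-covariances are $o(n^{-1})$, and the diagonal gap becomes $\frac{1-\pi_k}{\pi_k}\Var(G_{Y(k)|\bm{X}})$). But the punchline is asserted, not proved: the quadratic form $\sum_k a_k^2\frac{1-\pi_k}{\pi_k}\Var(G_{Y(k)|\bm{X}})-\sum_{k\ne k'}a_k a_{k'}\Cov\bigl(G_{Y(k)|\bm{X}},G_{Y(k')|\bm{X}}\bigr)$ is \emph{not} manifestly a sum of nonnegative pieces, because the off-diagonal terms can be negative. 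The needed step is the pointwise weighted Cauchy--Schwarz (Bergstr\"om) inequality $\sum_k Z_k^2/\pi_k\ge\bigl(\sum_k Z_k\bigr)^2$ with $Z_k=a_k\bigl(G_{Y(k)|\bm{X}}-\E[G_{Y(k)|\bm{X}}]\bigr)$ and $\sum_k\pi_k=1$; the paper proves this via a Lagrange-identity lemma whose exact remainder, $\frac12\sum_{k\ne\ell}\Var\bigl(a_kG_{Y(k)|\bm{X}}\pi_\ell-a_\ell G_{Y(\ell)|\bm{X}}\pi_k\bigr)/(\pi_k\pi_\ell)$, is also what converts the stated non-degeneracy condition into strict positive definiteness (including vectors $a$ with zero entries, which your choice $r=-a_{k'}/a_k$ does not cover directly). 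Supplying that identity, or at least the weighted Cauchy--Schwarz step plus a separate equality analysis, is required to complete part (b).
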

\vspace{0.2cm}

Theorem \ref{theorem:distribution}(a) shows the efficiency gains achieved by applying regression adjustment to estimate unconditional distribution functions. More specifically, the variance of the unconditional distribution estimators decreases if $G_{Y(k)|\bm{X}}(y)$ has some predictive power for $\1_{\{Y(k) \le y\}}$ in the $L^2$ sense. 
When $\Lambda(\cdot)$ is the identity function or in the case of a linear prediction model, this property holds automatically.
In the case of logistic regression, the level of model misspecification must be sufficiently small relative to the predictive signal in the data.\footnote{
    A simple algebra shows that 
    $
    \E
    [
    (
      \1_{ \{Y(k) \le \cdot\} }      
      -
      F_{Y(k)}
    )^2
    ]
    -   
    \E
    [
     (
      \1_{ \{Y(k) \le \cdot\} }      
      -
      G_{Y(k)|\bm{X}}
     )^2
     ]
     = 
         \E
    [
    (
      F_{Y(k)|\bm{X}}      
      -
      F_{Y(k)}
    )^2
    ]
    -   
    \E
    [
     (
      F_{Y(k)|\bm{X}}      
      -
      G_{Y(k)|\bm{X}}
     )^2
     ]
  $. 
  Here, the first term measures the signal in $\bm{X}$ for predicting $\1_{ \{Y(k) \le \cdot\} }$. 
  The second term quantifies the model misspecification, that is,   
  the $L^2$ distance between 
  the true conditional distribution 
  $F_{Y(k)|\bm{X}}$
  and 
  the specified model 
  $      
  G_{Y(k)|\bm{X}}
  $.  
   When the model misspecification (second term) is sufficiently small relative to the signal in the data (quantified by the variance of \( F_{Y(k)|\bm{X}} \)), the predictor \( G_{Y(k)|\bm{X}} \) achieves variance reduction for the unconditional estimator.
}

Furthermore, Theorem \ref{theorem:distribution}(b) elaborates on these gains in a positive semi-definite sense for a vector of regression-adjusted estimators. This result encompasses any linear combination of the unconditional distribution estimators and implies potential efficiency gains for both DTE and PTE estimators. While a similar positive-definiteness result for non-linear regression adjustment is reported in Theorem 6 of \cite{negi2020robust}, our result provides an additional insight that the efficiency gains occur when the regression models are not linearly dependent across treatments. Theorem \ref{theorem:distribution}(b) requires correct specification, which is analogous to the requirement in \cite{negi2020robust} for non-linear regression adjustment. 
We also show the point-wise semiparametric efficiency of the DTE estimator under correct specification in Section \ref{subsection:Semiparametric Efficiency Bound} for the sake of completeness.

\subsection{Asymptotic Distribution of DTE Estimator} \label{sec:asymptotic}

In this subsection, we  derive the asymptotic distribution
of the DTE and PTE estimators,
which is useful for statistical inference. 
We also introduce a resampling technique
and show its asymptotic validity. 
We obtain the results in this section under the following assumptions:

\vspace{0.2cm}
\begin{assumption} 
  \label{as:as2}  
  For any $y \in \mathcal{Y}$ and $k \in \mathcal{K}$, the following conditions hold. \\ 
 \noindent
  (a)
  The log-likelihood functions 
  $\beta \mapsto
  \ell_{k}(\beta;y)$
  is a concave function.
  The link function 
  is  
  a canonical link function, 
  and
  its inverse map 
  $\Lambda(\cdot)$
  is 
  twice continuously differentiable
  with its first derivative $\lambda(\cdot)$.

  \noindent 
  (b)
  The true parameters
  $\beta_{k}(y)$
  uniquely
  solve 
  the maximization problem in
  $\max_{\beta}\E[\widehat{\ell}_{k}(\beta;y)]$
  and
 are contained in
  the interior
  of 
  the compact parameter space $\mathcal{B}$.
  
  \noindent 
  (c)
  The maximum eigenvalues of 
  $H_{k}(y)$
  is strictly negative
  uniformly
  over $y \in \mathcal{Y}$.
  
  \noindent 
  (d)
  If $Y$ is a continuous random variable,
  then 
  the conditional density function  
  $f_{Y|\bm{X} }(y| \bm{x})$ exists, 
  is uniformly bounded, and is uniformly continuous
  in $(y, x) \in \mathcal{Y} \times \mathcal{X}$.
\end{assumption}
\vspace{0.2cm}

Assumption \ref{as:as2} (a)-(c) are needed for deriving
the asymptotic property of the DR estimators
and are standard for the likelihood estimation of binary-choice models in the literature.
Assumption \ref{as:as2} (d) is required only when
the outcome variable is continuous.

In the below theorem, we derive the limiting distribution of
the regression-adjusted DTE and PTE estimators. 

\vspace{0.2cm}
\begin{theorem}
  \label{theorem:ate-ra-fclt}
  Suppose that 
  Assumptions \ref{as:as1} and \ref{as:as2} hold. \\
  (a) Then, we have, 
    for every $k, k' \in \mathcal{K}$ and some constant $h>0$,
    in
  $ \ell^{\infty}(\mathcal{Y})$,
    \begin{align*}
         \sqrt{n}
        (
        \widehat{\Delta}^{DTE}_{k,k'}
        -
        \Delta^{DTE}_{k,k'}
        )
      \rightsquigarrow 
      \Psi^{DTE}_{k,k'}(\Z)
      \ \ \ \mathrm{and} \ \ \ 
         \sqrt{n}
        (
        \widehat{\Delta}^{PTE}_{k,k',h}
        -
        \Delta^{PTE}_{k,k',h}
        )
      \rightsquigarrow 
      \Psi^{PTE}_{k,k',h}(\Z),
    \end{align*}
    where $\B$
  is a mean-zero Gaussian process
  defined in 
  equation (\ref{eq:B-def}) of Supplemental Material.

  \noindent
  (b) 
  Additionally, if the potential outcome $Y_{k}$ has a continuously differentiable distribution function with strictly positive density $f_{Y_k}$ on its support for every $k \in \mathcal{K}$, then
  \begin{align*}
         \sqrt{n}
        (
        \widehat{\Delta}^{QTE}_{k,k'}
        -
        \Delta^{QTE}_{k,k'}
        )
      \rightsquigarrow 
      \psi^{DTE}_{k,k', \gamma}(\Z),
    \end{align*}
    where 
    $\psi^{DTE}_{k,k', \gamma}(h)(u)
    :=
    h_{k} \circ F_{Y_{k}}^{-1}(u) / f_{Y_{k}} \circ F_{Y_{k}}^{-1}(u)
    - 
    h_{k'} \circ F_{Y_{k'}}^{-1}(u) / f_{Y_{k'}} \circ F_{Y_{k'}}^{-1}(u)$ 
    is the Hadamard derivative of $\Psi^{DTE}_{k,k'}(\cdot)$
    at $\gamma$ in direction $h:=(h_{1}, \dots, h_{K})^{\top}$.

\end{theorem}
\vspace{0.2cm}

The limiting processes in the above theorem rely solely on the sampling error of the original data, $\{(\bm{W}_{i}, \bm{X}_{i}, Y_{i})\}_{i=1}^{n}$, and are free from estimation errors in DR. This is ensured through the use of the stochastic equicontinuity property in the theorem's proof, which is in line with the argument presented in \cite{newey1994asymptotic}. This result highlights the significance of our DR-based regression adjustment, as it does not add additional variance to the DTE and PTE estimators for large sample sizes.


Although it is simplified,  
the limiting processes of the DTE and PTE estimators depend on unknown nuisance parameters
and may complicate inference in finite samples.
To address the challenge of non-pivotal limiting processes in finite samples, we turn to the exchangeable bootstrap \citep[as described in][]{praestgaard1993Exchangeably, van1996weak}. This approach consistently estimates limit laws of relevant empirical distributions, allowing us to consistently estimate the limit process of the estimator via the functional delta method.
 

For the resampling scheme, we introduce a vector of random weights 
$(S_{1}, \dots, S_{n})$.
To establish the validity of the bootstrap, 
we assume that the random weights satisfy the 
following assumption.

\vspace{0.1cm}
\begin{assumption}[Bootstrap]
  \label{as:b}
Let
$(S_{1}, \dots, S_{n})$
be $n$ scalar, nonnegative exchangeable  
random variables that are independent of the original sample.
The random weights 
$\{S_{i}\}_{i=1}^{n}$
satisfy the followings:
$
\E|S_{i}|^{2+\epsilon}
< \infty$
for some $\epsilon>0$,
$\bar{S}_{n}
:=
n^{-1}
\sum_{i=1}^{n}
S_{i}
\to^p
0
$
and 
$
n^{-1} \sum_{i=1}^{n}
(
S_{i}
- 
\bar{S}_{n}
)^2
\to^p 1
$.
\end{assumption}
\vspace{0.1cm}

This resampling scheme encompasses 
a variety of bootstrap methods, such as 
the empirical bootstrap, subsampling, wild bootstrap and so on 
\citep[See][]{van1996weak}.

Under treatment $k \in \mathcal{K}$,
the empirical distribution of the outcome variable with the random weights 
and the empirical probability measure of $\bm{X}$ with the random weights
are respectively given by 
\begin{align*}
  \widehat{\F}_{Y(k)}^{simple  \ast}(y)
   :=
  \frac{1}{n_{k}}
  \sum_{i=1}^{n}
  S_{i}
  \cdot 
  W_{k,i}
  \cdot 
  \1_{ \{Y_{i} \le y \} } 
  \ \ \ \ \mathrm{and} \ \ \ \ 
  \widehat{\P}_{\bm{X}}^{\ast} :=   
  \frac{1}{n}
  \sum_{i=1}^{n}
  S_{i}
  \cdot  \delta_{\bm{X}_i}.
\end{align*}
Then,
for treatment $k  \in \mathcal{K}$,
we define regression-adjusted distribution functions
in $\ell^{\infty}(\mathcal{Y})$
as, 
\begin{align}
  \label{eq:RA-B}
  \widehat{\F}_{Y(k)}^{\ast}
  := 
  \widehat{\P}_{\bm{X}}^{\ast}
  \widehat{G}_{Y(k)|\bm{X}}^{\ast}.
\end{align}
In the above equation, we use 
$\widehat{G}_{Y(k)|\bm{X}}^{\ast}$
the conditional distribution estimator 
based on the original sample, 
instead of the bootstrap sample. This improves the computational efficiency of our bootstrap inference by eliminating the need to reestimate the parameters for each bootstrap repetition.
According to Theorem \ref{theorem:ate-ra-fclt}, the conditional distribution estimator 
has no impact on the limit distribution 
and we keep the estimator throughout
the resampling procedure. The 
validity is proven in the theorem below. 

Letting 
$\widehat{\gamma}^{\ast} := ( \widehat{\F}_{Y(1)}^{\ast}, \dots, \widehat{\F}_{Y(K)}^{\ast})^{\top} $, we define 
the bootstrapped treatment effect estimator
with regression adjustment as
\begin{eqnarray*}
  \label{eq:ra-est-boot}
  \widehat{\Delta}^{DTE \ast}_{k,k'}:=
  \Psi_{k, k'}^{DTE} (\, \widehat{\gamma}^{\ast} \, ), 
  \ \ \
  \widehat{\Delta}^{PTE \ast}_{k,k',h}:=       
  \Psi_{k, k', h}^{PTE} (\, \widehat{\gamma}^{\ast}  \,) 
     \ \ \mathrm{and}  \ \ 
  \widehat{\Delta}^{QTE \ast}_{k,k',h}:=       
  \Psi_{k, k', h}^{QTE} (\, \widehat{\gamma}^{\ast}  \,).  
\end{eqnarray*}
In the theorem provided below, we show that 
the exchangeable bootstrap provides a method to consistently estimate 
the limit process of the regression-adjusted DTE, PTE and QTE estimators, following \cite{van1996weak}.

\vspace{0.2cm}
\begin{theorem}
  \label{theorem:ate-ra-fclt-bootstrap}
  Suppose that Assumptions \ref{as:as1}-\ref{as:b} hold. \\
  (a) Then, we have, in
  $ \ell^{\infty}(\mathcal{Y})$,
   \begin{align*}
     \sqrt{n}
    (
    \widehat{\Delta}^{DTE \ast}_{k,k'}
    -
    \widehat{\Delta}^{DTE}_{k,k'}
    )
      \overset{p}{\rightsquigarrow} 
  \Psi^{DTE}_{k,k'}(\Z)
  \ \ \mathrm{and}  \ \
    \sqrt{n}
    (
    \widehat{\Delta}^{PTE \ast}_{k,k',h}
    -
    \widehat{\Delta}^{PTE}_{k,k',h}
    )
      \overset{p}{\rightsquigarrow} 
  \Psi^{PTE}_{k,k',h}(\Z),
  \end{align*}
    for every $k, k' \in \mathcal{K}$ and some constant $h>0$, 
  where  
  $\overset{p}{\rightsquigarrow}$ denotes  
  the conditional weak convergence in probability, 
  which is explained in more detail in 
  Supplemental Material \ref{subsec:appendix-bootstrap} 
    and $\B$
  is a mean-zero Gaussian process defined in Lemma
  \ref{lemma:donsker2} of Supplemental Material.

  \noindent
  (b) 
  Additionally, if the potential outcome $Y_{k}$ has a continuously differentiable distribution function with strictly positive density $f_{Y_k}$ on its support for every $k \in \mathcal{K}$, then
  \begin{align*}
         \sqrt{n}
        (
        \widehat{\Delta}^{QTE \ast}_{k,k'}
        -
        \widehat{\Delta}^{QTE}_{k,k'}
        )
      \rightsquigarrow 
      \psi^{DTE}_{k,k', \gamma}(\Z),
    \end{align*}
    where 
    $\psi^{DTE}_{k,k', \gamma}(\cdot)$ is defined in Theorem \ref{theorem:ate-ra-fclt}(b).

\end{theorem}
\vspace{0.2cm}

Theorems \ref{theorem:ate-ra-fclt} and \ref{theorem:ate-ra-fclt-bootstrap} present a methodologically rigorous and empirically useful approach for estimating DTE and PTE. Moreover, these theoretical results establish the asymptotic validity of the exchangeable bootstrap procedure for constructing confidence intervals and conducting hypothesis tests for both DTE and PTE, as well as QTE under additional conditions.

The following algorithm describes the construction of confidence intervals for regression-adjusted DTE using empirical bootstrap. The same procedure can be applied to obtain confidence intervals for regression-adjusted PTE, QTE and simple estimators.


\begin{algorithm}[H] 
\caption{Bootstrap confidence intervals for regression-adjusted DTE} \label{algorithm:bootstrap}
\begin{algorithmic}
\item[\textbf{Input:}] Original sample $\{(\bm{W}_{i}, \bm{X}_{i}, Y_{i})\}_{i=1}^{n}$
\item[\textbf{Output:}] $(1-\alpha)\times100\%$ confidence intervals for the regression-adjusted DTE
\vspace{0.5em}
\item[\textbf{1.}] For each bootstrap iteration $b=1,\ldots,B$:
\vspace{0.3em}
\item[\textbf{2.}] \quad Draw a bootstrap sample of size $n$ with replacement:\\
      \qquad $\{(\bm{W}_{i}^b, \bm{X}_{i}^b, Y_{i}^b)\}_{i=1}^{n}$ from $\{(\bm{W}_{i}, \bm{X}_{i}, Y_{i})\}_{i=1}^{n}$
\vspace{0.3em}
\item[\textbf{3.}] \quad Compute regression-adjusted DTE $\widehat{\Delta}^{DTE, b}_{k,k'}$ given the conditional\\
      \qquad distribution estimator based on the original sample
\vspace{0.3em}
\item[\textbf{4.}] Calculate standard errors $\widehat{\Sigma}^{DTE}_{k, k'}(y)$ using the bootstrap iterations
\vspace{0.3em}
\item[\textbf{5.}] Construct the confidence band:
      $$\big \{\widehat{\Delta}^{DTE}_{k,k'}(y) \pm \Phi^{-1}(1-\alpha/2)\cdot \widehat\Sigma^{DTE}_{k,k'}(y): y\in\Y \big \},$$
      \quad where $\Phi$ is the standard normal distribution function
\end{algorithmic}
\end{algorithm}

In the above algorithm, the standard errors $\widehat{\Sigma}^{DTE}_{k,k'}(y)$ can be computed 
by 
(a) using the bootstrap standard deviation of the bootstrapped DTEs $\{\widehat{\Delta}^{DTE,b}_{k,k'}(y)\}_{b=1}^B$, or 
(b) using the rescaled bootstrap interquartile range 
$\big ( q_{0.75}(y) - q_{0.25}(y) \big ) / (z_{0.75} - z_{0.25})$, 
where $q_p(y)$ is the $p$-th quantile of the bootstrapped DTE at location $y$ and $z_p$ is the $p$-th quantile of the standard normal distribution.

\section{Monte-Carlo Simulation Results} \label{sec:simulation}

For our Monte Carlo simulation, 
we consider various setups to cover 
continuous/discrete and symmetric/skewed outcome distributions.
We report our results from four types of data generating processes (DGPs): DGP1-DGP4. 
In all DGPs, 
we use the same setup for pre-treatment regressors 
$\bm{X}= (1, X_{1}, X_{2})^{\top}$
with a uniform random variable 
$X_{1} \sim U(0.5, 1.5)$
and 
a normal random variable
$X_{2} \sim N(0,1)$. 
The treatment status 
$W_{1}$ follows the binomial distribution 
with $\Pr\{W_{1}=1\} = \pi_{1} \in \{0.3, 0.5\}$.

Under DGP1 and DGP2,
we generate continuous outcomes 
from the following model:
 \begin{align*}
     Y = X_{1} + (X_{1} + X_{2}) \cdot W_{1}
     + |X_{1} + X_{2}| \cdot U,  
 \end{align*}
where the unobserved component $U$ is specified as
$ U \sim N(0,1)$  under DGP1 
and 
$U \sim \chi_{3}^2 $
under DGP2, respectively.
Consequently, the unobserved variable is symmetrically distributed under DGP1, while it exhibits a skewed distribution under DGP2.
Under DGP3 and DGP4, 
we consider discrete outcome distributions
satisfying the following conditional mean:
\begin{align*}
    \E[Y|\bm{X},\bm{W}] = \exp(W_{1} + X_1 + X_2 /2).
\end{align*}
Given the conditional mean model above, we set the outcome distribution conditional on $\bm{X}$ and $\bm{W}$ to follow the Poisson distribution under DGP3 and to follow the negative binomial distribution with the dispersion parameter $r=5$ under DGP4, respectively.

\begin{figure}[!h]
    \vskip 0.2in
    \begin{center}
    \caption{Performance metrics of simple and regression-adjusted DTE estimators}
    (DGP1, continuous outcome, $\pi_{1}=0.5$)
    
    \includegraphics[width=0.9\columnwidth]{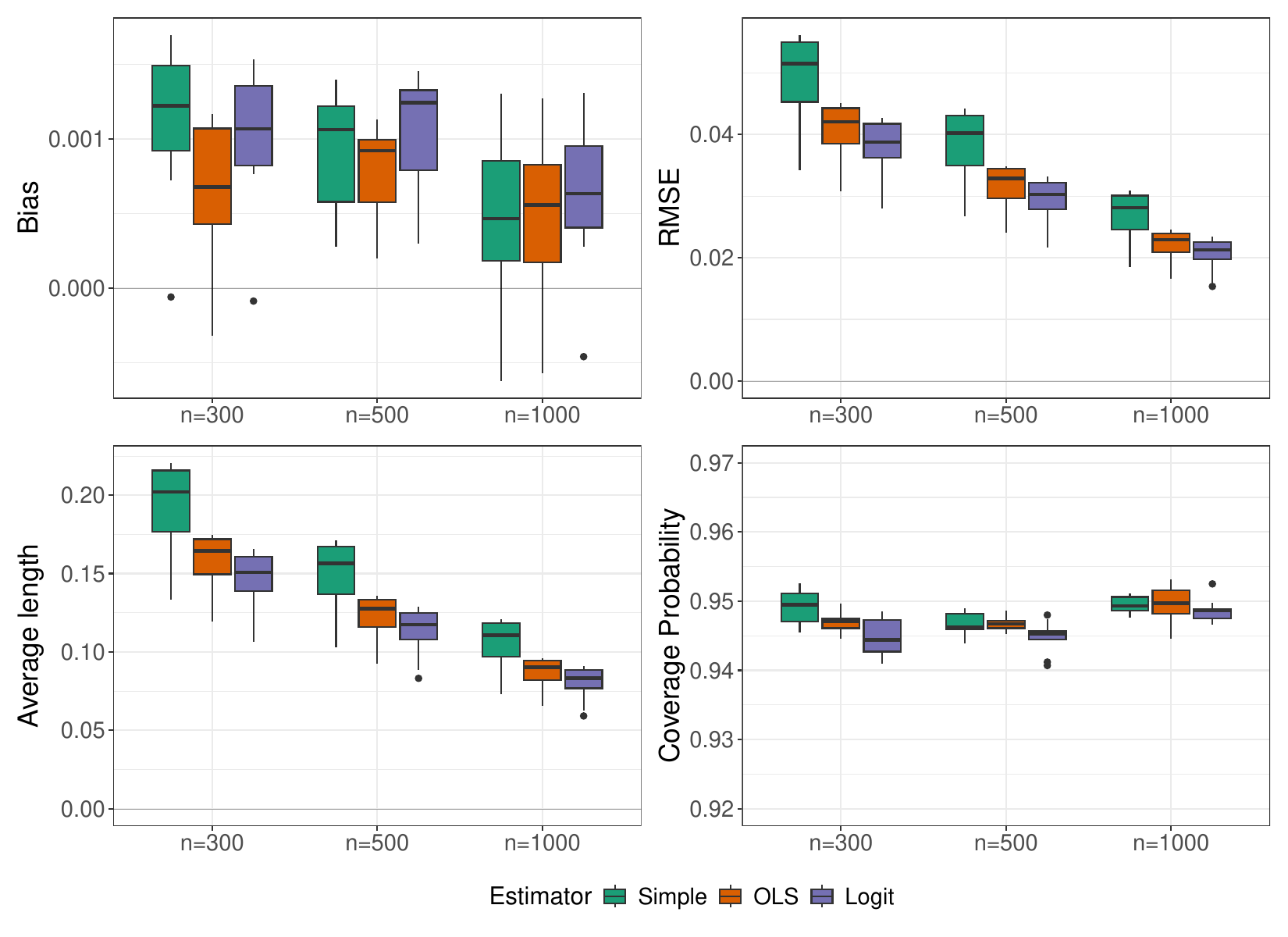}
      \begin{minipage}{0.92\textwidth}
      \small 
      \textit{Note}:
      Bias, RMSE, 95\% CI length and coverage probability calculated over 10,000 simulations. Each boxplot represents the distribution across locations $y$ for a specific sample size.
      \end{minipage}    
      \label{fig:dgp-1-rho0.5}
        \end{center}
    \vskip -0.2in
    \end{figure}

\begin{figure}[!h]
\vskip 0.2in
\begin{center}
\caption{Performance metrics of simple and regression-adjusted DTE estimators}
(DGP3, discrete outcome, $\pi_{1}=0.5$): 
\includegraphics[width=0.9\columnwidth]{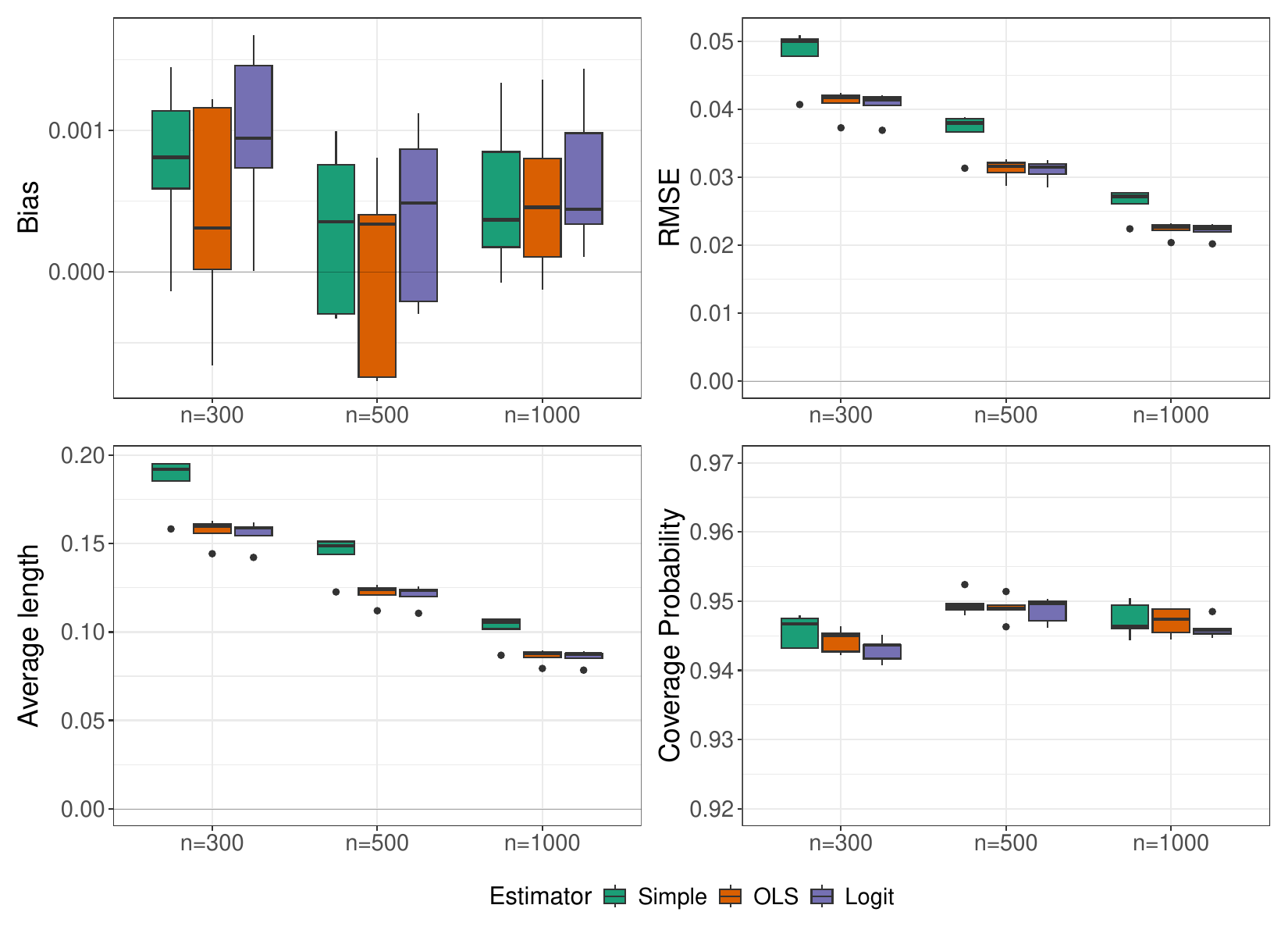}
      \begin{minipage}{0.92\textwidth}
      \small 
      \textit{Note}:
      Bias, RMSE, 95\% CI length and coverage probability calculated over 10,000 simulations. Each boxplot represents the distribution across locations $y$ for a specific sample size.
      \end{minipage}    

\label{fig:dgp-3-rho0.5}
\end{center}
\vskip -0.2in
\end{figure}

We consider three estimators: the simple DTE (Simple) and the DTE with regression adjustment based on the linear regression (OLS) and logit model (Logit). To evaluate the finite sample performance of these DTE estimators, we compute their bias and root mean squared error (RMSE) across 10,000 simulations. Additionally, we assess the average length and coverage probability of their 95\% pointwise confidence intervals, based on analytic standard errors derived from the sample analogs. In the main text, we present results for DGP1 and DGP3 under a treatment assignment probability of $\pi_{1} = 0.5$. Figures displaying results for all other combinations of DGPs and treatment assignment probabilities can be found in Supplemental Material \ref{app:simulation}. Additional results using bootstrap confidence intervals and polynomial transformations of covariates are presented in Sections \ref{app:bootstrap} and \ref{app:cov_transformation}, respectively.

Figure \ref{fig:dgp-1-rho0.5} summarizes the performance metrics for DGP1, while Figure \ref{fig:dgp-3-rho0.5} presents results for DGP3 under a treatment assignment probability of $\pi_{1} = 0.5$. For continuous outcomes in DGP1 and DGP2, we consider thresholds $y$ at quantiles $\{0.1, \dots, 0.9\}$ of the true outcome distribution. For discrete outcomes in DGP3 and DGP4, we consider $y \in\{1, \dots, 5\}$. 
The results are displayed using boxplots over these threshold locations for a concise visual representation. To approximate true distributions, we generate datasets with 1,000,000 observations and compute the DTE at each threshold $y$.

As illustrated in Figure \ref{fig:dgp-1-rho0.5}, both the RMSE and the average length of the confidence intervals decrease as the sample size $n$ increases, as expected. The regression-adjusted estimators improve upon the simple estimator by reducing the RMSE and the length of confidence intervals by approximately 13\%-25\% across all sample sizes. However, for smaller sample sizes ($n \in \{300, 500\}$), the regression-adjusted estimators exhibit a slight undercoverage, with a coverage probability around 0.945. Similar trends are observed with the discrete outcome depicted in Figure \ref{fig:dgp-3-rho0.5}. In this scenario, the regression-adjusted estimators shorten the confidence intervals by about 9\%-20\% compared to the simple estimator.

In summary, these findings suggest that while regression adjustment does not affect the bias in DTE estimation, it significantly enhances precision. Additionally, we confirm that regression adjustment reduces the length of confidence intervals, thereby enhancing precision, while maintaining their 95\% coverage probability.

\section{Empirical Applications} \label{sec:empirical}

\subsection{Nudge and Water Consumption} \label{sec:water}
In this section, we reanalyze data from a randomized experiment conducted by \citet{ferraro2013using} in 2007 to examine the impact of norm-based messages, or nudges, on water usage in Cobb County, Atlanta, Georgia.\footnote{The dataset is publicly available as \cite{DVN1/22633_2013} at \href{{https://doi.org/10.7910/DVN1/22633}}{https://doi.org/10.7910/DVN1/22633}.} The experiment implemented and compared three different interventions aimed at reducing water usage against a control group (no nudge). For more details on the experiment and the results of the average treatment effect analysis, refer to \cite{ferraro2013using}. For simplicity, following \citet{list2022using}, we focus exclusively on the nudge that demonstrated the strongest average effect over the control group, which combined prosocial appeal with social comparison. We estimate the regression-adjusted DTE and PTE of this intervention over the control group, using the monthly water consumption in the year prior as pre-treatment covariates $\bm X$. The outcome variable $Y$ represents the level of water consumption from June to September 2007, measured in thousands of gallons and discretely distributed. It is important to note that, while the measure in gallons appears approximately continuous, the presence of subtle discreteness can complicate both theoretical and practical statistical inference. Consequently, QTE is not applicable in this context.

\begin{figure}[!h]
\vskip 0.2in
\captionsetup{justification=centering, width=0.95\textwidth}
\begin{center}
\caption{Nudge Effect on Water Consumption (thousands of gallons) \\
Distributional and Probability Treatment Effect 
}
\includegraphics[width=0.4\columnwidth]{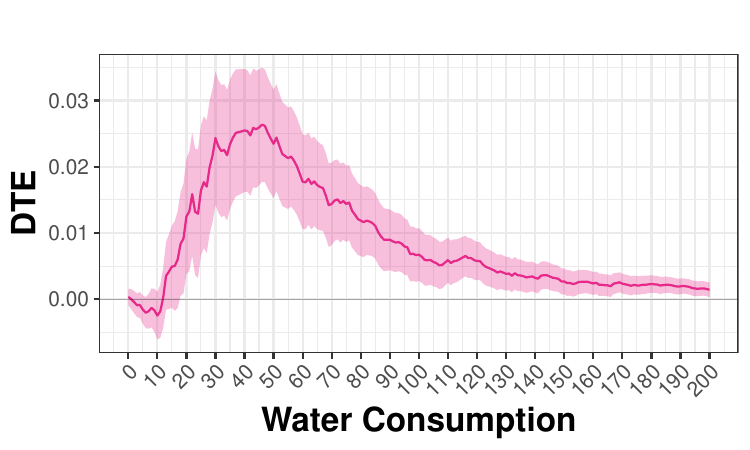}
\includegraphics[width=0.4\columnwidth]{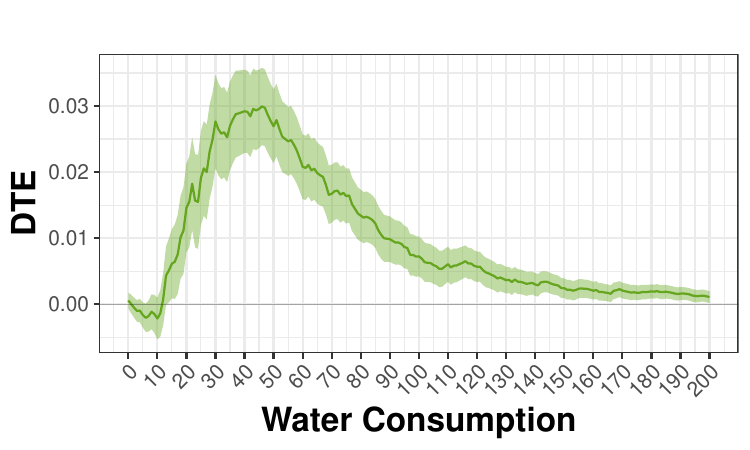}
\includegraphics[width=0.4\columnwidth]{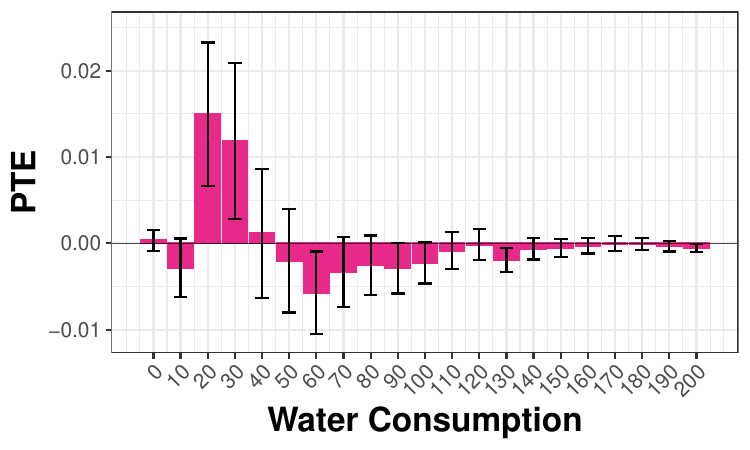}
\includegraphics[width=0.4\columnwidth]{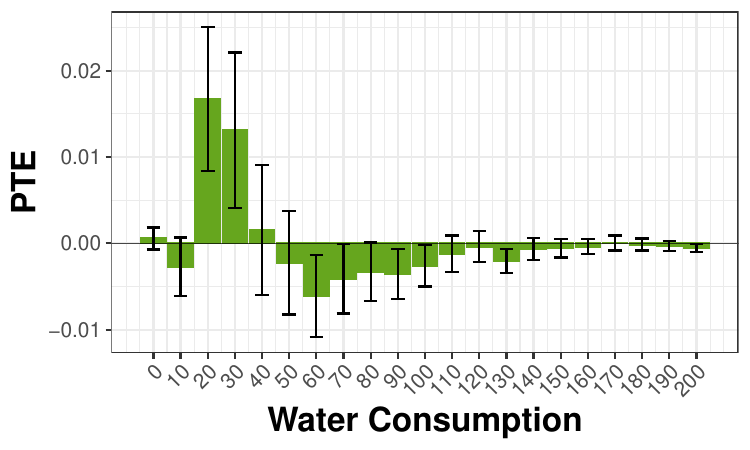}
      \begin{minipage}{0.90\textwidth}
      \small 
      \textit{Note}:
 Top left: simple DTE; top right: regression-adjusted DTE (logit model). Bottom left: simple PTE; bottom right: regression-adjusted PTE (logit model). Shaded areas and error bars: 95\% pointwise confidence intervals. $n=78,500$.
      \end{minipage}    

\label{fig:water-dte}
\end{center}
\vskip -0.2in
\end{figure}

Figure \ref{fig:water-dte} presents the DTE and PTE of the intervention in comparison to the control group. We compute the DTE for $y\in\{0,1,2,\dots, 200\}$. The top left panel of Figure \ref{fig:water-dte} displays the simple estimate of the DTE, whereas the top right panel illustrates the regression-adjusted estimate of the DTE obtained with logit model. The shaded areas denote the 95\% pointwise confidence intervals, based on analytic standard errors derived from the sample analogs. The results indicate that the regression-adjusted DTE has around 4\% to 32\% smaller standard errors and consequently tighter confidence intervals compared to the simple DTE. 

The bottom left panel of Figure \ref{fig:water-dte} depicts the simple PTE, while the bottom right panel shows the regression-adjusted PTE, both with 95\% pointwise confidence intervals. The PTE is presented in increments of $h=10$ for $y\in\{0, 10, 20,\dots, 200\}$. The findings suggest that the treatment is effective in reducing the probability of high water consumption and increasing the probability of low water consumption. Specifically, the regression-adjusted PTE results indicate an probability increase in water usage in the range of (20,30] and (30, 40] by 1.67 percentage points (pp) with standard errors (s.e.) of 0.42 pp and 1.31 pp (s.e.~= 0.46 pp), respectively. On the other hand, it suggests a probability decrease in water usage in the range of (60,70], (70, 80], (90, 100] and (100, 110] by 0.61 pp (s.e.~= 0.24 pp), 0.41 pp (s.e.~= 0.20 pp), 0.35 pp (s.e.~= 0.15 pp), and 0.26 pp (s.e.~= 0.12 pp), respectively. The variance reduction is particularly notable in the range between 70 and 110. The probability change for the range (70,80], (90,100], and (100,110] are not significant under simple estimates, but are significantly negative under regression adjustment.\footnote{Some significance levels are difficult to discern in Figure \ref{fig:water-dte} due to the scale. The probability of decreased water consumption in the range $(80,90]$ is 0.33 pp with a standard error of 0.17 pp under regression-adjusted PTE, despite appearing significantly negative. Additionally, for the ranges $(90,100]$ and $(100,110]$, the probabilities of decreased water consumption are 0.28 pp (s.e.~= 0.15 pp) and 0.22 pp (s.e.~= 0.12 pp), respectively, under simple PTE.}

\subsection{Insurance Coverage and Healthcare Utilization} \label{sec:oregon}
In this subsection, we investigate the impact of insurance coverage on healthcare utilization using data from the Oregon Health Insurance Experiment.\footnote{The dataset is publicly available at \href{https://www.nber.org/research/data/oregon-health-insurance-experiment-data}{https://www.nber.org/research/data/oregon-health-insurance-experiment-data}.} In 2008, the state of Oregon offered insurance coverage to a group of uninsured low-income adults through a lottery. Twelve months after the lottery, a mail survey was conducted to collect data on various outcomes. The treatment in this experiment was randomly assigned based on the number of people in the household. Therefore, we focus on individuals who reported being in single-person households, excluding those who included one or two additional people in their lottery application. Refer to \cite{finkelstein2012oregon} for more details about the experiment and the treatment effect results on various outcomes.

We examine healthcare utilization by focusing on the number of primary care (or doctor) visits over a six-month period, denoted as our outcome variable $Y$, collected from survey data. Our set of covariates $\bm X$ includes the applicant's age, gender, race, and income. Not everyone who was selected by the lottery to apply for insurance coverage ended up obtaining it. Previous research by \cite{finkelstein2012oregon} has documented a positive intention-to-treat (ITT) effect and a local average treatment effect (LATE) of insurance coverage on the number of primary care visits. In this illustrative example, we use the lottery assignment (rather than actual insurance enrollment) as our treatment variable and focus solely on estimating the ITT effect. To obtain the local distributional treatment effect, one can use the representation provided in \citet{abadie2002bootstrap}, applying the ITT effect at each location $y$.

\begin{figure}[!h]
\vskip 0.2in
\captionsetup{justification=centering, width=0.95\textwidth}
\begin{center}
\caption{Effect of Health Insurance Coverage on Doctor Visits \\
Distributional and Probability Treatment Effect
}
\vspace{-0.5cm}
\includegraphics[width=0.4\columnwidth]{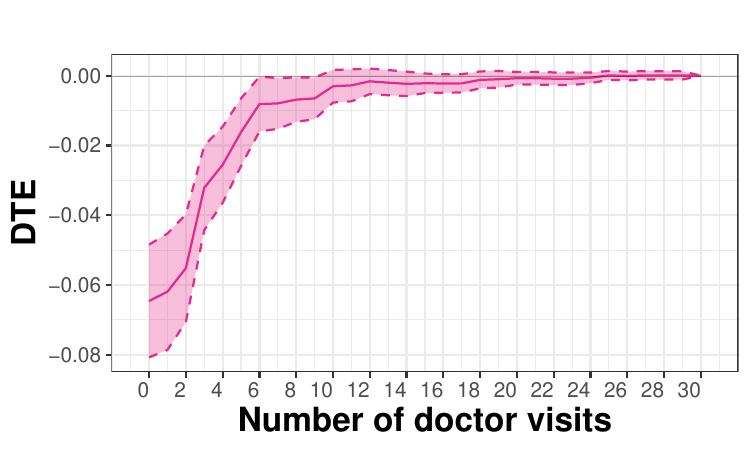}
\includegraphics[width=0.4\columnwidth]{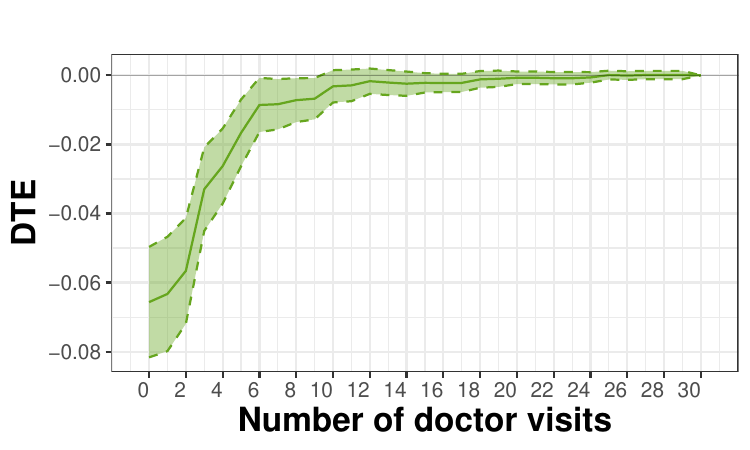}
\includegraphics[width=0.4\columnwidth]{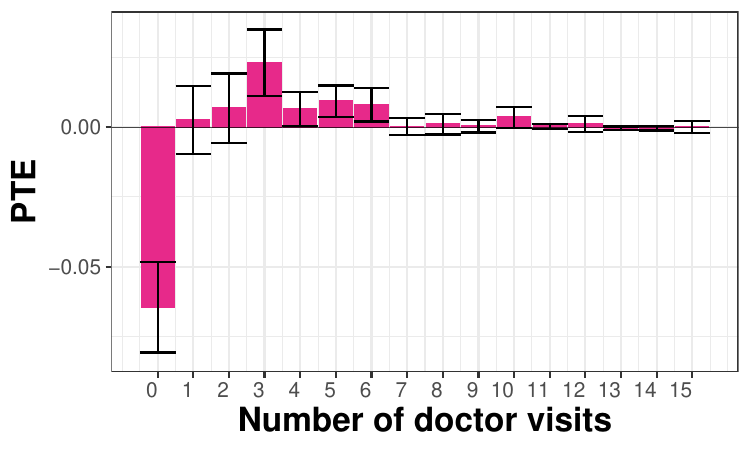}
\includegraphics[width=0.4\columnwidth]{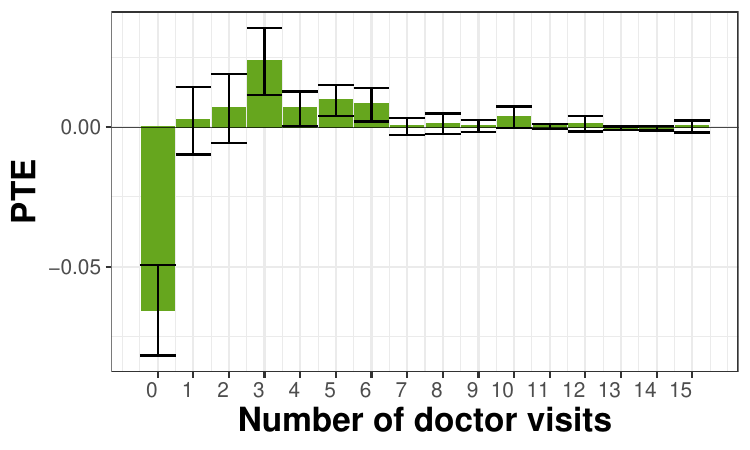}
      \begin{minipage}{0.90\textwidth}
      \small 
      \textit{Note}:
 Top left: simple DTE; top right: regression-adjusted DTE (logit model). Bottom left: simple PTE; bottom right: regression-adjusted PTE (logit model). Shaded areas and error bars: 95\%  pointwise confidence intervals. n=13,759.
      \end{minipage}    
    \label{fig:oregon}
\end{center}
\vspace{-0.7cm}
\end{figure}

Figure \ref{fig:oregon} shows the distributional effect of insurance coverage on the number of primary care visits. The DTE estimates (top left) indicate a statistically significant effect on visits ranging from 0 to 6. The regression-adjusted DTE estimates obtained with logit model are nearly identical to the simple DTE estimates and reduce the standard errors by 0.1\%-1.1\% across values of 
$y{\in}\{0, 1, {\dots}, 30\}$. The gains from regression adjustment are modest in this case.

In the bottom panel of Figure \ref{fig:oregon}, we present the estimated PTE with 95\% confidence intervals for each integer value between 0 and 15 (i.e., $h = 1$). The regression-adjusted PTE estimates suggest that insurance coverage reduces the probability of not seeing a doctor by approximately 6.56 pp (s.e.~= 0.82 pp) and increases the probability of seeing a doctor three, four, five, and six times by 2.36 pp (s.e.~= 0.61 pp), 0.66 pp (s.e.~= 0.31 pp), 0.96 pp (s.e.~=0.28 pp), and 0.81 pp (s.e.~= 0.31 pp), respectively.

\section{Conclusion}\label{sec:con}
In this paper, we introduce a regression adjustment method to improve the precision of estimating distributional treatment effects through the use of distributional regression. We analyze the theoretical properties of our proposed method and provide a practical inference method that has been proven to be theoretically valid. Our simulation results demonstrate the remarkable performance of our method. Additionally, we apply our method to real-world data in two empirical applications and uncover heterogeneous treatment effects. In both applications, we find clearer conclusions compared to existing methods.
  
\clearpage
\setstretch{0.1}
\bibliographystyle{chicago}    
\bibliography{ATE_RA}

\clearpage
\appendix

\setcounter{section}{0} 
\setcounter{equation}{0} 
\setcounter{lemma}{0}\setcounter{page}{1}
\setcounter{proposition}{0} %
\renewcommand{\thepage}{S-\arabic{page}}
\renewcommand{\theequation}{A.%
  \arabic{equation}}\renewcommand{\thelemma}{A.\arabic{lemma}} 
\renewcommand{\theproposition}{A.\arabic{proposition}}

\setstretch{1.3}

\begin{center}
  \section*{Supplementary Material}  
\end{center}

In this supplementary material, we complete the proofs in Section A, and provide additional results from our simulation experiment in Section B. The relevant R code is available online.

\section{Theoretical Results} \label{app:theory}

\subsection{Proof of  Theorem \ref{theorem:distribution}}

To ensure the completeness of the proof, we begin by presenting a variant of
Lagrange's identity and Bergstr\"{o}m's inequality in the below lemma, which is useful to prove the efficiency gain of the regression adjustment. 

\vspace{0.2cm}
\begin{lemma}
    \label{lemma:L-ind}
    For any 
    $(a_{1}, \dots, a_{K}) \in \R^{K}$
    and 
    $(b_{1}, \dots, b_{K}) \in \R^{K}$
    with $b_{k} > 0 $ for all $k=1, \dots, K$,
    we can show that 
    \begin{align*}
     \sum_{k=1}^{K} \frac{a_{k}^{2}}{b_{k}}   
     - 
     \frac{
     \big (
     \sum_{k=1}^{K} a_{k}
     \big )^2 
    }{
     \sum_{k=1}^{K} b_{k}
    }
    = 
     \frac{
     1
    }{
     \sum_{k=1}^{K} b_{k}
    }
    \cdot 
    \frac{1}{2}
    \sum_{k=1}^{K}
    \sum_{\substack{\ell =1 \\ \ell \neq k}}^K 
    \frac{
    (a_{k} b_{\ell} 
    - 
    a_{\ell} b_{k}
    )^2      
    }{
    b_{k} b_{\ell}
    } ,
    \end{align*}
    which implies 
    Bergstr\"{o}m's inequality, given by 
  \begin{align*}
     \sum_{k=1}^{K} \frac{a_{k}^{2}}{b_{k}}   
     \geq 
     \frac{
     \big (
     \sum_{k=1}^{K} a_{k}
     \big )^2 
    }{ 
     \sum_{k=1}^{K} b_{k}
    }
    . 
    \end{align*}

\end{lemma}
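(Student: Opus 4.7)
The plan is straightforward algebraic manipulation: clear denominators and reorganize the resulting double sums. Multiplying both sides of the claimed identity by $\sum_{k=1}^{K} b_{k}$, the statement reduces to verifying
\begin{align*}
  \Big(\sum_{k=1}^{K} b_{k}\Big)\Big(\sum_{k=1}^{K} \tfrac{a_{k}^{2}}{b_{k}}\Big) - \Big(\sum_{k=1}^{K} a_{k}\Big)^{2}
  = \frac{1}{2}\sum_{k=1}^{K}\sum_{\ell \neq k}\frac{(a_{k} b_{\ell} - a_{\ell} b_{k})^{2}}{b_{k} b_{\ell}}.
\end{align*}

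First I would expand both factors on the left-hand side as iterated sums, obtaining $\sum_{k,\ell} a_{k}^{2} b_{\ell}/b_{k}$ minus $\sum_{k,\ell} a_{k} a_{\ell}$. The diagonal contributions (where $\ell = k$) give $a_{k}^{2} - a_{k}^{2}=0$ and therefore drop out, leaving a sum only over ordered pairs with $k \neq \ell$. Second, I would pair each ordered off-diagonal pair $(k,\ell)$ with its transpose $(\ell,k)$: the combined contribution is
\begin{align*}
  \frac{a_{k}^{2} b_{\ell}}{b_{k}} + \frac{a_{\ell}^{2} b_{k}}{b_{\ell}} - 2 a_{k} a_{\ell}
  = \frac{a_{k}^{2} b_{\ell}^{2} - 2 a_{k} a_{\ell} b_{k} b_{\ell} + a_{\ell}^{2} b_{k}^{2}}{b_{k} b_{\ell}}
  = \frac{(a_{k} b_{\ell} - a_{\ell} b_{k})^{2}}{b_{k} b_{\ell}},
\end{align*}
where the middle step uses $b_{k}, b_{\ell} > 0$ to put everything over a common denominator. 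Summing this expression over unordered pairs $\{k,\ell\}$ (equivalently, summing over ordered pairs and introducing the factor $1/2$) yields exactly the right-hand side of the cleared identity.

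Bergstr\"om's inequality is then immediate: the right-hand side is a sum of squares divided by strictly positive quantities $b_{k} b_{\ell}$, hence nonnegative, so $\sum_{k} a_{k}^{2}/b_{k} \geq (\sum_{k} a_{k})^{2}/\sum_{k} b_{k}$, with equality if and only if $a_{k} b_{\ell} = a_{\ell} b_{k}$ for all $k,\ell$ (that is, $a_{k}/b_{k}$ is constant in $k$). There is no substantive obstacle in this argument; the only care needed is in the bookkeeping between ordered and unordered pairs, which is handled by the explicit factor of $1/2$ on the right-hand side.
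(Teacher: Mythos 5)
Your proof is correct. It reaches the same algebraic identity as the paper but by a slightly different route: the paper first states Lagrange's identity for general sequences $(c_k)$ and $(d_k)$ and then obtains the weighted version by the substitution $c_k = a_k/\sqrt{b_k}$, $d_k = \sqrt{b_k}$, whereas you clear the denominator $\sum_k b_k$ and verify the resulting identity directly, cancelling the diagonal terms and pairing each ordered off-diagonal pair $(k,\ell)$ with its transpose to produce $(a_k b_\ell - a_\ell b_k)^2/(b_k b_\ell)$. The two arguments are equivalent in content (your expansion is in effect a proof of the specialized Lagrange identity), but yours is self-contained, not relying on the general identity as a cited fact, and your bookkeeping of the factor $1/2$ between ordered and unordered pairs is handled correctly; the deduction of Bergstr\"om's inequality from nonnegativity of the sum of squares, together with your characterization of the equality case ($a_k/b_k$ constant), is also sound and consistent with how the paper uses the lemma.
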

\begin{proof}
    Lagrange's identity is that, 
    for any 
    $(c_{1}, \dots, c_{K}) \in \R^{K}$
    and 
    $(d_{1}, \dots, d_{K}) \in \R^{K}$, 
    \begin{align}
    \label{eq:L-ind}
        \bigg (\sum_{k=1}^{K} c_{k}^2
        \bigg)
        \bigg(\sum_{k=1}^{K} d_{k}^2
        \bigg)
        - 
        \bigg(\sum_{k=1}^{K} c_{k}d_{k}
        \bigg)^2
        =
    \frac{1}{2}
    \sum_{k=1}^{K}
    \sum_{\substack{\ell =1 \\ \ell \neq k}}^K 
     (c_{k} d_{\ell} 
    - 
    c_{\ell} d_{k}
    )^2   .
    \end{align}
    Fix arbitrary 
    $(a_{1}, \dots, a_{K}) \in \R^{K}$
    and 
    $(b_{1}, \dots, b_{K}) \in \R^{K}$
    with $b_{k} > 0 $ for all $k=1, \dots, K$. 
    Then, 
    taking 
    $c_{k} = a_{k} / \sqrt{b_{k}}$ 
    and 
    $d_{k} = \sqrt{b_{k}}$
    for all $k=1, \dots, K$
    in (\ref{eq:L-ind}), we can show that 
    \begin{align*}
        \bigg (
        \sum_{k=1}^{K} 
        \frac{a_{k}^2}{b_{k}}
        \bigg)
        \bigg(\sum_{k=1}^{K} b_{k}
        \bigg)
        - 
        \bigg(\sum_{k=1}^{K} a_{k}
        \bigg)^2
     &   =
    \frac{1}{2}
    \sum_{k=1}^{K}
    \sum_{\substack{\ell =1 \\ \ell \neq k}}^K 
     \bigg (
     \frac{a_{k}}{ \sqrt{b_{k}}}
     \sqrt{b_{\ell}} 
     - 
     \frac{a_{\ell}}{ \sqrt{b_{\ell}}}
     \sqrt{ b_{k}} 
     \bigg )^2   \\ 
     & = 
         \frac{1}{2}
    \sum_{k=1}^{K}
    \sum_{\substack{\ell =1 \\ \ell \neq k}}^K 
    \frac{
    (a_{k} b_{\ell} 
    - 
    a_{\ell} b_{k}
    )^2      
    }{
    b_{k} b_{\ell}
    },
    \end{align*}
    which leads to the desired equality. 
    Also, the last expression in the math display above 
    is non-negative, which leads to Bergstr\"{o}m's inequality. 
\end{proof}
\vspace{0.5cm}

\begin{proof}[\textbf{Proof of Theorem \ref{theorem:distribution}}]
\textbf{Part (a)}
  Choose any arbitrary $k \in \K$.  
  The quadratic expansion of  
  $
  \widetilde{\F}_{Y(k)}
  = 
  \widehat{\F}_{Y(k)}^{simple}
  -
  (\widehat{\P}_{X}^{(k)} - \widehat{\P}_{X})  
  G_{Y(k)|\bm{X}} 
  $ yields 
  \begin{align}
    & \Var
    \big(
      \widetilde{\F}_{Y(k)}
    \big)
    =
    \Var
    \big(
    \widehat{\F}_{Y(k)}^{simple}
    \big)  
    -
    2
    \Cov
    \Big (
    \widehat{\F}_{Y(k)}^{simple},
    (\widehat{\P}_{\bm{X}}^{(k)} - \widehat{\P}_{\bm{X}}) 
    G_{Y(k)|\bm{X}}
    \Big ) \notag \\ 
    & \hspace{3cm} +
    \Var
    \Big (
    (\widehat{\P}_{\bm{X}}^{(k)} - \widehat{\P}_{\bm{X}}) 
    G_{Y(k)|\bm{X}}
    \Big ).     \label{eq:var-1}
  \end{align}
  We can write 
  $\widehat{\P}_{\bm{X}} = \sum_{k \in \K} 
  \hat{\pi}_{k} \widehat{\P}_{\bm{X}}^{(k)}
  $. 
  It is assumed that 
  observations are a random sample and $n_{k} /n = \pi_{k} + o(1)$ for every $k \in \K$
  as $n \to \infty$. 
  Furthermore, all unconditional and conditional functions are 
  bounded. 
  By applying the dominated convergence theorem, we can show 
  \begin{align}
    n \Cov
    \Big (
    \widehat{\F}_{Y(k)}^{simple},
    (\widehat{\P}_{\bm{X}}^{(k)} - \widehat{\P}_{\bm{X}}) 
    G_{Y(k)|\bm{X}}
    \Big ) \notag
    &
      = n
    \Cov    \Big (
    \widehat{\F}_{Y(k)}^{simple},
    (1 - \hat{\pi}_{k})\widehat{\P}_{\bm{X}}^{(k)} 
    G_{Y(k)|\bm{X}}
    \Big ) \notag \\
    &
      = 
    \frac{1 - \pi_{k}}{ \pi_{k}}
    \Cov
    \big (
      \1_{ \{Y(k) \leq \cdot \} },
      G_{Y(k)|\bm{X}}
    \big ) + o(1). 
    \label{eq:cov-1}
  \end{align}
   Similarly, we can show that  
  \begin{align}
    n \Var
    \big (
    (\widehat{\P}_{\bm{X}}^{(k)} - \widehat{\P}_{\bm{X}}) 
    G_{Y(k)|\bm{X}}
    \big ) \notag 
    & 
      =
    n \Var
    \big (
    (1 -   \hat{\pi}_{k})
    \widehat{\P}_{\bm{X}}^{(k)} 
    G_{Y(k)|\bm{X}}
    \big )
    + 
    n
    \sum_{\ell : \ell \not = k }
     \Var
    \big (
    \hat{\pi}_{\ell}
    \widehat{\P}_{\bm{X}}^{(\ell)} 
    G_{Y(k)|\bm{X}}
    \big )
 \notag \\ 
    & 
      =
 \frac{  (1 - \pi_{w})^2}{ \pi_{w}} 
    \Var
    \big (
    G_{Y(k)|\bm{X}}
    \big )
    +
      \sum_{\ell : \ell \not = k }
      \frac{    \pi_{\ell}^{2}}{\pi_{\ell}}
    \Var
    \big (
    G_{Y(k)|\bm{X}}
    \big ) + o(1)
    \notag \\ 
    & 
      =
      \bigg (
           \frac{  (1 -   \pi_{k})^2}{ \pi_{k}} 
        +
      \sum_{\ell : \ell \not = k }
      \pi_{\ell} 
      \bigg )
    \Var
    \big (
    G_{Y(k)|\bm{X}}
    \big ) + o(1)\notag \\ 
    & 
  =
  \frac{1-\pi_{k}}{\pi_{k}}
    \Var
    \big (
    G_{Y(k)|\bm{X}}
    \big ) + o(1). 
    \label{eq:cov-2}
  \end{align}
  It follows from 
  (\ref{eq:var-1})-(\ref{eq:cov-2}) that
  \begin{align}
    \label{eq:bb2}
    &
    n
    \big \{
    \Var \big(
    \widehat{\F}_{Y(k)}^{simple}
    \big)
    -
    \Var \big(
      \widetilde{\F}_{Y(k)}
    \big)
      \big \} \notag \\ 
    &\ \ \ \ \ \ = 
    \frac{1-\pi_{k}}{\pi_{k}}
    \big \{
     2
    \Cov
    \big (
      \1_{ \{Y(k) \leq \cdot \} },
      G_{Y(k)|\bm{X}}
    \big )
    -
    \Var
    \big (
    G_{Y(k)|\bm{X}}  
    \big )
    \big \} + o(1).
  \end{align}
  Applying addition and subtraction, we obtain 
  \begin{align*}
     2
    \Cov
    \big (
      \1_{ \{Y(k) \leq \cdot \} },
      G_{Y(k)|\bm{X}}
    \big )
    -
    \Var
    \big (
    G_{Y(k)|\bm{X}}  
    \big )
    & =
    \Var \big (     \1_{ \{Y(k) \leq \cdot \} }     \big )
    - 
    \Var \big (     \1_{ \{Y(k) \leq \cdot \} }  - G_{Y(k)|\bm{X}}    \big ),
  \end{align*}
  which together with (\ref{eq:bb2}) yields  
   \begin{align*}
    n
    \big \{
    \Var \big(
    \widehat{\F}_{Y(k)}^{simple}
    \big)
    -
    \Var \big(
      \widetilde{\F}_{Y(k)}
    \big) 
      \big \}
    &= 
    \frac{1-\pi_{k}}{\pi_{k}}
    \Big \{
    \Var
      \big (
      \1_{ \{Y(k) \le \cdot\} }       
      \big ) 
      -
      \Var \big (     \1_{ \{Y(k) \leq \cdot \} }  - G_{Y(k)|\bm{X}}    \big )
    \Big \}
      + o(1). 
  \end{align*}
  Under the assumption that GLMs use a canonical link function,  
  the conditional distribution is 
  an unbiased estimator of the unconditional distribution
  or 
  $\E[G_{Y(k)|\bm{X}}] = F_{Y(k)}$,
  as explained in Section \ref{sec:glm-canonical}. 
  Thus we can show that 
  $
    \Var
    \big (
      \1_{ \{Y(k) \le \cdot\} }      
      -
      G_{Y(k)|\bm{X}}
    \big )
    =
    \E
    \big [
    \big (
      \1_{ \{Y(k) \le \cdot\} }      
      -
      G_{Y(k)|\bm{X}}
    \big )^2
    \big ]
  $. 
  Also, we have that 
  $    \Var
      \big (
      \1_{ \{Y(k) \le \cdot\} }       
      \big ) 
      = 
    \E
    \big [
    \big (
      \1_{ \{Y(k) \le \cdot\} }      
      -
      F_{Y(k)}
    \big )^2
    \big ]
  $. 
  Combining these results with the mathematical display above leads to that 
    \begin{align*}
    \Var \big(
    \widehat{\F}_{Y(k)}^{simple}
    \big)
    \ge 
    \Var \big(
      \widetilde{\F}_{Y(k)}
    \big) 
       + o(n^{-1}),  
    \end{align*}
  provided that 
  $
    \E
    \big [
    \big (
      \1_{ \{Y(k) \le \cdot\} }      
      -
      F_{Y(k)}
    \big )^2
    \big ]
    \ge  
    \E
    \big [
    \big (
      \1_{ \{Y(k) \le \cdot\} }      
      -
      G_{Y(k)|\bm{X}}
    \big )^2
    \big ]
  $. 
  
  \vspace{0.5cm}
   \noindent
  \textbf{Part (b)}
    Under the correct specification, 
   an application of the law of iterated expectation yields that 
   $
      \Cov
    \big (
      \1_{ \{Y(k) \leq \cdot \} },
      G_{Y(k)|\bm{X}}
    \big )
    =
    \Var \big (    \E[  \1_{ \{Y(k) \leq \cdot \} }|\bm{X}]     \big ),
  $ 
  which together with (\ref{eq:bb2}) shows  
   \begin{align*}
    n
    \big \{
    \Var \big(
    \widehat{\F}_{Y(k)}^{simple}
    \big)
    -
    \Var \big(
      \widetilde{\F}_{Y(k)}
    \big) 
      \big \}
    &= 
    \frac{1-\pi_{k}}{\pi_{k}}
    \Var
      \big (
      G_{Y(k)|\bm{X}}       
      \big ) + o(1). 
  \end{align*}
  Since $\pi_{k} \in (0, 1)$ and
  $    \Var
      \big (
      G_{Y(k)|\bm{X}}      
      \big )  \geq 0$, 
  it follows that 
  $
    \Var \big(
    \widehat{\F}_{Y(k)}^{simple}
    \big)
    \geq 
      \Var \big(
      \widetilde{\F}_{Y(k)}
    \big) + o(n^{-1}).
  $

    Now, we shall show that, for any $k, \ell \in \K$, 
    \begin{align}
    \label{eq:cov-A}
        n
        \Cov \big(
        \widehat{\F}_{Y(k)}, \widetilde{\F}_{Y(\ell)} 
        \big)
        =
        \Cov \big(
        G_{Y(k)|\bm{X}},
        G_{Y(\ell)|\bm{X}}
        \big). 
    \end{align}
    Fix any two distinct treatment statuses $k, \ell \in \K$. 
    We can write 
  $
  \widetilde{\F}_{Y(k)}
  = 
  \big (\widehat{\F}_{Y(k)}^{simple}
  - \widehat{\P}_{\bm{X}}^{(k)} 
  G_{Y(k)|\bm{X}}
  \big ) 
  +
  \widehat{\P}_{\bm{X}} 
  G_{Y(k)|\bm{X}} 
  $
  and also  
  $\widehat{\P}_{\bm{X}}  G_{Y(k)|\bm{X}} = \sum_{\ell \in \K} 
  \hat{\pi}_{\ell} \widehat{\P}_{\bm{X}}^{(\ell)}  
  G_{Y(k)|\bm{X}}
  $.   
  Given random sample and the bi-linear property of the covariance function, 
  we can show that 
  \begin{align*}
        \Cov \Big(
        \widetilde{\F}_{Y(k)}, 
        \widetilde{\F}_{Y(\ell)} 
        \Big)
        =&
        \Cov
        \Big(
        \widehat{\F}_{Y(k)}^{simple}
        - \widehat{\P}_{\bm{X}}^{(k)} 
        G_{Y(k)|\bm{X}},
        \hat{\pi}_{\ell}\widehat{\P}_{\bm{X}}^{(\ell)}  
        G_{Y(\ell)|\bm{X}}
        \Big ) 
        \\ 
        & + \Cov
        \big(
        \hat{\pi}_{k}\widehat{\P}_{\bm{X}}^{(k)}  
        G_{Y(k)|\bm{X}}, 
        \widehat{\F}_{Y(\ell)}^{simple}
        - \widehat{\P}_{\bm{X}}^{(\ell)} 
        G_{Y(\ell)|\bm{X}}
        \big ) \\
        & + \Cov
        \big(
        \widehat{\P}_{\bm{X}}  G_{Y(k)|\bm{X}}, 
        \widehat{\P}_{\bm{X}}  G_{Y(\ell)|\bm{X}}
        \big ) 
    , 
    \end{align*}
    where it can be shown that the first and second terms on the right-hand side are equal zero, due to the fact that
    $     \E \big [   \widehat{\F}_{Y(k)}^{simple}(y)
        - \widehat{\P}_{\bm{X}}^{(k)}
        G_{Y(k)|\bm{X}}|\bm{X}_{1}, \dots, \bm{X}_{n}
        \big ] = 0
    $. 
    Furthermore, under the random sample assumption, we have that 
    $ \Cov
        \big(
        \widehat{\P}_{\bm{X}}  G_{Y(k)|\bm{X}}, 
        \widehat{\P}_{\bm{X}}  G_{Y(\ell)|\bm{X}}
        \big ) =
        n^{-1}
      \Cov
        \big(
         G_{Y(k)|\bm{X}}, 
         G_{Y(\ell)|\bm{X}}
        \big )   $. 
        Thus, we can prove the equality in (\ref{eq:cov-A}).

    Next, we compare the variance-covariance matrices of the simple and regression-adjusted estimators.
    Define 
    $\sigma_{k, \ell}:= 
     \Cov
        \big(
         G_{Y(k)|\bm{X}}, 
         G_{Y(\ell)|\bm{X}}
     \big ) 
    $
    for $k, \ell \in \K$. 
    By applying the result from part (a) of this theorem and the one in (\ref{eq:cov-A}), we are able to show that
    \begin{align*}
     n 
    \big \{
     \Var \big (
    \widehat{\gamma}^{simple}
    \big ) 
    -
    \Var \big (
    \widetilde{\gamma}
    \big )  
    \big \} 
    & = 
    \left [
    \begin{array}{cccc}
     \frac{1-\pi_{1}}{\pi_{1}}
    \sigma_{11},
    & 
    - \sigma_{12},   
        & 
        \dots,  
        & 
        -\sigma_{1K}  \\
        -\sigma_{21},   
        &     
        \frac{1-\pi_{2}}{\pi_{2}}
    \sigma_{22}, 
        & \dots,  
        & 
        -\sigma_{2K} 
        \\ 
          \vdots &\vdots&\ddots & \vdots\\ 
        -\sigma_{K1},   
        &     
        -\sigma_{K2}, 
        & \dots,  
        & 
        \frac{1-\pi_{K}}{\pi_{K}}
    \sigma_{KK}
    \end{array}
    \right ]   + o(1) \\
    & = 
    \E 
    \Big [
     \big (\gamma(\bm{X}) - \E[\gamma(\bm{X})]\big) 
     A
     \big(\gamma(\bm{X}) - \E[\gamma(\bm{X})]\big)^{\top} 
    \Big ] + o(1),
\end{align*}
where 
\begin{align*}
    \gamma(\bm{X}) = [G_{Y(1)|\bm{X}}, \dots, G_{Y(K)|\bm{X}}]^{\top}
    \ \ \ \mathrm{and} \ \ \ 
    A:= 
        \left [
    \begin{array}{cccc}
    {\pi}_{1}^{-1} - 1,
    & 
    -1,   
        & 
        \dots,  
        & 
        -1 \\
        -1,   
        &     
    {\pi}_{2}^{-1} - 1,
        & \dots,  
        & 
        -1 
        \\ 
          \vdots &\vdots&\ddots & \vdots\\ 
        -1,   
        &     
        -1, 
        & \dots,  
        & 
    {\pi}_{K}^{-1} - 1
    \end{array}
    \right ] .
    \end{align*}
    Lemma \ref{lemma:L-ind}
    with $\sum_{k=1}^{K} \pi_{k} =1$
    shows that, 
    for an arbitrary vector $v:=(v_{1}, \dots, v_{K})^{\top} \in \R^{k}$,  
    \begin{align*}
     & v^{\top}
     \big (\gamma(\bm{X}) - \E[\gamma(\bm{X})]\big) 
     A
     \big(\gamma(\bm{X}) - \E[\gamma(\bm{X})]\big)^{\top} 
     v \\ 
      & \ \ = 
     \sum_{k \in \K}
     \frac{
     v_{k}^{2}
     \big ( 
     G_{Y(k)|\bm{X}}
     -
     \E[G_{Y(k)|\bm{X}}]
     \big )^{2}
     }{
        \pi_{k}
     } 
      - 
     \bigg (
     \sum_{k \in \K}
     v_{k}
     \big ( 
     G_{Y(k)|\bm{X}}
     -
     \E[G_{Y(k)|\bm{X}}]
     \big )
     \bigg )^{2} \\ 
     & \ \ \  = 
     \frac{1}{2}
    \sum_{k \in \K} 
    \sum_{\substack{\ell \in \K \\ \ell \neq k}}
     \frac{
     \big \{ 
         v_{k}
         \big ( 
         G_{Y(k)|\bm{X}}
         -
         \E[ G_{Y(k)|\bm{X}} ]
         \big )
         {\pi}_{\ell}
         -
         v_{\ell}
         \big ( 
         G_{Y(\ell)|\bm{X}}
         -
         \E[
         G_{Y(\ell)|\bm{X}}
         ]
         \big )
         {\pi}_{k}     \big \} ^{2}
     }{
     {\pi}_{k}
     {\pi}_{\ell}
     } . 
    \end{align*}
    It follows that 
    \begin{align*} 
       v^{\top}
    \big \{
     \Var \big (
    \widehat{\gamma}^{simple}
    \big ) 
    -
    \Var \big (
    \widetilde{\gamma}
    \big )  
    \big \}
    v
    = 
    \frac{1}{2}
    \sum_{k \in \K} 
    \sum_{\substack{\ell \in \K \\ \ell \neq k}}
     \frac{
     \Var 
     \Big (
         v_{k}
         G_{Y(k)|\bm{X}}
         {\pi}_{\ell}
         -
         v_{\ell}
         G_{Y(\ell)|\bm{X}}
         {\pi}_{k}     
         \Big )
     }{
     {\pi}_{k}
     {\pi}_{\ell}
     } 
    + o(n^{-1}). 
    \end{align*}
    Because 
    $ \Var 
     \big (
         v_{k}
         G_{Y(k)|\bm{X}}
         {\pi}_{\ell}
         -
         v_{\ell}
         G_{Y(\ell)|\bm{X}}
         {\pi}_{k}     
         \big )
         \geq  0$
         for any 
         $k, \ell \in \K$ with $ k \neq \ell$,
  the above equality implies the desired positive semi-definiteness result.  

  Furthermore, the positive definite result holds when 
  $ \Var 
     \big (
         v_{k}
         G_{Y(k)|\bm{X}}
         {\pi}_{\ell}
         -
         v_{\ell}
         G_{Y(\ell)|\bm{X}}
         {\pi}_{k}     
         \big )
         >  0$
    for any $v \in \R^{K}$ with $v \neq 0$
    and 
    for any 
    $k, \ell \in \K$ with $ k \neq \ell$. 
    Because 
    $v \in \R^{K}$ is chosen arbitrarily except  $v \neq 0$
    and $\pi_{k} \in (0,1)$ for all $k \in \K$, 
    the condition for the positive definiteness can be written as  
    $
    \Var 
         \big (
         G_{Y(k)|\bm{X}}
         -
         r
         \cdot 
         G_{Y(\ell)|\bm{X}}
         \big )
         > 0
         $
         for any  $r \in \R$
         and 
         for any 
         $k, \ell \in \K$ with $ w \neq \ell$. 
\end{proof}

\subsection{Distributional Regression}
\label{subsection:DR}

In this subsection, we present
theoretical results of the distributional regression estimator
and the conditional distribution estimator. 
To obtain the asymptotic result in the subsequent section, 
establishing the convergence rate of the conditional
distribution estimator is sufficient. However, for completeness, 
we also provide their limit distributions. 
We first obtain a preliminary lemma, which establishes
the Donskerness of the first derivative of
the log likelihood.

\vspace{0.5cm}
\begin{lemma}
  \label{lemma:donsker}
  Suppose that Assumptions \ref{as:as1} and \ref{as:as2} hold. Then,  
  the function classes 
  $\{ \nabla \widehat{\ell}_{k}(\beta_{k}; y): (\beta_{k}, y)
  \in \mathcal{B} {\times} \mathcal{Y}\}$
  and 
  $\{G(y|\bm{X}): y \in \mathcal{Y} \}$
  are Donsker
  with a square-integrable envelope.    
\end{lemma}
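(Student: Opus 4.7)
The approach is to represent each function class as built from a few elementary Donsker classes through operations—sums, bounded products, and compositions with Lipschitz maps—that preserve the Donsker property, and then verify integrability of the envelope. The main tools are the standard preservation theorems for empirical processes (e.g., van der Vaart and Wellner, 1996, Sections 2.6 and 2.10).

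First I would write down the individual score under the canonical-link assumption in Assumption \ref{as:as2}(a). For the binary outcome $\1_{\{Y_i \le y\}}$ with canonical link $\Lambda$, the individual contribution to $\nabla \widehat{\ell}_k(\beta; y)$ takes the simple form
\begin{equation*}
  s_i(\beta; y) \;=\; W_{k,i}\cdot\bigl[\1_{\{Y_i \le y\}} - \Lambda\bigl(T(\bm{X}_i)^\top \beta\bigr)\bigr]\cdot T(\bm{X}_i),
\end{equation*}
up to the $n/n_k$ normalization, which converges in probability to the constant $1/\pi_k$ and may be absorbed in subsequent arguments. I would then introduce three building blocks. The class $\mathcal{F}_1 = \{\1_{\{Y \le y\}} : y \in \mathcal{Y}\}$ is a VC-subgraph class of half-line indicators, hence Donsker with constant envelope $1$. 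The class $\mathcal{F}_2 = \{\Lambda(T(\bm{X})^\top \beta) : \beta \in \mathcal{B}\}$ is a parametric class indexed by a compact finite-dimensional set; by Assumption \ref{as:as2}(a) the link is continuously differentiable with bounded derivative on bounded sets, so $\beta \mapsto \Lambda(T(\bm{X})^\top \beta)$ is pointwise Lipschitz in $\beta$, and the standard parametric entropy bound yields the Donsker property with constant envelope $1$. Since both classes have uniformly bounded envelopes, the pointwise difference $\mathcal{F}_1 - \mathcal{F}_2$ is Donsker. Multiplying by the fixed measurable vector-valued function $T(\bm{X})$, whose square-integrability $\E\|T(\bm{X})\|^2 < \infty$ is implicit in the finite and non-degenerate Hessian requirement of Assumption \ref{as:as2}(c), preserves the Donsker property and supplies a square-integrable envelope proportional to $\|T(\bm{X})\|$. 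Finally, the bounded multiplier $W_{k,i} \in \{0,1\}$ preserves the Donsker property, establishing the first claim.

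For the second class $\{G(y|\bm{X}) : y \in \mathcal{Y}\}$, I would simply note that $G(y|\bm{X}) = \Lambda(T(\bm{X})^\top \beta(y))$ and that $\beta(y) \in \mathcal{B}$ for every $y \in \mathcal{Y}$ by Assumption \ref{as:as2}(b); hence this class is a subset of $\mathcal{F}_2$ and therefore inherits the Donsker property with constant (square-integrable) envelope $1$. The delicate point in the argument is the product step: arbitrary products of Donsker classes need not be Donsker, so one must first exploit that $\mathcal{F}_1$ and $\mathcal{F}_2$ both have envelopes bounded by $1$ to handle their product/difference, and only afterwards multiply by the unbounded but square-integrable factor $T(\bm{X})$, invoking the preservation result for products of a uniformly bounded Donsker class with a fixed square-integrable function. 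Everything else reduces to standard bookkeeping.
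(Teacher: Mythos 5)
Your proposal is correct and takes essentially the same route as the paper's proof: decompose the score into the indicator class $\{\1_{\{Y\le y\}}\}$ and the link-of-linear-index class (both VC-subgraph/parametric Donsker building blocks), invoke standard preservation results for differences and products, and supply a square-integrable envelope proportional to $\|T(\bm{X})\|$, with the second class handled as a subset of the link-of-linear-index class. The only cosmetic difference is that you use the canonical-link identity to write the score as $(\1_{\{Y\le y\}}-\Lambda(T(\bm{X})^{\top}\beta))T(\bm{X})$, whereas the paper keeps the general weight $\lambda/\{\Lambda(1-\Lambda)\}$ and absorbs it into a Lipschitz transformation of the same VC classes; nothing essential changes.
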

\begin{proof}
  We define
  the function classes, for each $k \in \K$,
  \begin{eqnarray*}
    \mathcal{F}_{k}:=\{ 
    \bm{x} \mapsto T(\bm{x})^{\top} \beta_{k}: \beta_{k} \in \mathcal{B}\}
    \ \ \mathrm{and} \ \
    \mathcal{G}_{k}:=\{y \mapsto \1_{\{y \le v\}}: v \in \mathcal{Y}\}.
  \end{eqnarray*}
  Lemma 2.6.15 of
  \cite{van1996weak}
  shows that 
  each of 
  $\{\mathcal{F}_{k} \}_{k=1}^{K}$
  and 
  $\{ \mathcal{G}_{k} \}_{k=1}^{K}$
  are VC-subgraph classes.
  We can view 
  each element
  of the regressors 
  $T(\bm{x})\equiv[T^{(1)}(\bm{x}), \dots, T^{(d)}(\bm{x})]^{\top} $
  as a function,
  and then
  the function classes  
  $\{T(\bm{x})^{(j)}: j=1, \dots, d\}$
  are also 
  VC-subgraph classes.
  Let
  \begin{eqnarray*}
    \mathcal{H}:=
    \cup_{k=1}^{K}
    \big \{
    [\Lambda(\mathcal{F}_{k}) - \mathcal{G}_{k}]R(\mathcal{F}_{k})T^{(j)}(\bm{X})
    \big \}_{j=1}^{p}
  \end{eqnarray*}
  with 
  $\Lambda(\cdot)$
  being the inverse link function
  and
  $R(\cdot)=\lambda(\cdot)/\{\Lambda(\cdot)[1-\Lambda(\cdot)]\}$.
  The class $\mathcal{H}$ consists of a Lipschitz transformation
  of 
  VC-subgraph classes
  with Lipschitz coefficients 
  and 
  bounded from above by $\| T(\bm{X}) \|$
  up to some constant factor.
  Also, an envelop function for $\mathcal{H}$
  is 
  bounded from above by $\| T(\bm{X}) \|$ 
  up to some constant factor
  and thus is square integrable under Assumption A1.
  A Lipschitz composition of a Donsker class is a Donsker class.
  See 19.20 of \cite{van2000asymptotic}.
  Thus, the desired result for the first class is obtained.
  A simpler argument prove the desired conclusion
  for the other class. 
\end{proof}
\vspace{0.5cm}

The following lemma obtains
an asymptotic linear representation
for the estimator of parameters in
the distributional regression.

\vspace{0.5cm}
\begin{lemma}
  \label{lemma:argmin}
  Suppose that Assumptions \ref{as:as1} -\ref{as:as2} hold. Then,
  for each $k \in \mathcal{K}$,
  we have, uniformly in $y \in \mathcal{Y}$,
  \begin{eqnarray*}
    \sqrt{n_{k}}
    \big (
    \widehat{\beta}_{k}(y)
    -
    \beta_{k}(y)
    \big )
    =
    -
    H_{k}(y)^{-1}
    n_{k}^{-1/2}
    \sum_{i=1}^{n}
    W_{i, k}
    \nabla \ell_{i}
    \big (
    \beta_{1}(y), y  
    \big)
    + o_p(1).
  \end{eqnarray*} 
  
\end{lemma}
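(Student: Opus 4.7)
The statement is a standard $Z$-estimator expansion for the quasi-MLE, but with the added wrinkle that it must hold \emph{uniformly} over the indexing point $y \in \mathcal{Y}$. My plan is to combine a Taylor expansion of the score at $\beta_k(y)$ with the uniform Donsker and Glivenko--Cantelli properties that follow from Lemma \ref{lemma:donsker}.

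\emph{Step 1: Uniform consistency.} Fix $k \in \mathcal{K}$. By Assumption \ref{as:as2}(a) the map $\beta \mapsto \widehat{\ell}_k(\beta;y)$ is concave, by (b) the population maximizer $\beta_k(y)$ is unique and interior to the compact set $\mathcal{B}$, and by (c) the Hessian is uniformly negative definite over $\mathcal{Y}$. Using the argument in Lemma \ref{lemma:donsker} applied to $\{\ell_i(\beta;y):(\beta,y)\in\mathcal{B}\times\mathcal{Y}\}$ (a Lipschitz transform of VC--subgraph classes with a square-integrable envelope) yields uniform convergence $\sup_{(\beta,y)}|\widehat{\ell}_k(\beta;y) - \E[\widehat{\ell}_k(\beta;y)]| = o_p(1)$. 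Combined with concavity and identifiability, a standard argmax continuous-mapping argument (e.g.\ Theorem 5.9 of van der Vaart 2000, applied along the $y$-direction) gives $\sup_{y\in\mathcal{Y}}\|\widehat{\beta}_k(y)-\beta_k(y)\| = o_p(1)$.

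\emph{Step 2: Taylor expansion of the score.} By uniform consistency and the interiority part of Assumption \ref{as:as2}(b), with probability tending to one $\widehat{\beta}_k(y)$ lies in the interior of $\mathcal{B}$ uniformly in $y$, so the first-order condition $\nabla\widehat{\ell}_k(\widehat{\beta}_k(y);y) = 0$ holds. A mean-value expansion gives, for some $\bar\beta_k(y)$ on the segment between $\widehat{\beta}_k(y)$ and $\beta_k(y)$,
\begin{equation*}
    0 = \nabla\widehat{\ell}_k(\beta_k(y);y) + \nabla^{2}\widehat{\ell}_k(\bar\beta_k(y);y)\bigl(\widehat{\beta}_k(y) - \beta_k(y)\bigr).
\end{equation*}
I would then show that $\sup_{y}\|\nabla^{2}\widehat{\ell}_k(\bar\beta_k(y);y) - H_k(y)\| = o_p(1)$: the Hessian is a smooth function of $\beta$ under Assumption \ref{as:as2}(a) (twice continuously differentiable link), the class of Hessian components indexed by $(\beta,y)$ is uniformly bounded and VC (again by the Lemma \ref{lemma:donsker} construction), so a uniform law of large numbers applies, and the plug-in error is controlled by Step 1 together with Lipschitz continuity of the Hessian in $\beta$. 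Assumption \ref{as:as2}(c) ensures $H_k(y)$ is invertible with uniformly bounded inverse, hence $\nabla^{2}\widehat{\ell}_k(\bar\beta_k(y);y)^{-1} = H_k(y)^{-1} + o_p(1)$ uniformly.

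\emph{Step 3: Stochastic equicontinuity and conclusion.} Lemma \ref{lemma:donsker} gives that $\{\sqrt{n}\,\nabla\widehat{\ell}_k(\beta_k(y);y):y\in\mathcal{Y}\}$ is asymptotically tight in $\ell^{\infty}(\mathcal{Y})^{d}$; combined with $\hat\pi_k = \pi_k + o_p(1)$ the normalized score $\sqrt{n_k}\,\nabla\widehat{\ell}_k(\beta_k(y);y)$ is also tight uniformly in $y$. Solving the Taylor expansion,
\begin{equation*}
    \sqrt{n_k}\bigl(\widehat{\beta}_k(y) - \beta_k(y)\bigr) = -\bigl[H_k(y)^{-1} + o_p(1)\bigr]\sqrt{n_k}\,\nabla\widehat{\ell}_k(\beta_k(y);y),
\end{equation*}
and absorbing the $o_p(1)\cdot O_p(1)$ cross term (uniformly in $y$) into a $o_p(1)$ remainder yields the claimed representation, since $\sqrt{n_k}\,\nabla\widehat{\ell}_k(\beta_k(y);y) = n_k^{-1/2}\sum_{i=1}^{n}W_{i,k}\nabla\ell_i(\beta_k(y);y)$.

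\emph{Main obstacle.} The delicate point is the uniformity in $y$: pointwise arguments are routine, but propagating them to $\ell^{\infty}(\mathcal{Y})$ requires the Donsker/equicontinuity input from Lemma \ref{lemma:donsker} for the score and a corresponding uniform law for the Hessian class. Ensuring that the remainder term in the Taylor expansion is $o_p(1)$ \emph{uniformly} rather than only pointwise is the step that needs the most care, and it relies crucially on the compactness of $\mathcal{B}$, the uniform continuity of the Hessian in $\beta$, and the uniform bound on $\|H_k(y)^{-1}\|$ implied by Assumption \ref{as:as2}(c).
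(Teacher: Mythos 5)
Your outline is correct, but it follows a genuinely different route from the paper. You use the classical Z-estimator recipe: establish uniform consistency first, then apply a mean-value expansion of the score around $\beta_k(y)$, replace the Hessian at the intermediate point by $H_k(y)$ via a uniform law of large numbers over a $(\beta,y)$-neighborhood, and invert. The paper instead works with the localized objective $\widehat{Q}(\delta;y)$, proves a uniform quadratic approximation $\sup_{y}\sup_{\|\delta\|\le M}|\widehat{Q}(\delta;y)-\widetilde{Q}(\delta;y)|=o_p(n_k^{-1})$ via the convexity lemma of Pollard (1991) and Kato (2009), and then uses a concavity difference-quotient argument to bound $\sup_y\|\hat\delta(y)-\tilde\delta(y)\|$ directly, where $\tilde\delta(y)=-H_k(y)^{-1}\sqrt{n_k}\,\nabla\widehat{Q}(0;y)$; uniform tightness of $\tilde\delta(y)$ comes from the same Donsker property of the score (Lemma \ref{lemma:donsker}) that you invoke. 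The trade-off: your route needs a separate uniform consistency step and a ULLN for the Hessian class that is uniform over a neighborhood of $\beta_k(y)$ as well as over $y$ (requiring, e.g., $\E\|T(\bm{X})\|^{2}<\infty$ and Lipschitz control of the Hessian in $\beta$), plus the usual fix for the fact that the mean-value theorem does not hold for vector-valued scores (expand coordinate-wise or use the integral form of the remainder); the paper's route exploits concavity so that no preliminary consistency step is needed and the Hessian ULLN is required only at the true parameter $\delta=0$, at the cost of the more specialized convexity-lemma machinery. Both arguments hinge on Assumptions \ref{as:as2}(a)--(c) (concavity, interior unique maximizer, uniformly negative-definite $H_k(y)$) and on the Donsker/equicontinuity input for the score, so your proposal is a valid alternative provided you spell out the uniform-in-$y$ well-separation needed in your Step 1 and the neighborhood ULLN in your Step 2.
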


\begin{proof}
  Let $k \in \K$ be fixed.
  To simplify notation,
  we define a localized objective function,
  \begin{eqnarray*}
    \widehat{Q}(\delta;y)
    :=
    \frac{1}{n_{k}}
    \sum_{i=1}^{n}
    W_{i, k}
    \big \{ 
    \ell_{i, y}
    \big (\beta_{k}(y) + n_{k}^{-1/2}\delta\big)
    -
    \ell_{i, y}
    \big (\beta_{k}(y)\big)
    \big \}.
  \end{eqnarray*}
  Then, we can write the estimator 
  $\widehat{\delta}(y) := \sqrt{n_{k}}
  \big (
  \widehat{\beta}_{k}(y) - \beta_{k}(y)
  \big)$
  as the solution for 
  $\max_{\delta \in \R^{d}}
  \widehat{Q}
  (\delta;y)
  $.

  Let
  $M$ be a finite positive constant.
  Under Assumption A2,
  the objective function 
  $\delta \mapsto \widehat{Q}(\delta; y)$
  is twice continuously differentiable,
  we can show that,
  for each fixed $\delta$
  with $\|\delta \| \le M$,
  uniformly
  in $y \in \mathcal{Y}$,
  \begin{eqnarray*}
    n_{k}
    \widehat{Q}(\delta; y)
    =
    \delta^{\top}
    \sqrt{n_{k}}
    \nabla
    \widehat{Q}(0; y)
    +
    \frac{1}{2}
    \delta^{\top}
    \nabla^{2}
    \widehat{Q}(0; y)
    \delta
    +
    o(n_{k}^{-1}\|\delta\|^2).
  \end{eqnarray*}
  Also,
  an application of the uniform law of large numbers
  for the Hessian matrix
  shows that 
  $ 
  \nabla^{2}
  \widehat{Q}(0; y)
  \to^p
  H_{k}(y)
  $
  uniformly in $y \in \mathcal{Y}$.
  Thus, for each $\delta$
  with $\|\delta \| \le M$,
  we can show that 
  $
  \sup_{y \in \mathcal{Y}}
  \big |
  \widehat{Q}(\delta;y)
  -
  \widetilde{Q}(\delta;y)
  \big |
  =
  o_p(n^{-1})
  $,
  where
  \begin{eqnarray*}
    n_{k}
    \widetilde{Q}_{}(\delta; y)
    =
    \delta^{\top}
    \sqrt{n_{k}}
    \nabla
    \widehat{Q}(0; y)
    +
    \frac{1}{2}
    \delta^{\top}
    H_{k}(y)
    \delta.
  \end{eqnarray*} 
  The convexity lemma
  \citep[see][]{pollard1991asymptotics, kato2009asymptotics}
  extends
  the point-wise convergence 
  with respect to $\delta$
  to the uniform converges and thus,
  under Assumption A2,
  \begin{eqnarray}
    \label{eq:Q2}
    \sup_{y \in \mathcal{Y}}
    \sup_{\delta : \|\delta\| \le M}
    \big |
    \widehat{Q}(\delta;y)
    -
    \widetilde{Q}(\delta;y)
    \big |
    =
    o_p(n_{k}^{-1}).
  \end{eqnarray}

  Because 
  $\widetilde{Q}(\delta;y)$
  is maximized at 
  $\tilde{\delta}(y)
  :=
  -
  H_{k}(y)^{-1}
  \sqrt{n_{k}}
  \nabla
  \widehat{Q}(0; y)
  $,
  simple algebra can show that,
  for any $\delta$ and for some constant $c>0$,
  \begin{align}
    \widetilde{Q}\big(\tilde{\delta}(y);y \big)
    -
    \widetilde{Q}\big (\delta ;y \big)
    &=
    -
    \frac{1}{2 n_{k}}
    \big (
    \tilde{\delta}(y) - \delta
    \big )^{\top}
    H_{k}(y)
    \big (
    \tilde{\delta}(y) - \delta
      \big )
    \notag \\
      & \ge 
    \frac{c}{2 n_{k}}
    \| 
    \tilde{\delta}(y) - \delta
    \|^2,
    \label{eq:lbd1}
  \end{align}
  where
  the last inequality is
  due to that
  $H_{k}(y)$
  is negative definite
  under Assumption A4.  
  For any subset $D$ including $\tilde{\delta}(y)$,
  an application of the triangle inequality obtains
  \begin{align}
    2 
    \sup_{\delta \in D}
    |
    \widehat{Q}
    (\delta;y)
    -
    \widetilde{Q}
    (\delta;y)
    |
    & \ge
    \sup_{\delta \in D}
    \big \{
    \widetilde{Q}\big(\tilde{\delta}(y);y \big)
    -
    \widetilde{Q}\big (\delta;y \big)
      \big \}
    \notag \\
      & \ \ \     - 
    \sup_{\delta \in D}
    \big \{
    \widehat{Q}
    \big (\tilde{\delta}(y); y\big)
    -
    \widehat{Q}
    (\delta; y)
    \}. 
    \label{eq:tri5}
  \end{align}
  
  Let $\eta>0$ be an arbitrary constant.
  Because of the concavity under Assumption A2,
  difference quotients
  satisfy that,
  for any $\lambda > \eta$
  and
  for any $v \in S^{k}$
  with the unit sphere  
  $S^{k} \subseteq \R^{k}$,
  \begin{eqnarray*}
    \frac{
    \widehat{Q}
    \big (
    \tilde{\delta}(y)
    +
    \eta v; y
    \big)
    -
    \widehat{Q}
    \big(
    \tilde{\delta}(y); y
    \big)
    }{
    \eta
    }
    \ge
    \frac{
    \widehat{Q}
    \big(
    \tilde{\delta}(y)
    +
    \lambda v ; y
    \big)
    -
    \widehat{Q}
    \big(
    \tilde{\delta}(y) ;y
    \big)
    }{
    \lambda
    }.
  \end{eqnarray*}
  This inequality with
  a set $D_{n, \eta}(y):=
  \{\delta \in \R^{k}: \|\delta - \tilde{\delta}(y) \| \le \eta \}$
  implies
  that, 
  given the event 
  $\{
  \sup_{y \in \mathcal{Y}}\| \hat{\delta}(y) - \tilde{\delta}(y) \|
  \ge \eta
  \}$,
  we have,
  for any $y \in \mathcal{Y}$,
  \begin{eqnarray}
    \label{eq:Q1}
    \sup_{\delta \in D_{n_{k}, \eta}(y)}
    \widehat{Q}
    (\delta; y)
    -
    \widehat{Q}
    \big (\tilde{\delta}(y); y\big) \ge 0,
  \end{eqnarray}
  where
  the last inequality is due to that 
  $\widehat{Q}
  \big (\hat{\delta}(y); y\big)
  -
  \widehat{Q}
  \big ( \tilde{\delta}(y); y\big)
  \ge  0
  $,
  by definition of
  $\hat{\delta}(y)$.
  It follows from
  (\ref{eq:lbd1})-(\ref{eq:Q1}) that,
  given the event 
  $\{
  \sup_{y \in \mathcal{Y}}\| \hat{\delta}(y) - \tilde{\delta}(y) \|
  \ge \eta
  \}$, we have 
  \begin{eqnarray*}
    \sup_{\delta \in D_{n_{k}, \eta}(y)}
    |
    \widehat{Q}
    (\delta;y)
    -
    \widetilde{Q}
    (\delta;y)
    |
    \ge
    \frac{c}{4n_{k}} \eta^{2}. 
  \end{eqnarray*}
  Lemma \ref{lemma:donsker}
  shows that 
  $
  \nabla
  \widehat{Q}
  (0;y)
  =
  \nabla
  \widehat{\ell}_{k}
  \big (\beta_{k}(y); y\big)
  $  
  is Donsker. 
  Applying
  the Donsker property
  for 
  $\tilde{\delta}(y)
  :=
  -
  H_{k}(y)^{-1}
  \sqrt{n_{k}}
  \nabla
  \widehat{Q}(0; y)
  $,
  we can show that,
  for any $\xi>0$,
  there exists a constant $C$
  such that 
  $\Pr(\sup_{y \in \mathcal{Y}}\| \tilde{\delta}(y) \| \ge C)
  \le \xi
  $
  for sufficiently large $n_{k}$. 
  Thus, the above display implies that 
  \begin{align*}
         \Pr
    \bigg (
    \sup_{y \in \mathcal{Y}}\| \hat{\delta}(y) - \tilde{\delta}(y) \|
    \ge \eta 
    \bigg ) 
    \le
    \Pr 
    \bigg (
    \sup_{y \in \mathcal{Y}}
    \sup_{\delta: \|\delta\| \le \eta + C}
    \big |
    \widehat{Q}
    (\delta)
    -
    \widetilde{Q}
    (\delta)
    \big | 
       >
    \frac{c }{4n_{k}} \eta^2
    \bigg )
    +
    \xi,
  \end{align*}
  for sufficiently large $n_{k}$.
  It follows from (\ref{eq:Q2}) that
  the first term on the right side of the above equation
  converges to 0 as $n_{k} \to \infty$.
  Thus, we obtain the desired conclusion.
\end{proof}
\vspace{0.5cm}

The theorem below present the asymptotic distribution
of the estimator of the parameters of the DR model.

\vspace{0.5cm}
\begin{proposition}
  \label{pro:est}

  Under Assumptions \ref{as:as1} and \ref{as:as2},
  we have, for each $k \in \mathcal{K}$,
  \begin{eqnarray*}
    \sqrt{n_{k}}
    \big (
    \widehat{\beta}_{k}(\cdot)
    -
    \beta_{k}(\cdot)
    \big )
    \rightsquigarrow
    \mathbb{V}_{k}(\cdot)
    \ \ \mathrm{in} \ \
    \ell^{\infty}(\mathcal{Y})^{p},
  \end{eqnarray*}
  where
  $\mathbb{V}_{k}(y)$ is
  a mean-zero Gaussian process
  over  
  $\mathcal{Y}$,
  and
  its covariance function is
  given by
  $
  H_{k}(y)^{-1}
  \Sigma_{k}(y, y')
  H_{k}(y')^{-1} 
  $
  for $y, y'\in \mathcal{Y}$
  with
  $
  \Sigma_{k}(y,y'):=
  \E[ 
  W_{i,k}
  \nabla_{\beta} \ell_{i} \big( \beta_{k}(y), y \big)
  \nabla_{\beta} \ell_{i} \big( \beta_{k}(y), y \big)^{\top}
  ]$.
\end{proposition}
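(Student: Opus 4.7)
The plan is to combine the uniform Bahadur-type linearization from Lemma \ref{lemma:argmin} with the empirical process CLT delivered by Lemma \ref{lemma:donsker}, and conclude by applying a continuous linear mapping.

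First, Lemma \ref{lemma:argmin} gives, uniformly in $y \in \mathcal{Y}$,
\begin{equation*}
\sqrt{n_{k}}\bigl(\widehat{\beta}_{k}(y) - \beta_{k}(y)\bigr) = -H_{k}(y)^{-1} Z_{n,k}(y) + o_{p}(1),
\end{equation*}
where $Z_{n,k}(y) := n_{k}^{-1/2}\sum_{i=1}^{n} W_{i,k}\,\nabla_{\beta}\ell_{i}(\beta_{k}(y), y)$. Because the remainder is uniformly $o_{p}(1)$, it suffices to derive the weak limit of $Z_{n,k}$ in $\ell^{\infty}(\mathcal{Y})^{p}$ and transform it through a continuous map.

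Second, I would deduce $Z_{n,k} \rightsquigarrow Z_{k}$ in $\ell^{\infty}(\mathcal{Y})^{p}$ from Lemma \ref{lemma:donsker}, which establishes that the indexed score class is Donsker with a square-integrable envelope. The population first-order condition guaranteed by Assumption \ref{as:as2}(b) gives $\E[W_{i,k}\nabla_{\beta}\ell_{i}(\beta_{k}(y),y)] = 0$ for every $y$, so $Z_{n,k}$ is centered. Combined with $n_{k}/n \to \pi_{k} > 0$, which follows from random sampling in Assumption \ref{as:as1}, the standard empirical process CLT yields a tight mean-zero Gaussian limit $Z_{k}$ whose covariance kernel equals the limiting second moment of the summands and matches $\Sigma_{k}(y, y')$ up to the scaling factor induced by $n/n_k \to \pi_k^{-1}$.

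Third, the continuous mapping argument completes the proof. Under Assumption \ref{as:as2}(c), $\sup_{y \in \mathcal{Y}}\|H_{k}(y)^{-1}\| < \infty$ and (by twice continuous differentiability of $\Lambda$) the map $y \mapsto H_{k}(y)^{-1}$ is continuous, so $\Phi: g \mapsto -H_{k}(\cdot)^{-1}g(\cdot)$ is a bounded, hence continuous, linear operator on $\ell^{\infty}(\mathcal{Y})^{p}$. Applying $\Phi$ to $Z_{n,k} \rightsquigarrow Z_{k}$ yields $\sqrt{n_{k}}(\widehat{\beta}_{k} - \beta_{k}) \rightsquigarrow \mathbb{V}_{k} := \Phi(Z_{k})$, and $\mathbb{V}_{k}$ is a mean-zero Gaussian process with the stated covariance kernel $H_{k}(y)^{-1}\Sigma_{k}(y,y')H_{k}(y')^{-1}$.

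The main obstacle I anticipate is applying the Donsker property at the \emph{drifting} evaluation $\beta = \beta_{k}(y)$, whose value changes with the process index. This is precisely why Lemma \ref{lemma:donsker} was stated over the product index $\mathcal{B} \times \mathcal{Y}$: specializing along the continuous curve $y \mapsto \beta_{k}(y)$ preserves Donskerness and, in particular, delivers the uniform asymptotic equicontinuity needed to pass from pointwise to uniform convergence. Once this is in place, the remaining steps are routine.
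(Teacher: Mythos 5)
Your proposal is correct and follows essentially the same route as the paper's proof: the uniform linearization from Lemma \ref{lemma:argmin}, weak convergence of the score process $n_k^{-1/2}\sum_i W_{i,k}\nabla_\beta \ell_i(\beta_k(y),y)$ obtained from the Donsker property of Lemma \ref{lemma:donsker} together with a multivariate CLT for the finite-dimensional distributions, and then the continuous linear map $g \mapsto -H_k(\cdot)^{-1}g(\cdot)$. The only differences are cosmetic: the paper handles the drifting index by noting (via the implicit function theorem) that $y \mapsto \beta_k(y)$ is smooth and invoking stochastic equicontinuity, whereas you restrict the Donsker class over $\mathcal{B}\times\mathcal{Y}$ to the curve $\{(\beta_k(y),y)\}$, which is equally valid; and your parenthetical about the $n/n_k \to \pi_k^{-1}$ scaling of the limiting covariance flags a normalization issue that the paper's own proof glosses over when it identifies the covariance kernel with $\Sigma_k$.
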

\begin{proof}
  Fix $k \in \mathcal{K}$. 
  In this proof, we consider the case where
  outcome variable is a continuous random variable,
  while a simpler proof can apply for the case of discrete outcome variables.
  Lemma \ref{lemma:argmin} implies that,
  uniformly in $y \in \mathcal{Y}$,
  \begin{eqnarray*}
    \sqrt{n_{k}}
    \big (
    \hat{\beta}_{k}(y)
    -
    \beta_{k}(y)
    \big )
    =
    -
    H_{k}(y)^{-1}
    \sqrt{n_{k}}
    \nabla
    \widehat{\ell}_{k}
    \big(\beta_{k}(y) ,y \big)
    + o_p(1).
  \end{eqnarray*}
  By the implicit function theorem, we can show that
  $y \mapsto \beta_{k}(y)$
  is differentiable uniformly over $y \in \mathcal{Y}$.
  Thus, the empirical process
  $\nabla    \widehat{\ell}_{k} \big(\beta_{k}(y) ,y \big)$
  is 
  stochastically equicontinuous over $\mathcal{Y}$.
  Given iid observations under Assumption A1,
  the finite dimensional convergence follows from 
  a multivariate central limit theorem.
  This with
  the stochastic equicontinuity 
  implies that 
  $\nabla    \widehat{\ell}_{k} \big(\beta_{k}(y) ,y \big)
  \rightsquigarrow
  \mathbb{G}_{k}(\cdot)
  $
  in
  $\ell^{\infty}(\mathcal{Y})^{k}$,
  where $\mathbb{G}(\cdot)$ is
  a zero-mean Gaussian process with covariance function
  $\Sigma_{k}(\cdot,\cdot)$. 
  The desired result follows,
  after taking 
  $\mathbb{V}_{k}(\cdot) =  - H_{k}(\cdot)^{-1}  \GG_{k}(\cdot) $.
\end{proof}
\vspace{0.5cm}

The distribution function estimator, $\widehat{G}_{Y(k)|\bm{X}}$
is a transformation of the estimator $\hat{\beta}_{k}(\cdot)$.
More precisely,
let 
$\mathbb{D}:=
\ell^{\infty}(\mathcal{Y})^{d} $
and 
define 
the map 
$\phi: \mathbb{D}_{\phi}\subset \mathbb{D}\mapsto \mathbb{S}_{\phi}$,
given by
\begin{eqnarray*}
  \phi(\beta)(\bm{x}, y) :=
  \Lambda\big(T(\bm{x})^{\top} \beta(y)\big).
\end{eqnarray*}
Then, we can write
\begin{align*}
  \widehat{G}_{Y(k)|\bm{X}}
  =
  \phi( 
  \widehat{\beta}_{k}
  ).
\end{align*}
It can be shown that 
the map $\phi$
is 
Hadamard differentiable
at $\beta_{k} \in \mathbb{D}_{\phi}$
tangentially to $\mathbb{D}$
with
the Hadamard derivative 
$ b_{k}
\mapsto 
\phi_{\beta_{k}(\cdot)}'(b_{k})$,
given by 
\begin{eqnarray*}
  \phi_{\beta_{k}(\cdot)}'(b_{k})(\bm{x},y)
  :=
  \lambda
  \big ( T(\bm{x})^{\top} \beta_{k}(y)\big)
  T(\bm{x})^{\top} b_{k}(y) . 
\end{eqnarray*}
We can obtain
the joint asymptotic distribution of the distribution function
estimators,
applying the functional delta method
with the Hadamard derivative in the above display.
The result is formally stated in the theorem below.

\vspace{0.5cm}
\begin{proposition}
  \label{pro:CLT}
  Under Assumptions \ref{as:as1} and \ref{as:as2},
  we have,
  \begin{eqnarray*}
    \sqrt{n}
    \left (
    \begin{array}{c}
      \widehat{G}_{Y(1)|\bm{X}}
      -
      G_{Y(1)|\bm{X}} \\
      \vdots\\ 
      \widehat{G}_{Y(K)|\bm{X}}
      -
      G_{Y(K)|\bm{X}} 
    \end{array}
    \right )
    \rightsquigarrow
    \left [
    \begin{array}{c}
      \pi_{1}^{-1/2}
      \phi_{\beta_{1}(\cdot)}'
      \big (
      \mathbb{V}_{1}
      \big ) \\
      \vdots \\ 
      \pi_{K}^{-1/2}
      \phi_{\beta_{K}(\cdot)}'
      \big (
      \mathbb{V}_{K}
      \big ) \\
    \end{array} 
    \right ]
    \ \ \mathrm{in} \ 
    \ell^{\infty}(\mathcal{X}{\times}\mathcal{Y})^{2},
  \end{eqnarray*}
  where
  $\{\mathbb{V}_{k} \}_{k=1}^{K}$
    are the mean-zero Gaussian process defined in Theorem \ref{pro:est}.
\end{proposition}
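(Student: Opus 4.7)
The plan is to apply the functional delta method to the transformation $\phi(\beta)(\bm{x},y) = \Lambda(T(\bm{x})^\top \beta(y))$ componentwise, together with joint asymptotic independence of the DR score processes across treatments.

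First, I would rescale Proposition \ref{pro:est} from the $\sqrt{n_k}$ rate to the $\sqrt{n}$ rate. Since $\hat{\pi}_k = \pi_k + o_p(1)$ under Assumption \ref{as:as1}, we have $\sqrt{n/n_k} = 1/\sqrt{\hat{\pi}_k} \to 1/\sqrt{\pi_k}$ in probability. Combining this with Proposition \ref{pro:est} via Slutsky's lemma in $\ell^\infty(\mathcal{Y})^d$ yields
\begin{equation*}
\sqrt{n}\bigl(\widehat{\beta}_k(\cdot) - \beta_k(\cdot)\bigr) \rightsquigarrow \pi_k^{-1/2}\, \mathbb{V}_k(\cdot) \quad \text{in } \ell^\infty(\mathcal{Y})^d
\end{equation*}
for each $k \in \K$.

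Second, I would establish joint weak convergence of the stacked vector $\bigl(\sqrt{n}(\widehat{\beta}_k - \beta_k)\bigr)_{k\in\K}$. By Lemma \ref{lemma:argmin} this reduces to a joint functional CLT for the score processes $n_k^{-1/2}\sum_i W_{k,i}\, \nabla \ell_i(\beta_k(\cdot), \cdot)$. The key observation is that $W_{k,i}W_{k',i} = 0$ almost surely for $k \neq k'$, since each unit receives exactly one treatment; hence the per-observation score contributions for different treatments are exactly orthogonal and have zero cross-covariance. Combined with the Donskerness established in Lemma \ref{lemma:donsker} and the multivariate CLT for the finite-dimensional projections, the limits $\mathbb{V}_1,\ldots,\mathbb{V}_K$ arise as mutually independent mean-zero Gaussian processes.

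Third, I would apply the functional delta method to $\widehat{G}_{Y(k)|\bm{X}} = \phi(\widehat{\beta}_k)$. The Hadamard differentiability of $\phi$ at $\beta_k(\cdot)$ tangentially to $\ell^\infty(\mathcal{Y})^d$ with derivative $\phi'_{\beta_k(\cdot)}(b)(\bm{x},y) = \lambda(T(\bm{x})^\top \beta_k(y))\, T(\bm{x})^\top b(y)$ follows from the continuous differentiability of $\Lambda$ in Assumption \ref{as:as2}(a) by a chain-rule computation. Applying the delta method componentwise and combining with the joint convergence and independence from the previous step yields the stated limit $\bigl(\pi_k^{-1/2} \phi'_{\beta_k(\cdot)}(\mathbb{V}_k)\bigr)_{k\in\K}$.

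The main obstacle is ensuring that $\phi$ takes values in $\ell^\infty(\mathcal{X}{\times}\mathcal{Y})$ with a uniformly well-defined Hadamard derivative in that space; this requires uniform boundedness of $T(\bm{x})^\top \beta(y)$ and of $\lambda(T(\bm{x})^\top \beta_k(y)) T(\bm{x})$ over $(\bm{x},y)$ in a neighborhood of $\beta_k$, which follows from the compactness of $\mathcal{B}$ in Assumption \ref{as:as2}(b) together with the square-integrable envelope constructed in the proof of Lemma \ref{lemma:donsker}. Once this regularity is in place, the delta method applies directly and no further calculation is needed.
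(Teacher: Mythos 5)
Your proposal is correct and follows essentially the same route as the paper's proof: the functional delta method applied to $\phi(\beta)(\bm{x},y)=\Lambda(T(\bm{x})^{\top}\beta(y))$ with the stated Hadamard derivative, combined with Proposition \ref{pro:est}, the rescaling $n_{k}/n=\pi_{k}+o(1)$, and independence of the limit processes across treatments. Your extra justification of that independence via the zero cross-covariance of the score contributions (since $W_{k,i}W_{k',i}=0$) simply spells out what the paper asserts from the random-sampling assumption, and your remark on the regularity needed for Hadamard differentiability matches the level of detail in the paper's own argument.
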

\begin{proof}

  Consider the map
  $\phi:
  \mathbb{D}_{\phi}
  \subset
  \mathbb{D}
  \mapsto 
  \mathbb{S}_{\phi}
  $,
  where 
  $ \beta \mapsto \phi(\beta)$,
  given by 
  $
  \phi(\beta)(\bm{x},y)
  =
  \Lambda
  \big (T(\bm{x})^{\top}\beta(y) \big)
  $.
  Then,
  we can write 
  $
  \widehat{G}_{Y(k)|\bm{X}}
  =
  \phi
  \big(\hat{\beta}_{k}(\cdot)
    \big)
  $
  and
  $
  G_{Y(k)|\bm{X}}
  =
  \phi\big(\beta_{k}(\cdot) \big)
  $
  for $k \in \mathcal{K}$.
  Under Assumption \ref{as:as2}, 
  the map $\phi(\cdot)$
  is shown to be Hadamard differentiable
  at
  $
  \beta_{k}(\cdot)
  $
  tangentially to
  $\mathbb{D}$
  with the derivative map
  $b
  \mapsto 
  \phi_{\beta_{k}(\cdot)}'(b)$
  for every $k \in \mathcal{K}$,
  given by 
  \begin{eqnarray*}
    \phi_{\beta_{k}(\cdot)}^{\prime}(b)(\bm{x},y)
    =
    \lambda
    \big ( T(\bm{x})^{\top}\beta_{k}(y) \big)
    T(\bm{x})^{\top} b(y).
  \end{eqnarray*}
  Applying the functional delta method
  with
  the result in Theorem \ref{pro:est},  
  we can show that, for every $k \in \mathcal{K}$,
  \begin{eqnarray*}
    \sqrt{n_{k}}
    \big (
      \widehat{G}_{Y(k)|\bm{X}} - G_{Y(k)|\bm{X}} 
    \big )
    \rightsquigarrow
    \phi_{\beta_{k}(\cdot)}'
    (
    \mathbb{B}_{k}
    )
    \ \ \mathrm{in} \ \
    \ell^{\infty}(\mathcal{X}{\times}\mathcal{Y}).
  \end{eqnarray*}
  The Gaussian processes
  $\{\B_{k}(\cdot) \}_{k=1}^{K}$
  are mutually independent
  under the random sample assumption.
  Also, we have 
  $n_{k}/n = \pi_{k} + o(1)$.
  Thus, the desired conclusion follows. 
\end{proof}
\vspace{0.5cm}

\subsection{Proof of Theorem \ref{theorem:ate-ra-fclt} }

For treatment $k \in \mathcal{K}$, 
define the empirical processes  
\begin{align*}
  \widehat{\nu}_{\bm{X}}^{(k)}
  :=
  \sqrt{n_{k}}
  \big (
  \widehat{\P}_{\bm{X}}^{(k)}
  - 
  P_{\bm{X}}^{(k)}
  \big )
  \ \ \ \mathrm{and} \ \ \ 
  \widehat{\nu}_{Y(k)}^{simple}
  :=
  \sqrt{n_{k}}
  \big (
  \widehat{\F}_{Y(k)}^{simple} -   F_{Y(k)}
  \big ),
\end{align*}
In the below lemma, we derive the joint limit of the empirical process.

\vspace{0.3cm}
\begin{lemma}
  \label{lemma:donsker2}
  Suppose that Assumption \ref{as:as1} hold.
  Then, 
  we have, 
  \begin{eqnarray*}
    \big \{
    \big (
    \widehat{\nu}_{\bm{X}}^{(k)},
    \widehat{\nu}_{Y(k)}^{simple}
    \big) 
    \big \}_{k=1}^{K}
    \rightsquigarrow
    \big \{
    \big (
    \Z_{\bm{X}}^{(k)},
    \Z_{Y(k)} 
    \big)
    \big \}_{k=1}^{K}
    \ \ \ \mathrm{in} \ \
    \ell^{\infty}(\mathcal{X}\times \mathcal{Y})^{K}.
  \end{eqnarray*}
  Here,
  $\Z_{\bm{X}}^{(k)}$ 
  and 
  $\Z_{Y(k)}^{(k)}$ 
  are 
  mean-zero Gaussian processes
  for $k \in \mathcal{K}$.
  Given an arbitrary function
  $m(y,x) \in \ell^{\infty}(\mathcal{Y}{\times}\mathcal{X})$
  having the Donsker property,
  the 
  covariance function of
  the pair 
  $\big (\Z_{\bm{X}}^{(k)}m( \cdot, \bm{X}) , \Z_{Y(k)}, \big )$
  is given by, for any $y, y' \in \mathcal{Y}$,
  \begin{align*}
    \Cov
    \Big (
    \big (
    \1_{ \{Y(k) \le y\}},
    m(y,\bm{X})
    \big )
    ,
    \big (
    \1_{ \{Y(k) \le y'\}},
    m(y',\bm{X})
    \big )
    \Big ).
  \end{align*}
  Moreover,
  the pairs
  $ \{ (\Z_{\bm{X}}^{(k)}, \Z_{Y(k)} ) \}_{k=1}^{K}$
  are statistically independent across treatments. 
\end{lemma}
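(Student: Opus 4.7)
The plan is to reduce the two-sample empirical processes to a single empirical process indexed by a unified Donsker class, apply a functional central limit theorem, and then read off the covariance structure and cross-treatment independence from an explicit iid representation of the summands.

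First, I would linearize each component as an iid sum. Under Assumption~\ref{as:as1}, $\bm{W} \perp (Y(1),\dots,Y(K),\bm{X})$ and $\sum_k W_{k,i}=1$, so $P_{\bm{X}}^{(k)}=P_{\bm{X}}$ and the summands below have mean zero:
\begin{align*}
\widehat{\nu}_{Y(k)}^{simple}(y) &= \frac{1}{\sqrt{n_k}}\sum_{i=1}^n W_{k,i}\bigl(\1_{\{Y_i\le y\}} - F_{Y(k)}(y)\bigr), \\
\widehat{\nu}_{\bm{X}}^{(k)} m(y,\cdot) &= \frac{1}{\sqrt{n_k}}\sum_{i=1}^n W_{k,i}\bigl(m(y,\bm{X}_i) - \E[m(y,\bm{X})]\bigr).
\end{align*}
Writing $1/\sqrt{n_k}=\sqrt{n/n_k}\cdot 1/\sqrt{n}$ with $n/n_k\to^p\pi_k^{-1}$, it suffices to establish a functional CLT for the $n^{-1/2}$-normalized sums and transfer via Slutsky's lemma in $\ell^{\infty}$.

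Next, I would apply the FCLT to the joint empirical process $\mathbb{G}_n f := n^{-1/2}\sum_{i=1}^n\bigl(f(\bm{W}_i,\bm{X}_i,Y_i)-\E f\bigr)$ indexed by the unified class
\[
\mathcal{F} := \bigcup_{k\in\mathcal{K}}\bigl\{w_k\1_{\{y\le t\}}:\, t\in\mathcal{Y}\bigr\}\,\cup\,\bigcup_{k\in\mathcal{K}}\bigl\{w_k\, m(s,\bm{x}):\, s\in\mathcal{Y}\bigr\}.
\]
Each indicator subclass is VC-subgraph with envelope $1$ (Lemma 2.6.15 of van der Vaart and Wellner, 1996), and each $m$-subclass is Donsker by assumption with a square-integrable envelope; multiplication by the bounded measurable factor $w_k\in\{0,1\}$ is a bounded Lipschitz operation that preserves the Donsker property, and a finite union of Donsker classes with square-integrable envelopes is again Donsker. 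Donsker's theorem therefore yields $\mathbb{G}_n\rightsquigarrow\mathbb{G}$ in $\ell^{\infty}(\mathcal{F})$ for a tight mean-zero Gaussian process $\mathbb{G}$. Combining with Slutsky via $\sqrt{n/n_k}\to^p\pi_k^{-1/2}$ gives the stated joint weak convergence of $\{(\widehat{\nu}_{\bm{X}}^{(k)},\widehat{\nu}_{Y(k)}^{simple})\}_{k=1}^K$ in $\ell^{\infty}(\mathcal{X}\times\mathcal{Y})^K$.

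Finally, I would read off the limiting covariance and independence from the explicit summands. For fixed $k$, using $W_{k,i}^2=W_{k,i}$, $\E[W_{k,i}]=\pi_k$, and $W_{k,i}\perp(\bm{X}_i,Y_i(k))$, the covariance of the unnormalized pair at $(y,y')$ equals $\pi_k$ times $\Cov\bigl((\1_{\{Y(k)\le y\}},m(y,\bm{X})),(\1_{\{Y(k)\le y'\}},m(y',\bm{X}))\bigr)$; after absorbing the $\pi_k^{-1}$ factor coming from $(n/n_k)$, this recovers the stated covariance kernel. For distinct $k\ne k'$, the cross-summands involve the product $W_{k,i}W_{k',i}\equiv 0$, so the cross-covariance vanishes; joint Gaussianity then upgrades zero covariance to statistical independence of $(\Z_{\bm{X}}^{(k)},\Z_{Y(k)})$ across $k$. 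The main technical step to pin down carefully is the permanence of the Donsker property under multiplication by $w_k$ and under the finite union across $k$, but both reduce to standard arguments once a square-integrable envelope for $\mathcal{F}$ is identified.
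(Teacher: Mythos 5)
Your proposal is correct and follows essentially the same route as the paper's proof: both reduce the claim to a functional CLT over Donsker classes (VC-subgraph indicator classes plus the assumed-Donsker class for $m$, with square-integrable envelopes), obtain the joint convergence from the iid sampling in Assumption \ref{as:as1}, and absorb the $\sqrt{n/n_{k}}$ normalization via $\hat{\pi}_{k} \to^{p} \pi_{k}$. Your derivation of cross-treatment independence from $W_{k,i}W_{k',i}=0$ (zero cross-covariance) together with joint Gaussianity is a slightly more explicit justification than the paper's appeal to independence across treatment subsamples, but the substance of the argument is the same.
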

\begin{proof}[\textbf{Proof}]

  Fix $k \in \mathcal{K}$. 
  By definition, we can write 
  $
  \widehat{\nu}_{Y(k)}^{simple}
  = \sqrt{n_{k}}
  \sum_{i=1}^{n}
  W_{ik}\cdot 
  \big (
  \1_{ \{Y(k) \le y\}} - F_{Y(k)}(y)
  \big )
  $
  and 
  it can be shown that 
  the class 
  $\{ \1_{ \{ Y(k) \le y \}}: y \in \mathcal{Y} \}$
  is Donsker.  
  The Donskerness of 
  $\{ \1_{ \{ Y(k) \le y \}}: y \in \mathcal{Y} \}$
  and 
  $\{ m(y, X): y \in \mathcal{Y} \}$
  implies the stochastic equicontinuity.  
  Thus, it remains to show the finite dimensional convergence,
  which
  follows from
  a simple application of the multivariate central limit theorem,
  given the iid observations under Assumption \ref{as:as1}.

  Under Assumption \ref{as:as1}, the observations
  of the treatment groups are mutually independent,
  thereby yielding
  the pairwise independence of the limit processes. 
\end{proof}

\vspace{0.5cm}
\begin{proof}[\textbf{Proof of Theorem \ref{theorem:ate-ra-fclt}}]

 \textbf{(a)}
  For treatment $k \in \mathcal{K}$,
  we have that 
  $\widehat{\F}_{Y(k)}^{simple} =
  \widehat{\P}_{\bm{X}}^{(k)}
  \widehat{G}_{Y(k)|\bm{X}}
  $, 
  provided that the link function is a canonical link function,
  and 
  we can write 
  \begin{align*}
    \widehat{\F}_{Y(k)}
    =
    \widehat{\F}_{Y(k)}^{simple}
    +
    \big(
    \widehat{\P}_{\bm{X}}
    -
    \widehat{\P}_{\bm{X}}^{(k)}
    \big)
    \widehat{G}_{Y(k)|\bm{X}}.
  \end{align*}
  This yields 
  \begin{align*}
    \sqrt{n}
    \big (
    \widehat{\F}_{Y(k)}
    -
    F_{Y(k)}
    \big )
     =
    \frac{
    \sqrt{n}
    }{
    \sqrt{n_{k} }
    }
    \sqrt{n_{k}}
    \big (
    \widehat{\F}_{Y(k)}^{simple}
    -
    F_{Y(k)}
    \big ) 
     +
    \sqrt{n}
    \big(
    \widehat{\P}_{\bm{X}}
    -
    \widehat{\P}_{\bm{X}}^{(k)}
    \big)
    \widehat{G}_{Y(k)|\bm{X}}.
  \end{align*}
  The pre-treatment variable $\bm{X}$ shares
  the population distribution $P_{\bm{X}}$
  across treatment groups
  and also  
  $
    \widehat{\P}_{\bm{X}}
    =
    \sum_{\ell=1}^{K}
    (n_{\ell}/n)
    \widehat{\P}_{\bm{X}}^{(\ell)}
  $. 
  Thus, we can write
  \begin{align*}
    \widehat{\P}_{\bm{X}}
    - 
    \widehat{\P}_{\bm{X}}^{(k)}
    &=
    \sum_{\ell =1}^{K}
    \frac{n_{\ell}}{n}
    (
    \widehat{\P}_{\bm{X}}^{(\ell)}
    -
    P_{\bm{X}}
    )
    -
    (
    \widehat{\P}_{\bm{X}}^{(k)}
    -
    P_{\bm{X}}
    ) \\ 
    & = 
    \frac{1}{\sqrt{n}}
    \bigg (
    \sum_{\ell =1}^{K}
    \frac{ \sqrt{n_{\ell}}}{ \sqrt{n}}
    \widehat{\nu}_{\bm{X}}^{(\ell)}    
    -
    \frac{ \sqrt{n}}{ \sqrt{n_{k}}}
    \widehat{\nu}_{\bm{X}}^{(k)}
    \bigg )
  \end{align*}
  The empirical processes 
  $ \{\widehat{\nu}_{ \bm{X}}^{(\ell)}(\cdot) \}_{\ell \in \mathcal{K}}$
  are stochastically equicontinuous. Thus, it can be shown that 
  $\widehat{\nu}_{\bm{X}}^{(\ell)}
  \big (\widehat{G}_{Y(k)|\bm{X}} - G_{Y(k)|\bm{X}} \big )
  = o_p(1)
  $
  for every $\ell \in \mathcal{K}$,
  given the result that 
  $\sqrt{n} \big (\widehat{G}_{Y(k)|\bm{X}} - G_{Y(k)|\bm{X}}\big ) = O_p(1)$
  by Theorem  \ref{pro:CLT}.
  Thus, we have  
  \begin{align*}
    \sqrt{n}
    (
    \widehat{\P}_{\bm{X}}
    - 
    \widehat{\P}_{\bm{X}}^{(k)}
    )
    \widehat{G}_{Y(k)|\bm{X}}
    =
    \bigg (
    \sum_{\ell =1}^{K}
     \sqrt{\pi_{\ell}}
    \widehat{\nu}_{\bm{X}}^{(\ell)}    
    -
    \frac{ 1}{ \sqrt{\pi_{k}}}
    \widehat{\nu}_{\bm{X}}^{(k)}
    \bigg )
    G_{Y(k)|\bm{X}}
    + 
    o_{p}(1),
  \end{align*}
  which leads to 
  \begin{align*}
    \sqrt{n}
    \big (
    \widehat{\F}_{Y(k)}
    -
    F_{Y(k)}
    \big ) 
    =
    \frac{
    1
    }{
    \sqrt{\pi_{k}}
    }
    \widehat{\nu}_{Y(k)}^{simple}
    +
    \bigg (
    \sum_{\ell =1}^{K}
    \sqrt{\pi_{\ell}}
    \widehat{\nu}_{\bm{X}}^{(\ell)}    
    -
    \frac{ 1}{ \sqrt{\pi_{k}}}
    \widehat{\nu}_{\bm{X}}^{(k)}
    \bigg )
    G_{Y(k)|\bm{X}}
    + 
    o_p(1).
  \end{align*}
  It follows from  
  Lemma \ref{lemma:donsker} that 
    \begin{eqnarray}
    \label{eq:B-def}
    \sqrt{n}
    \left (
    \begin{array}{c}
      \widehat{F}_{Y(1)}
      -
      F_{Y(1)} \\
      \vdots\\ 
      \widehat{F}_{Y(K)}
      -
      F_{Y(K)} 
    \end{array}
    \right )
    \rightsquigarrow
    \mathbb{B}
    :=
    \left [
    \begin{array}{c}
    \B_{1} \\
      \vdots \\ 
      \B_{K} \\
    \end{array} 
    \right ]
    \ \ \mathrm{in} \ 
    \ell^{\infty}(\mathcal{X}{\times}\mathcal{Y})^{2},
  \end{eqnarray}
  with 
  \begin{align*}
    \Z_{k} :=
    & 
      \frac{
      1
      }{
      \sqrt{\pi_{k}}
      }
      \Z_{Y(k)}
      +
      \bigg (
      \sum_{\ell =1}^{K}
      \sqrt{\pi_{\ell}}
      \Z_{\bm{X}}^{(\ell)}    
      -
      \frac{ 1}{ \sqrt{\pi_{k}}}
      \Z_{\bm{X}}^{(k)}
      \bigg )
      G_{Y(k)|\bm{X}}. 
  \end{align*}
  As $\Psi^{DTE}_{k,k'}(\cdot)$ and $\Psi^{PTE}_{k,k',h}(\cdot)$ are linear transformations, the conclusion follows.

    \textbf{(b)}
    The QTE is obtained through a difference between two inverse maps: $\Psi^{QTE}_{k,k'}(\gamma) = F{Y_{k}}^{-1}(\gamma) - F_{Y_{k'}}^{-1}(\gamma)$ for 
    $\gamma:=(F_{Y_{1}}, \dots, F_{Y_{K}})^{\top}$. 
    Given that $F_{Y_k}$ and $F_{Y_{k'}}$ are continuously differentiable with strictly positive densities $f_{Y_k}$ and $f_{Y_{k'}}$ on their respective supports, 
    we can show that 
    the functional map $\Psi^{QTE}_{k,k'}$ is Hadamard differentiable, 
    and 
    its Hadamard derivative 
    $\psi^{QTE}_{k,k', \gamma}$ at $\gamma$ is defined in the statement of this theorem 
    \citep[see Lemma 3.9.20,][]{van1996weak}.
    Applying the function delta method \citep[Theorem 3.9.4,][]{van1996weak}, 
    we can obtain the desired result. 
  \end{proof}
\vspace{0.5cm}

\subsection{Proof of Theorem \ref{theorem:ate-ra-fclt-bootstrap}}
\label{subsec:appendix-bootstrap}

For the validity of the resampling method,
we introduce the notion
of conditional weak convergence in probability,
following \cite{van1996weak}.

\vspace{0.3cm}
\begin{definition}
For some normed space $\mathbb{Q}$,
let $BL_{1}(\mathbb{Q})$
denote the space of all 
Lipschitz continuous functions
from $\mathbb{Q}$ to $[-1,1]$.
Given the original sample 
$\{(\bm{W}_{i}, \bm{X}_{i}, Y_{i})\}_{i=1}^{n}$,
consider a random element 
$D_{n}^{\ast}:=
g(\{(\bm{W}_{i}, \bm{X}_{i}, Y_{i})\}_{i=1}^{n}, \{S_{i}\}_{i=1}^{n})$
as  
a function of 
the original sample 
and 
the random weight vector 
generating 
the bootstrap draw. 
The bootstrap law of $B_{n}^{\ast}$
is said to 
consistently estimate the law of some tight random element 
$D$ or 
$D_{n}^{\ast}
\overset{p}{\rightsquigarrow} D$ 
if 
\begin{eqnarray*}
  \sup_{\kappa \in BL_{1}(\mathbb{Q})}
  \big |
  \E_{n}[\kappa(D_{n}^{\ast})]
  -
  \E[\kappa(D)]
  \big |
  \overset{p}{\to} 0,
\end{eqnarray*}
where 
$\E_{n}$ is the expectation with respect to
$\{S_{i}\}_{i=1}^{n}$
conditional on the original sample. 
\end{definition}
\vspace{0.3cm}

Using the bootstrap observations, 
we define the empirical processes, for treatment $k \in \K$,  
\begin{align*}
  \widehat{\nu}_{\bm{X}}^{(k) \ast}
  :=
  \sqrt{n_{k}}
  \big (
  \widehat{\P}_{\bm{X}}^{(k) \ast}
  - 
  \widehat{\P}_{\bm{X}}^{(k) }
  \big )
  \ \ \mathrm{and} \ \ 
  \widehat{\nu}_{Y(k)}^{simple \ast}
  :=
  \sqrt{n_{k}}
  \big (
  \widehat{\F}_{Y(k)}^{simple\ast}
  -
  \widehat{\F}_{ Y(k)}^{simple}
  \big ),
\end{align*}
where 
$  \widehat{\P}_{\bm{X}}^{(k)\ast} :=   
  n_{k}^{-1}
  \sum_{i=1}^{n}
  S_{i}
  \cdot  W_{i,k} \cdot  \delta_{\bm{X}_i}.
$

\vspace{0.5cm}
\begin{proof}
  [\textbf{Proof of Theorem \ref{theorem:ate-ra-fclt-bootstrap}}]

  \textbf{(a)}
  Applying a similar argument in 
  the proof of Theorem \ref{theorem:ate-ra-fclt},
  we can show that
  \begin{align*}
    \sqrt{n}
    \big (
    \widehat{\F}_{Y(k)}^{\ast}
    -
    \widehat{\F}_{Y(k)}
    \big ) 
      =
    \frac{
    1
    }{
    \sqrt{\pi_{k}}
    }
    \widehat{\nu}_{Y(k)}^{simple \ast}
      +
    \bigg (
    \sum_{\ell=1}^{K}
    \sqrt{\pi_{\ell}}
    \widehat{\nu}_{\bm{X}}^{(\ell) \ast}
    -
    \frac{1}{\sqrt{\pi}}
    \widehat{\nu}_{\bm{X}}^{(k) \ast}
    \bigg )
    G_{Y(k)|\bm{X}}
    +
    o_p(1),
  \end{align*}
  in $\ell^{\infty}(\mathcal{Y})$,
  for every $k \in \K$.  
  Because 
  Lemma \ref{lemma:donsker} shows that the function classes considered in the above equation are Donsker,
  Theorem 3.6.13 of \cite{van1996weak}
  with the functional delta method
  yields 
   \begin{align*}
    \sqrt{n}
    \big (
    \widehat{\F}_{Y(k)}^{\ast}
    -
    \widehat{\F}_{Y(k)}
    \big ) 
    \overset{p}{\rightsquigarrow}
    \frac{
    1
    }{
    \sqrt{\pi_{k}}
    }
    \B_{Y(k)}
      +
    \bigg (
    \sum_{\ell=1}^{K}
    \sqrt{\pi_{\ell}}
    \B_{\bm{X}}^{(\ell) }
    -
    \frac{1}{\sqrt{\pi}}
    \B_{\bm{X}}^{(k) }
    \bigg )
    G_{Y(k)|\bm{X}},
  \end{align*}
  in $\ell^{\infty}(\mathcal{Y})$.
  Furthermore, 
  the transformations $\Psi^{DTE}_{k,k'}(\cdot)$ and 
  $\Psi^{PTE}_{k,k',h}(\cdot)$ are linear in the regression-adjusted estimator,  the desired conclusion follows. 

  \textbf{(b)}
  The desired result follows by combining the argument presented in the proof of Theorem \ref{theorem:ate-ra-fclt} with the result established in part (a) above.
\end{proof}
\vspace{0.5cm}

\clearpage 
\subsection{Generalized Linear Models and Canonical Link Functions}\label{sec:glm-canonical}

This subsection provides a brief overview of generalized linear models (GLMs) with particular emphasis on the canonical link function. We begin with the fundamental structure of GLMs, discuss their connection to exponential families, and then focus on the properties of canonical link functions. To maintain generality, we use notation distinct from that in the rest of this paper. For a comprehensive treatment of GLMs, readers are referred to \cite{mccullagh1989binary} among others.

We consider a scalar random variable $Y$ and a vector of covariates $X$. The essential structure of GLM for modeling the relationship between $Y$ and $X$ is
\begin{eqnarray*}
g(\E[Y|X]) = X^{\top} \beta,
\end{eqnarray*}
where $g: \R \to \R$ is the link function and $\beta$ is a $k \times 1$ vector of unknown parameters. Note that the linear regression model $\E[Y|X] = X^{\top} \beta$ is a special case where $g(\cdot)$ is the identity map $g(\mu)=\mu$.

GLMs are fundamentally connected to exponential family distributions, which encompass commonly-used probability distributions through a unified representation of their density:
\begin{equation*}
f(y;\theta,\phi) = \exp\left(\frac{y\theta - b(\theta)}{a(\phi)} + c(y,\phi)\right).
\end{equation*}
Here, $\theta$ is the canonical parameter and captures the essential relationship between the response and its mean. The function $b(\theta)$ is the cumulant function, taking specific forms for different distributions in the family. The dispersion parameter $\phi$ enters through the scale function $a(\phi)$, while $c(y,\phi)$ ensures proper normalization of the density.
Under the exponential family, we have
\begin{align*}
\E[Y|X] = b'(\theta)
\ \ \mathrm{and} \ \
Var(Y|X) = b''(\theta)a(\phi),
\end{align*}
where $b'(\cdot)$ and $b''(\cdot)$ denote the first and second derivatives of $b(\cdot)$, respectively.

There are various choices for the link function $g(\cdot)$ in GLMs, with the canonical link function being a popular choice that relates the linear predictor $X^{\top} \beta$ to the canonical parameter $\theta$.
Specifically, let $(b')^{-1}: \R \to \R$ be the inverse function of $b'(\cdot)$
and then the math display above imples that $(b')^{-1}(\E[Y|X]) = \theta$ under the exponential family.
The link function $g(\cdot)$ is called canonical when $g(\cdot) = (b')^{-1}(\cdot)$. This canonical link function yields the relationship between the canonical parameter and the linear predictor: $\theta = X^{\top} \beta$.
The form of the canonical link varies across different distributions in the exponential family. For the normal distribution, it is the identity function $g(\mu) = \mu$, which yields the standard linear regression model. For the Bernoulli distribution, it is the logit function $g(\mu) = \log(\mu/(1-\mu))$, mapping from the probability space $(0,1)$ to the real line. For the Poisson distribution, it is the logarithm function $g(\mu) = \log(\mu)$, which ensures the predicted mean remains positive.

We consider a random sample $\{(Y_i, X_i) \in \R \times \R^{k}\}_{i=1}^{n}$ with sample size $n$ as independent copies of $(Y, X)$. Under the canonical link function, the log-likelihood is
\begin{align*}
\ell(\beta, \phi) :=
\frac{1}{n}
\sum_{i=1}^n \left(\frac{Y_i(X_i^\top\beta) - b(X_i^\top\beta)}{a(\phi)} + c(Y_i, \phi)\right).
\end{align*}
The canonical link function ensures uniqueness of the maximum likelihood estimates (when they exist) and yields simpler expressions for the score statistic and information matrix. A key property of the canonical link function is its relationship to the score function. The score function for $\beta$ is
\begin{align*}
\frac{\partial \ell(\beta, \phi)}{\partial \beta}
&=
\frac{1}{a(\phi)}
\frac{1}{n}
\sum_{i=1}^n
\big (Y_i - b'(X_i^\top\beta) \big)X_i   \\
&=
\frac{1}{a(\phi)}
\frac{1}{n}
\sum_{i=1}^n
\big (Y_i - g^{-1}(X_i^\top\beta) \big)X_i. 
\end{align*}
When $X$ includes a constant term, the maximum likelihood estimator $\hat{\beta}$ satisfying $\partial \ell(\hat{\beta}, \hat{\phi})/ \partial \beta =0$ implies
$
n^{-1}
\sum_{i=1}^n
\big (Y_i - g^{-1}(X_i^\top \hat{\beta}) \big) = 0
$
and thus
\begin{align*}
\frac{1}{n}\sum_{i=1}^n
Y_i =
\frac{1}{n}\sum_{i=1}^n
g^{-1}(X_i^\top \hat{\beta}).
\end{align*}
Consequently, the canonical link function leads to an unbiased estimator of the unconditional mean.

\newpage
\subsection{Semiparametric Efficiency Bound}
\label{subsection:Semiparametric Efficiency Bound}
In this section, we derive the semiparametric efficiency bound of the DTE at a given $y\in\mathcal Y$, denoted by $\Delta^{DTE}_{k, k'}(y)$.\footnote{We thank an anonymous referee for raising this interesting question.} The approach is analogous to that of \citet{hahn1998role}. 

First, we characterize the tangent space. To that end, the joint density of the observed variables $(Y,\bm{W}, \bm{X})$ can be written as: 
\[f(y,w,x) =  f(y|w,x)f(w|x)f(x) 
=  \prod_{k=1}^{K}\{f_k(y|x)\pi_k\}^{W_k} f(x),
\]
where $f_k(y|x) := P(Y=y|W_k=1, X=x)$ and $\pi_k = P(W_k=1|X=x)$ for all $x\in\mathcal X$. Notice that $f(w|x)$ does not depend on $x$ in our setup.

Consider a regular parametric submodel indexed by $\theta$: 
\[f(y,w,x; \theta) := \prod_{k=1}^{K} \{f_k(y|x; \theta) \pi_k(\theta)\}^{W_k} f(x; \theta),
\]
which equals $f(y,w,x)$ when $\theta = \theta_0$.

The corresponding score of $f(y,w,x; \theta)$ is given by
\begin{align*}
    s(y,w,x; \theta) := & \frac{\partial \ln f(y,w,x;\theta) }{\partial\theta} \\
    = & \sum_{k=1}^{K} \Big(W_k \dot{f}_k(y|x; \theta) + W_k\dot{\pi}_k(\theta) \Big) + \dot{f}(x; \theta),
\end{align*}
where $\dot{f}$ denotes a derivative of the log, i.e., $\dot{f}_k(y|x; \theta) = \frac{\partial \ln f_k(y|x; \theta)}{\partial \theta}$, $\dot{\pi}_k(\theta) = \frac{\partial \ln\pi_k(\theta)}{\partial \theta}$ and $\dot{f}(x; \theta) = \frac{\partial \ln f(x; \theta)}{\partial \theta}$.

At the true value, the expectation of the score equals zero. The tangent space of the model is the set of functions that are mean zero and satisfy the additive structure of the score:
\begin{align}\label{tangent-space}
  \mathcal{T} = \Big\{ \sum_{k=1}^{K} \Big(W_k s_k(y|x) + W_k s_{\pi}\Big) + s_x(x)  \Big\},
\end{align}
where $s_k(y|x)$ and $s_x(x)$ are mean-zero functions and $s_\pi$ is a real number.

The semiparametric variance bound of $\Delta^{DTE}_{k, k'}(y)$ is the variance of the projection onto the tangent space $\mathcal T$ of a mean-zero function $\psi(Y,\bm{W},\bm{X})$ with finite second order moments, which satisfies for all regular parametric submodels
\begin{equation} \label{semiparametric-bound}
\frac{\partial \Delta^{DTE}_{k,k'}(y; F_\theta)}{\partial \theta}\Big|_{\theta=\theta_0} = E[\psi(Y, \bm{W}, \bm{X}) \cdot s(Y,\bm{W},\bm{X})]\Big|_{\theta=\theta_0}.
\end{equation}
If $\psi$ itself already lies in the tangent space, the variance bound is given by $E[\psi^2].$
Now, the DTE is
\[\Delta^{DTE}_{k,k'}(y; F_\theta) = 
\iint \1_{\{Y \leq y\}} f_k(y|x; \theta) f(x; \theta)dy dx - 
\iint \1_{\{Y \leq y\}} f_{k'}(y|x; \theta) f(x; \theta)dy dx.
\]
Letting $\gamma_k^y(\bm{X}): = E[\1_{\{Y(k)\leq y\}}|\bm{X}]$ and $\gamma_{k,k'}^y(\bm{X}):=\gamma_k^y(\bm{X}) - \gamma_{k'}^y(\bm{X})$, we thus have
\begin{align*}
\frac{\partial \Delta^{DTE}_{k,k'}(y; F_\theta)}{\partial \theta} \Big|_{\theta=\theta_0} = & \iint \1_{\{Y \leq y\}} \dot{f}_k(y|x; \theta) f_k(y|x)f(x)dydx + \int \gamma^y_k(x)\dot{f}(x; \theta) f(x) dx \\
& - \iint \1_{\{Y \leq y\}} \dot{f}_k(y|x; \theta) f_k(y|x)f(x)dydx - \int \gamma^y_{k'}(x) \dot{f}(x; \theta) f(x) dx
\end{align*}
We choose $\psi(Y, \bm{W}, \bm{X})$ as 
\begin{align}\label{psi-definition}
\psi(Y,\bm{W},\bm{X}) = \frac{W_k}{\pi_k}(\1_{\{Y\leq y\}}-\gamma^y_k(\bm{X})) - \frac{W_{k'}}{\pi_{k'}}(\1_{\{Y\leq y\}}- \gamma^y_{k'}(\bm{X})) + \gamma_{k, k'}^y(\bm{X}) - \Delta^{DTE}_{k, k'}(y).
\end{align}
Notice that $\psi$ satisfies \eqref{semiparametric-bound} and that $\psi$ lies in the tangent space $\mathcal{T}$ given in \eqref{tangent-space}.
Since $\psi$ lies in the tangent space, the variance bound is given by the expected square of $\psi$:
\begin{align*}\label{efficiency-bound}
    \E[\psi(Y,\bm{W},\bm{X})^2] = & \E\bigg[ \Big(\frac{W_k}{\pi_k}(\1_{\{Y\leq y\}}-\gamma^y_k(\bm{X})) - \frac{W_{k'}}{\pi_{k'}}(\1_{\{Y\leq y\}}- \gamma^y_{k'}(\bm{X})) + \gamma_{k, k'}^y(\bm{X}) - \Delta^{DTE}_{k, k'}(y)\Big) ^2 \bigg]. 
\end{align*}

When $G_{Y(k)|\bm{X}}$ is correctly specified for $k, k'\in\mathcal K$, the asymptotic variance of the regression-adjusted DTE estimator coincides with the efficiency bound in the above equation.

Furthermore, the semiparametric efficiency of the DTE estimator naturally extends to QTE and PTE estimators under suitable regularity conditions. Since both QTE and PTE can be obtained through Hadamard differentiable transformations of the DTE estimator, they achieve their respective semiparametric efficiency bounds as shown in Theorem 25.47 of \cite{van2000asymptotic}. 

\clearpage
\section{Additional Simulation Results}\label{app:simulation}
The remaining simulation results are presented in Figures \ref{fig:dgp-1-rho0.3}-\ref{fig:dgp-4-rho0.5}. The key insights discussed in Section \ref{sec:simulation} remain consistent across these figures.

\begin{figure}[!h]
\vskip 0.2in
\begin{center}
\caption{Performance metrics of simple and regression-adjusted DTE estimators}
(DGP1, continuous outcome, $\pi_{1}=0.3$)
\includegraphics[width=0.9\columnwidth]{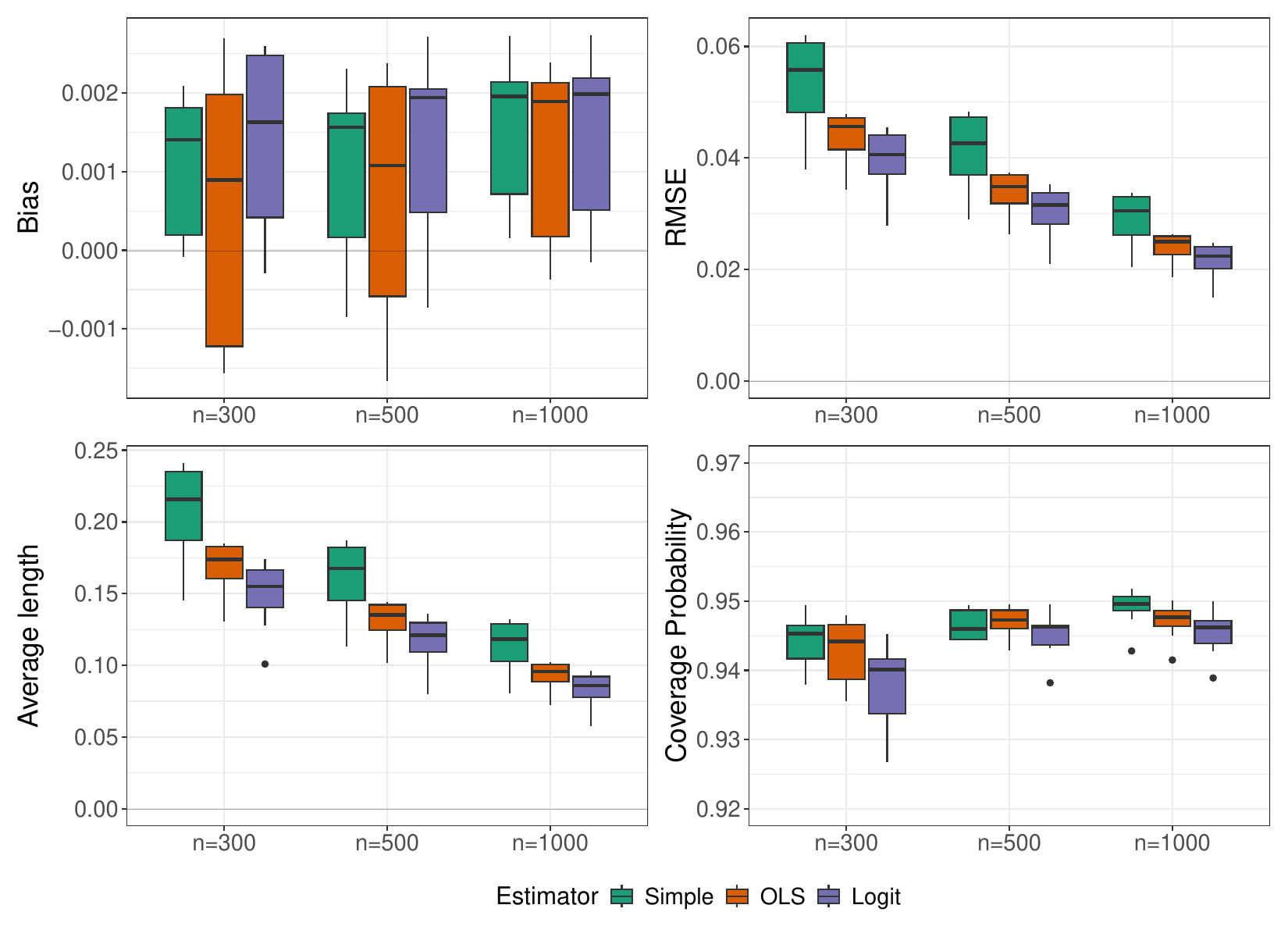}
    \begin{minipage}{0.90\textwidth}
      \small 
      \textit{Note}:
       Bias, RMSE, 95\% CI length and coverage probability calculated over 10,000 simulations. Each boxplot represents the distribution across locations $y$ for a specific sample size.
      \end{minipage}    
\label{fig:dgp-1-rho0.3}
\end{center}
\vskip -0.2in
\end{figure} 

\begin{figure}[!h]
\vskip 0.2in
\begin{center}
\caption{Performance metrics of simple and regression-adjusted DTE estimators}
(DGP2, continuous outcome, $\pi_{1}=0.3$)
\includegraphics[width=0.9\columnwidth]{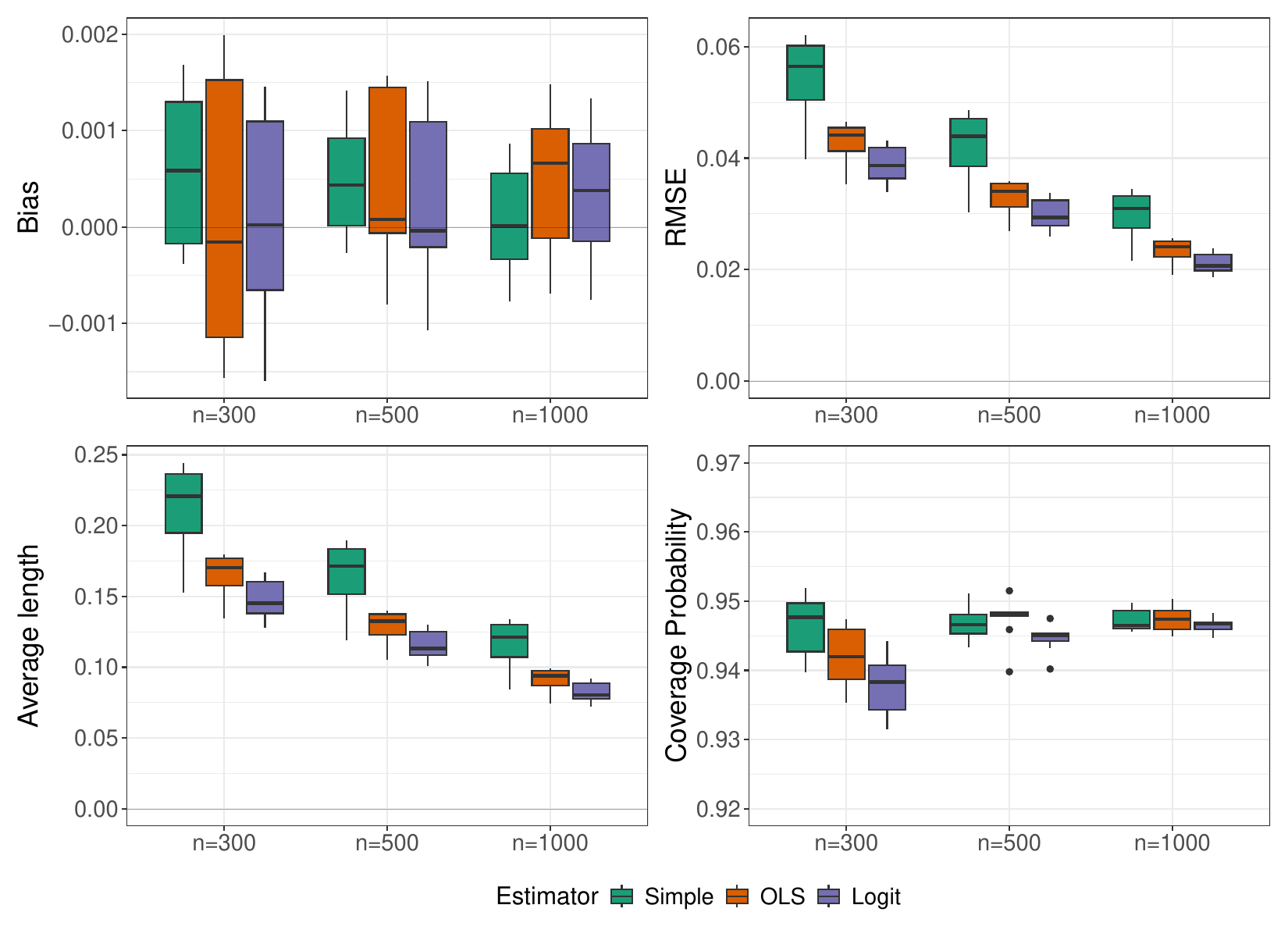}
 \begin{minipage}{0.90\textwidth}
      \small 
      \textit{Note}:
       Bias, RMSE, 95\% CI length and coverage probability calculated over 10,000 simulations. Each boxplot represents the distribution across locations $y$ for a specific sample size.
      \end{minipage}    

\label{fig:dgp-2-rho0.3}
\end{center}
\vskip -0.2in
\end{figure} 

\begin{figure}[!h]
\vskip 0.2in
\begin{center}
\caption{Performance metrics of simple and regression-adjusted DTE estimators}
(DGP2, continuous outcome, $\pi_{1}=0.5$)
\includegraphics[width=0.9\columnwidth]{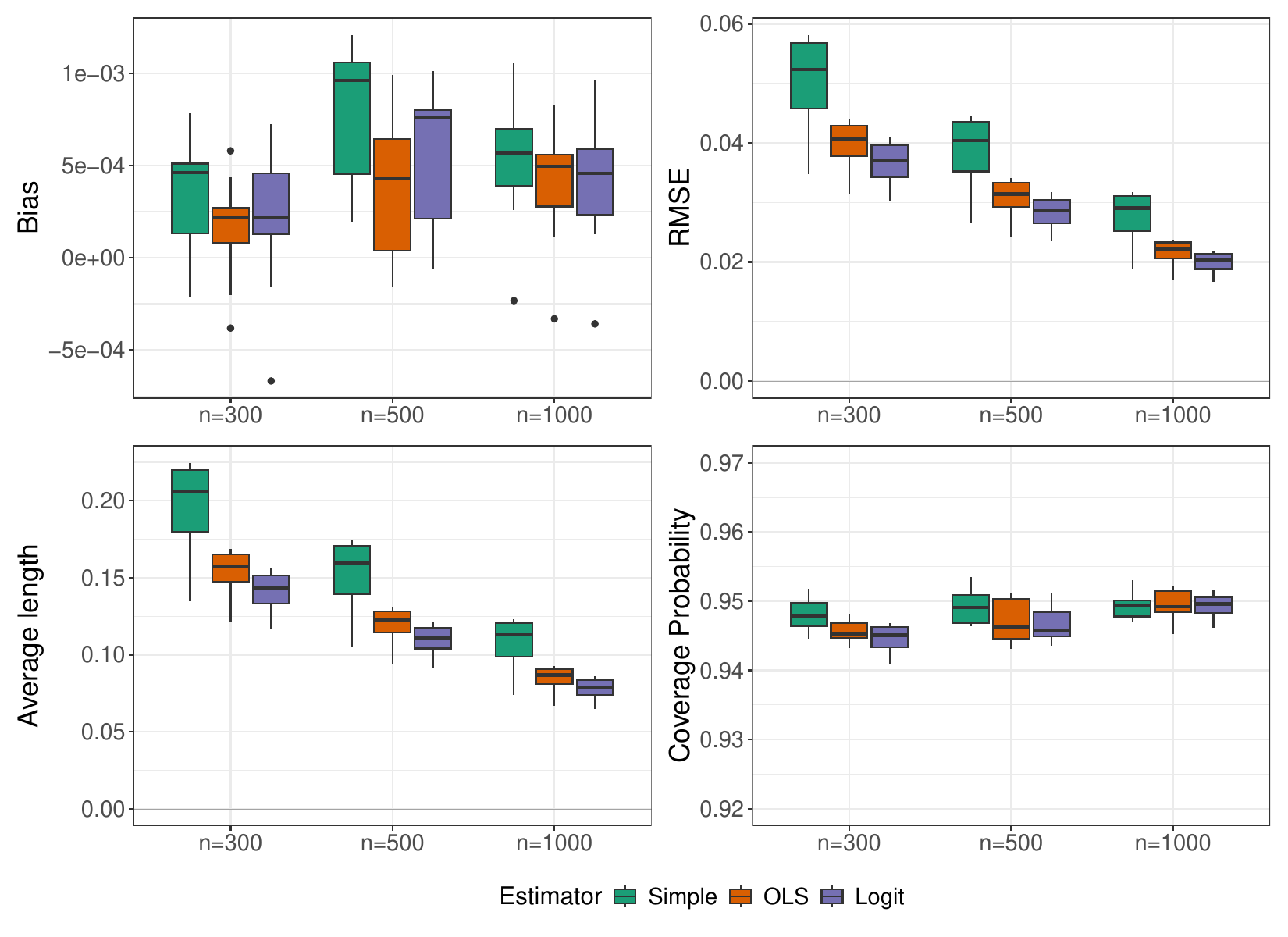}
 \begin{minipage}{0.90\textwidth}
      \small 
      \textit{Note}:
       Bias, RMSE, 95\% CI length and coverage probability calculated over 10,000 simulations. Each boxplot represents the distribution across locations $y$ for a specific sample size.
      \end{minipage}    
\label{fig:dgp-2-rho0.5}
\end{center}
\vskip -0.2in
\end{figure} 

\begin{figure}[!h]
\vskip 0.2in
\begin{center}
\caption{Performance metrics of simple and regression-adjusted DTE estimators}
(DGP3, discrete outcome, $\pi_{1}=0.3$)
\includegraphics[width=0.9\columnwidth]{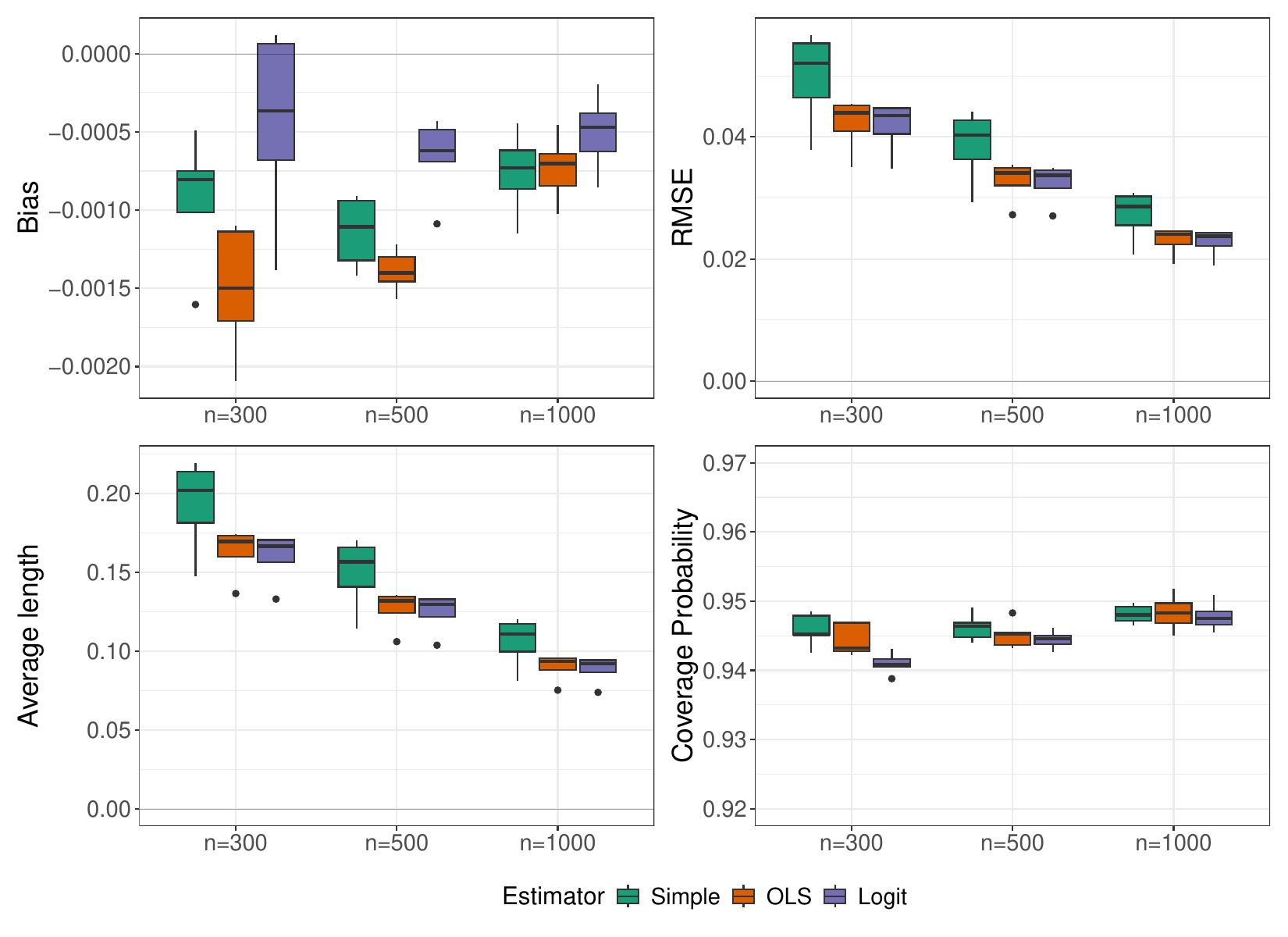}
\begin{minipage}{0.90\textwidth}
      \small 
      \textit{Note}:
       Bias, RMSE, 95\% CI length and coverage probability calculated over 10,000 simulations. Each boxplot represents the distribution across locations $y$ for a specific sample size.
      \end{minipage}    
\label{fig:dgp-3-rho0.3}
\end{center}
\vskip -0.2in
\end{figure} 

\begin{figure}[!h]
\vskip 0.2in
\begin{center}
\caption{Performance metrics of simple and regression-adjusted DTE estimators}
(DGP4, discrete outcome, $\pi_{1}=0.3$)
\includegraphics[width=0.9\columnwidth]{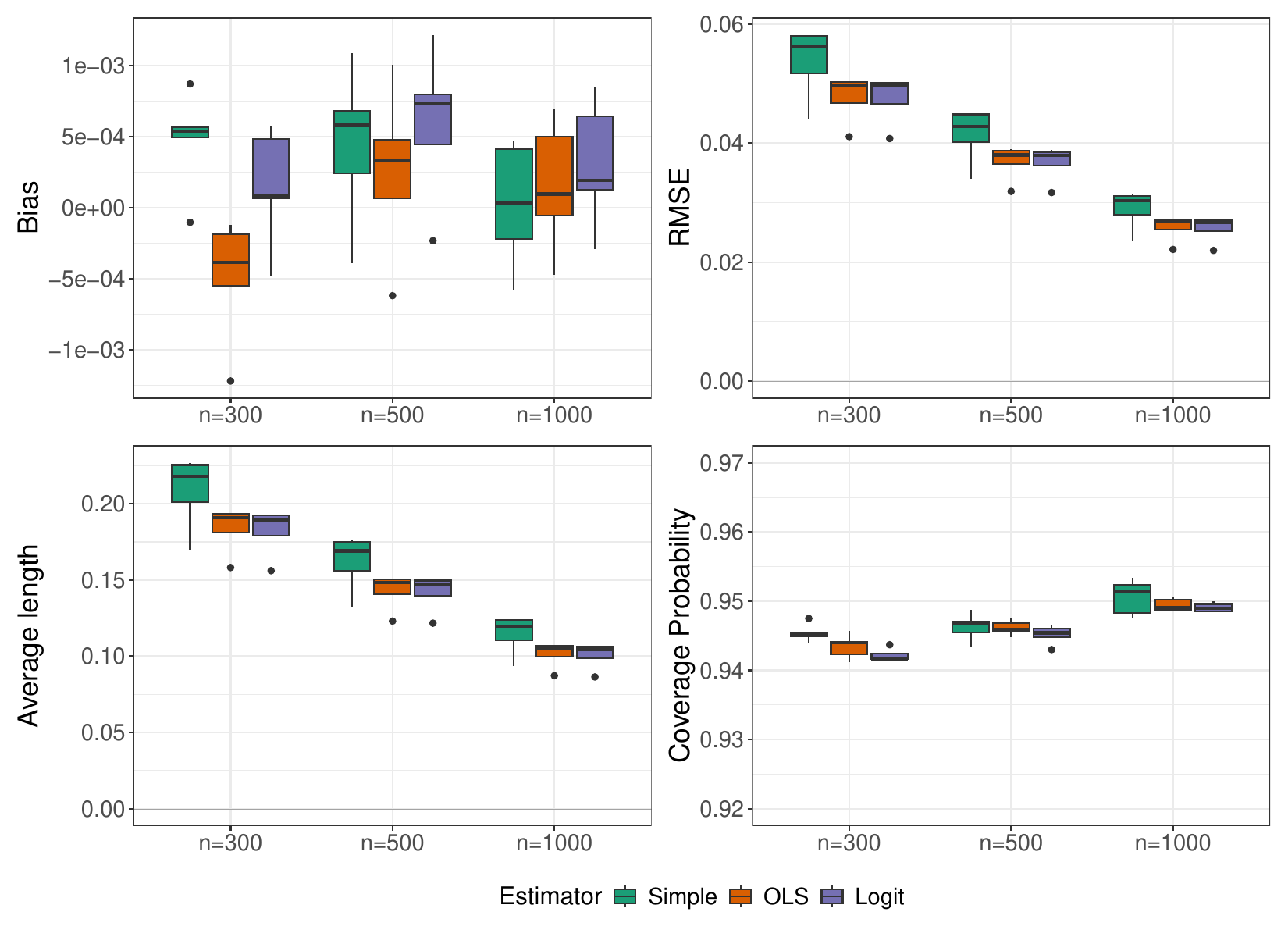}
\begin{minipage}{0.90\textwidth}
      \small 
      \textit{Note}:
       Bias, RMSE, 95\% CI length and coverage probability calculated over 10,000 simulations. Each boxplot represents the distribution across locations $y$ for a specific sample size.
      \end{minipage}    
\label{fig:dgp-4-rho0.3}
\end{center}
\vskip -0.2in
\end{figure} 

\begin{figure}[!h]
\vskip 0.2in
\begin{center}
\caption{Performance metrics of simple and regression-adjusted DTE estimators}
(DGP4, discrete outcome, $\pi_{1}=0.5$)
\includegraphics[width=0.9\columnwidth]{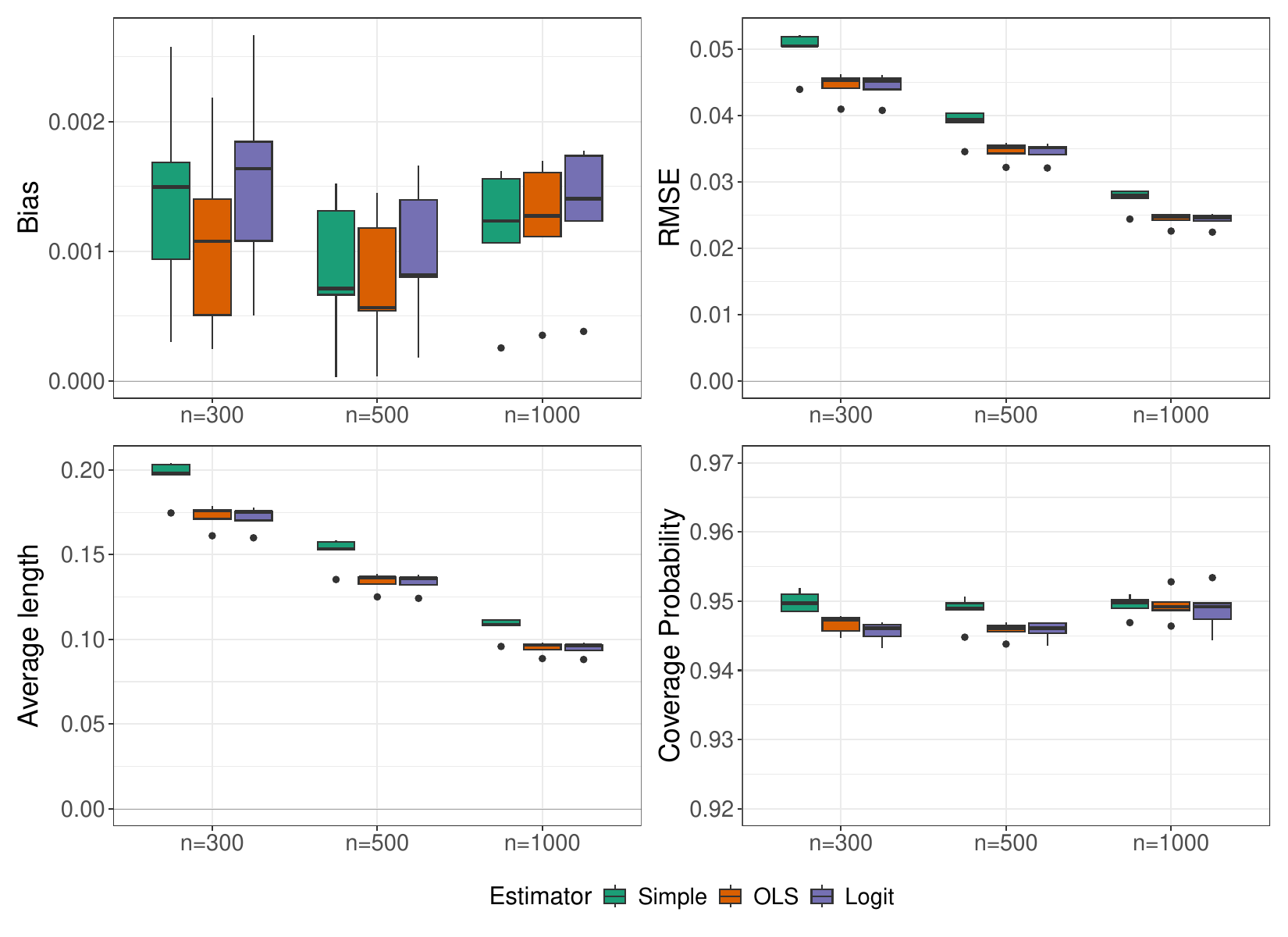}
\begin{minipage}{0.90\textwidth}
      \small 
      \textit{Note}:
       Bias, RMSE, 95\% CI length and coverage probability calculated over 10,000 simulations. Each boxplot represents the distribution across locations $y$ for a specific sample size.
      \end{minipage}    
\label{fig:dgp-4-rho0.5}
\end{center}
\vskip -0.2in
\end{figure} 

\clearpage
\subsection{Bootstrap confidence intervals} \label{app:bootstrap}
We evaluate the finite sample performance of empirical bootstrap confidence intervals outlined in Algorithm \ref{algorithm:bootstrap} in Section \ref{sec:asymptotic} by examining their average length and coverage probability across 1,000 simulations. Figure \ref{fig:dgp-1-rho0.5-boot} illustrates these results for DGP 1 (continuous outcome) with $\pi_1 = 0.5$. Standard errors are computed using two methods: bootstrap standard deviations (SD) and bootstrap interquartile ranges (IQR). The average lengths of the confidence intervals for both methods are comparable to those obtained using analytic standard errors shown in Figure \ref{fig:dgp-1-rho0.5}. Coverage probabilities remain close to 0.95, though slight under-coverage (around 0.93) is observed at certain locations when $n \in \{300, 500\}$.

\begin{figure}[!h]
\vskip 0.2in
\begin{center}
\caption{Bootstrap confidence intervals}
(DGP1, continuous outcome, $\pi_{1}=0.5$)
\includegraphics[width=0.9\columnwidth]{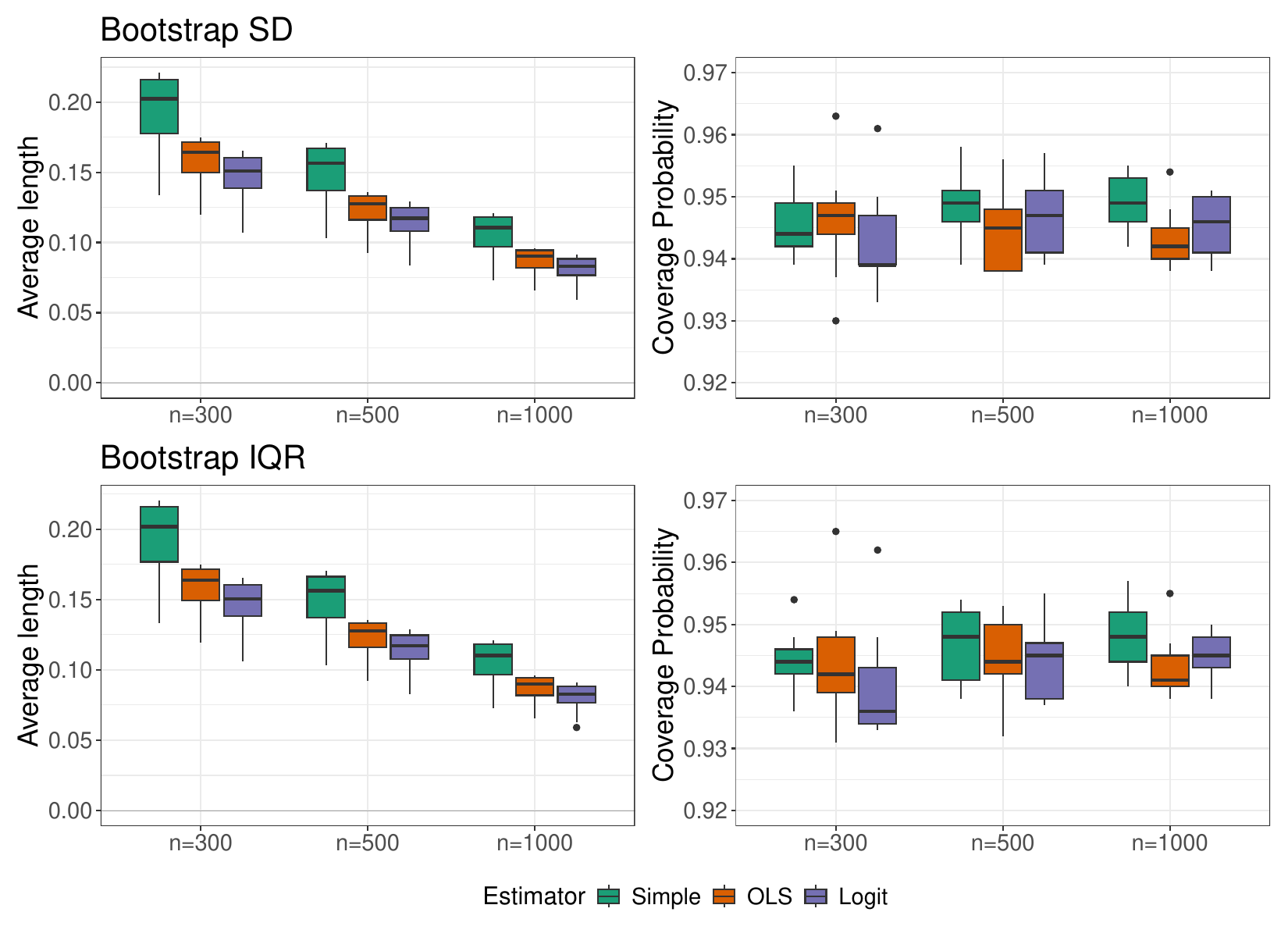}
\begin{minipage}{0.90\textwidth}
      \small 
      \textit{Note}:
       Average length and coverage probability of 95\% bootstrap confidence intervals (500 bootstrap repetitions, 1,000 simulations). Each boxplot shows the distribution across locations $y$ for a given sample size. The first row uses bootstrap standard deviations (SD) for standard errors; the second row uses bootstrap interquartile ranges (IQR) rescaled  by the normal distribution.
      \end{minipage}    
\label{fig:dgp-1-rho0.5-boot}
\end{center}
\vskip -0.2in
\end{figure} 

Figure \ref{fig:dgp-3-rho0.5-boot} displays the results for DGP3 (discrete outcome) with $\pi_1=0.5$. Again, the average lengths of the confidence intervals for both bootstrap methods are comparable to those obtained using analytic standard errors shown in Figure \ref{fig:dgp-3-rho0.5}. Coverage probabilities remain close to 0.95 in this case as well.

\begin{figure}[!h]
\vskip 0.2in
\begin{center}
\caption{Bootstrap confidence intervals}
(DGP3, discrete outcome, $\pi_{1}=0.5$)
\includegraphics[width=0.9\columnwidth]{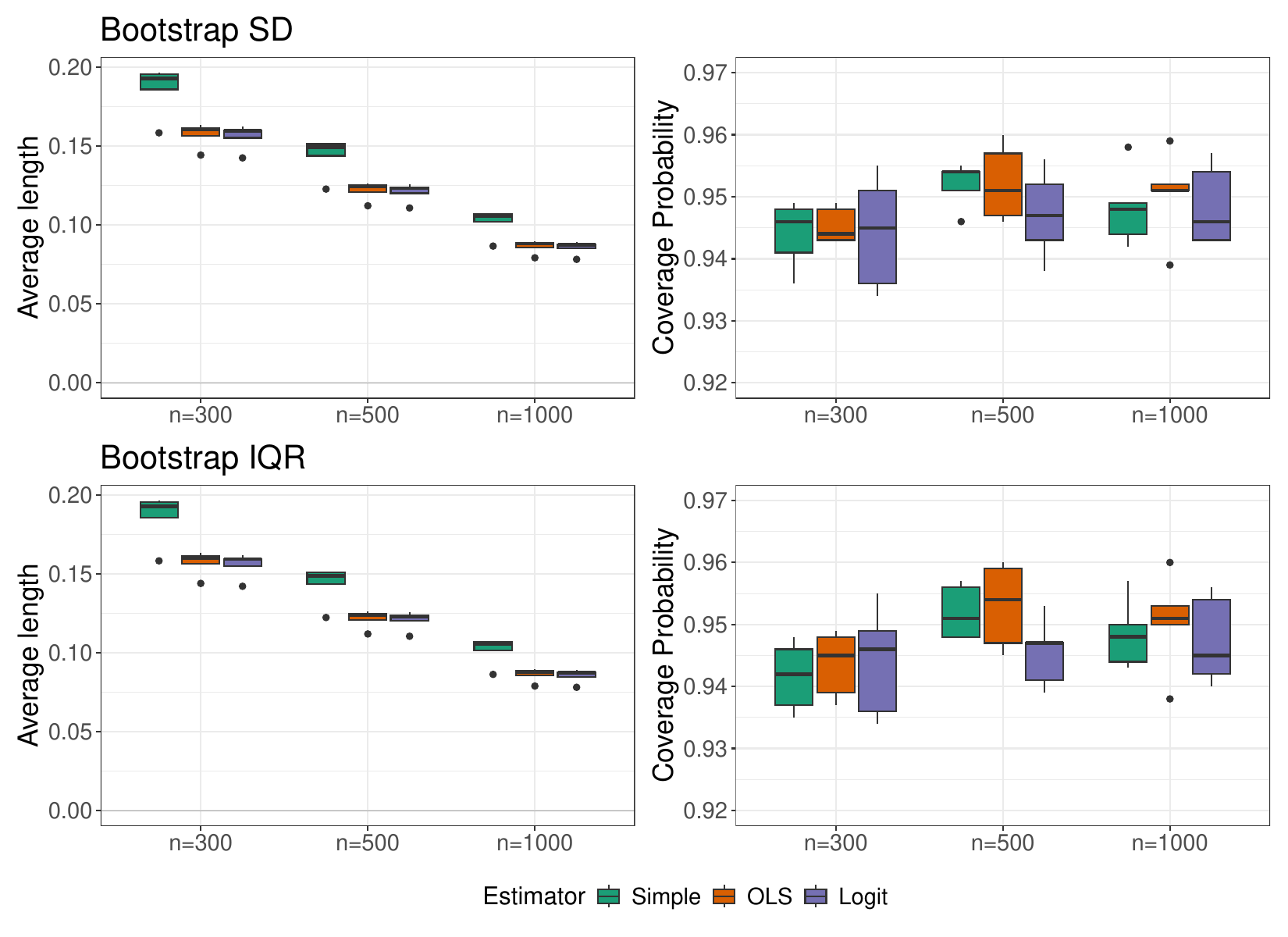}
\begin{minipage}{0.90\textwidth}
      \small 
      \textit{Note}:
       Average length and coverage probability of 95\% bootstrap confidence intervals (500 bootstrap repetitions, 1,000 simulations). Each boxplot shows the distribution across locations $y$ for a given sample size. The first row uses bootstrap standard deviations (SD) for standard errors; the second row uses bootstrap interquartile ranges (IQR) rescaled  by the normal distribution.
      \end{minipage}    
\label{fig:dgp-3-rho0.5-boot}
\end{center}
\vskip -0.2in
\end{figure} 

\clearpage
\subsection{Covariate transformations} \label{app:cov_transformation}
In the simulation study, we examine linear regression and logistic regression models without covariate transformations, i.e., with $T(X) =X$. In this section, we explore the impact of polynomial transformations on the results. Figures \ref{fig:dgp-1-rho0.5-poly2} and \ref{fig:dgp-1-rho0.5-poly3} present the simulation outcomes for DGP1 under $\pi=0.5$, where regressors are transformed using polynomials of degree 2 and 3, respectively. 

\begin{figure}[!h]
\vskip 0.2in
\begin{center}
\caption{Performance metrics of simple and regression-adjusted DTE estimators: polynomial covariate transformation (degree 2)}
(DGP1, continuous outcome, $\pi_{1}=0.5$)
\includegraphics[width=0.9\columnwidth]{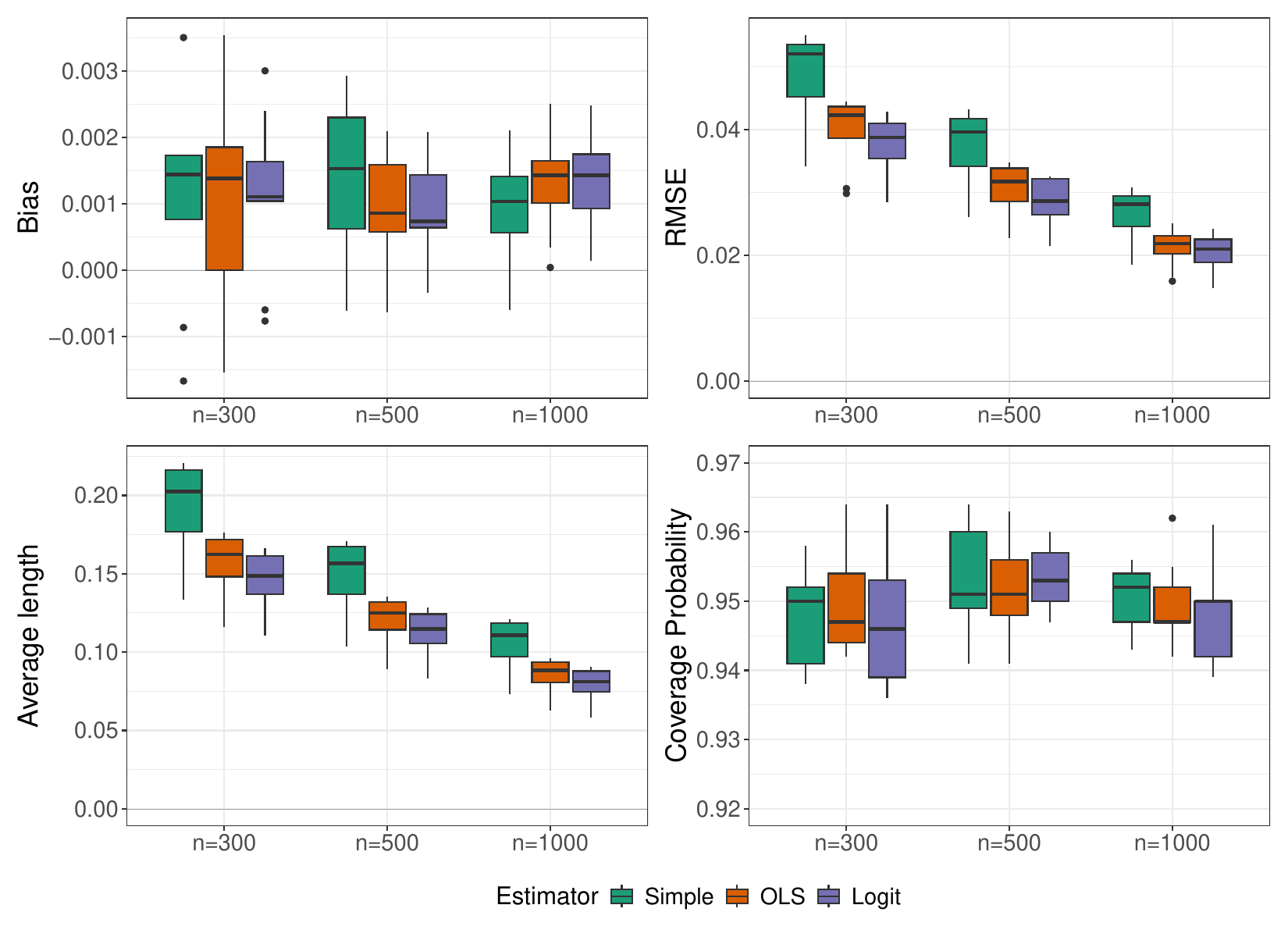}
\begin{minipage}{0.90\textwidth}
      \small 
      \textit{Note}:
       Bias, RMSE, 95\% bootstrap CI length and coverage probability calculated over 1,000 simulations. Standard errors are calculated as bootstrap standard deviation with 500 repetitions. Each boxplot represents the distribution across locations $y$ for a specific sample size. Regression adjustment is based on linear regression and logit model with covariates transformed with polynomial of degree 2. 
      \end{minipage}    
\label{fig:dgp-1-rho0.5-poly2}
\end{center}
\vskip -0.2in
\end{figure} 

For both linear regression and logit model, incorporating polynomial transformations produces results that closely resemble those obtained without transformations. Specifically, for linear adjustment, a polynomial of degree 2 achieves a 12–22\% reduction in RMSE, while a polynomial of degree 3 yields a 9–24\% reduction across all sample sizes. As for logit model, a polynomial of degree 2 achieves a 11–28\% reduction in RMSE, while a polynomial of degree 3 yields a 10-29\% reduction across all sample sizes.

\begin{figure}[!h]
\vskip 0.2in
\begin{center}
\caption{Performance metrics of simple and regression-adjusted DTE estimators: polynomial covariate transformation (degree 3)}
(DGP1, continuous outcome, $\pi_{1}=0.5$)
\includegraphics[width=0.9\columnwidth]{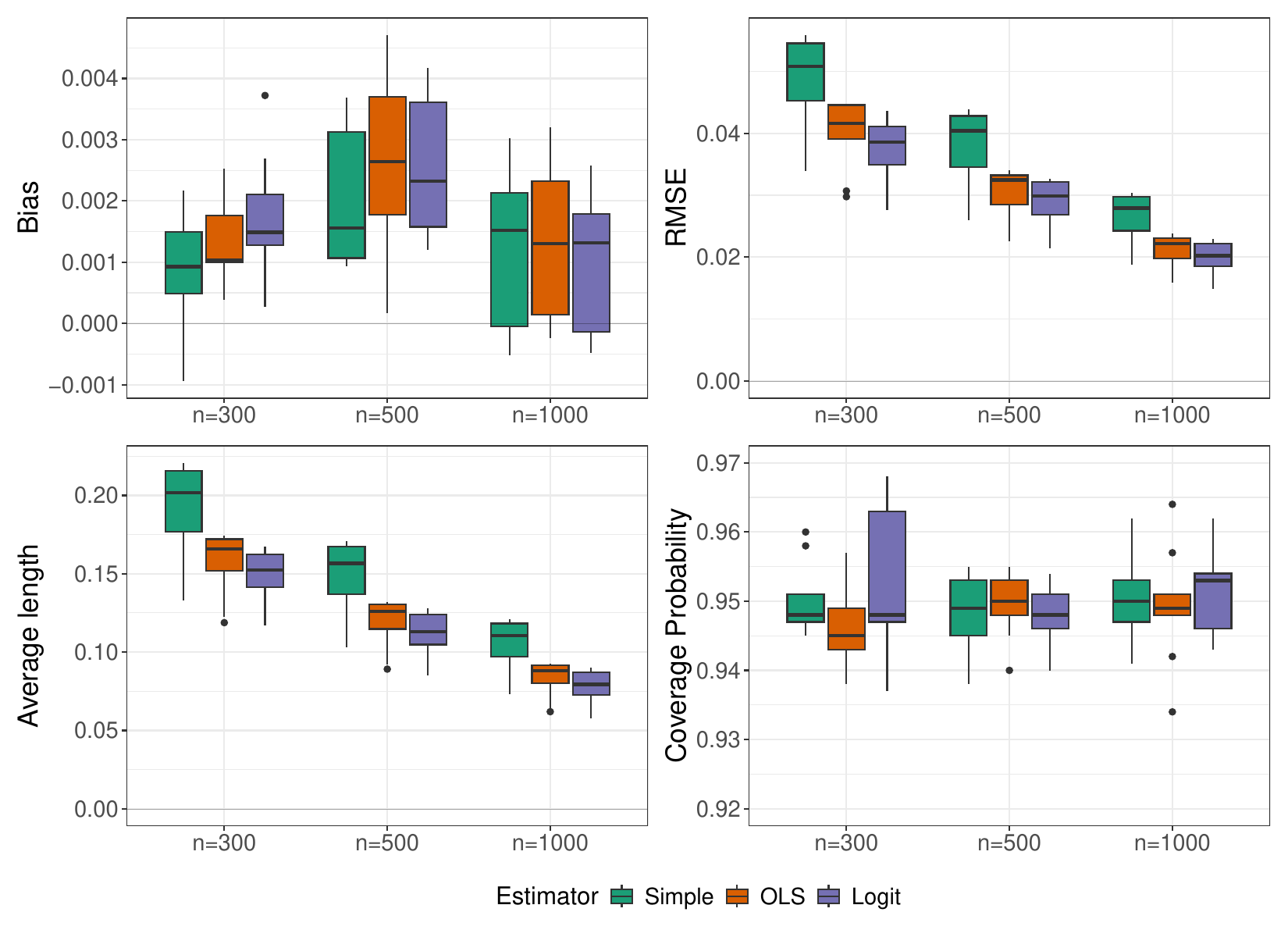}
\begin{minipage}{0.90\textwidth}
      \small 
      \textit{Note}:
      Bias, RMSE, 95\% bootstrap CI length and coverage probability calculated over 1,000 simulations. Standard errors are calculated as bootstrap standard deviation with 500 repetitions. Each boxplot represents the distribution across locations $y$ for a specific sample size. Regression adjustment is based on linear regression and logit model with covariates transformed with polynomial of degree 3. 
      \end{minipage}    
\label{fig:dgp-1-rho0.5-poly3}
\end{center}
\vskip -0.2in
\end{figure}


\end{document}